\newtheorem{remark}{Remark}
\newtheorem{example}{Example}
\newtheorem{proposition}{Proposition}
\newtheorem{theorem}{Theorem}
\newtheorem{lemma}{Lemma}
\tikzset{>=stealth, shorten >=1pt}
\tikzset{every edge/.style = {thick, ->, draw}}
\tikzset{every loop/.style = {thick, ->, draw}}
\tikzset{  
  node distance=4em,
  >=stealth',
  shadow/.style		= {opacity=.25, black, shadow xshift=0.08, shadow yshift=-0.08},
  plainnode/.style 	= {draw, ultra thick, fill=gray!10},
  pl0/.style	       = {circle, minimum size=6mm, inner sep=0mm,plainnode, drop shadow = {shadow}},
  ind/.style         = {circle,draw,fill=black},
  nind/.style        = {circle,draw},
  wt/.style          = {text = white, yshift = -0.1em,scale=0.9}
}
\newcommand{\nats}{\mathbb{N}}
\renewcommand{\epsilon}{\varepsilon}
\renewcommand{\phi}{\varphi}
\newcommand{\size}[1]{|#1|}
\newcommand{\pow}[1]{2^{#1}}
\newcommand{\cceq}{\mathop{::=}}
\newcommand{\set}[1]{\{#1\}}
\newcommand{\F}{\mathop{\mathbf{F}\vphantom{a}}\nolimits}
\newcommand{\G}{\mathop{\mathbf{G}\vphantom{a}}\nolimits}
\DeclareMathOperator{\U}{\mathbf{U}}
\newcommand{\X}{\mathop{\mathbf{X}\vphantom{a}}\nolimits}
\newcommand{\ltl}{{LTL}\xspace}
\newcommand{\ctl}{{CTL}\xspace}
\newcommand{\forallctl}{{$\forall$CTL}\xspace}
\newcommand{\ctlstar}{{CTL$^*$}\xspace}
\newcommand{\hyltl}{{HyperLTL}\xspace}
\newcommand{\hyctlstar}{{HyperCTL$^*$}\xspace}
\newcommand{\suffix}[2]{#1[#2,\infty)}
\newcommand{\var}{\mathcal{V}}
\newcommand{\ap}[0]{\mathrm{AP}}
\newcommand{\threeexp}{\textsc{3ExpTime}\xspace}
\newcommand{\tower}{\textsc{Tower}\xspace}
\newcommand{\myquot}[1]{``#1''}
\newcommand{\flip}[1]{\overline{#1}}
\newcommand{\tsys}{\mathfrak{T}}
\newcommand{\traces}{\mathrm{Tr}}
\newcommand{\aut}{\mathcal{A}}
\newcommand{\autp}{\mathcal{P}}
\newcommand{\initmark}{I}
\newcommand{\col}{\Omega}
\newcommand{\combine}[1]{\mathrm{mrg}(#1)}
\newcommand{\ext}{\mathit{ext}}
\newcommand{\vectordots}[2]{
\begin{pmatrix}
  #1 \\
    \vspace{-.5cm}\\
  \vdots \\
    \vspace{-.5cm}\\
  #2\\
\end{pmatrix}}
\newcommand{\vectordotswithtwoargsatstart}[3]{
\begin{pmatrix}
  #1 \\
  #2\\
  \vdots \\
  #3\\
\end{pmatrix}}
\newcommand{\vectoroflengthtwo}[2]{\begin{pmatrix}
  #1 \\
  #2\\
\end{pmatrix}}
\newcommand{\vectoroflengththree}[3]{\begin{pmatrix}
  #1 \\
  #2\\
  #3 \\
\end{pmatrix}}
\renewcommand{\vectordots}[2]{( #1 , \ldots , #2 )}
\renewcommand{\vectordotswithtwoargsatstart}[3]{( #1 , #2 , \ldots , #3 )}
\renewcommand{\vectoroflengthtwo}[2]{( #1 , #2 )}
\renewcommand{\vectoroflengththree}[3]{( #1 , #2 , #3 )}
\newcommand{\game}{\mathcal{G}}
\newcommand{\hasrun}[5]{#1\colon #2 \xrightarrow{#3} #5}
\newcommand{\hasrunacc}[5]{#1\colon #2 \xRightarrow{#3} #5}
\newcommand{\tr}[1]{
\def\a{t}\def\b{#1}
\ifx\a\b{\bm\tilde{t}}\else\bm\tilde{#1}\fi
}
\newcommand{\bddft}{bounded-delay transducer\xspace}
\newcommand{\bddfts}{bounded-delay transducers\xspace}
\newcommand{\transd}{\mathcal{T}}
\newcommand{\transdfunc}[1]{f_{#1}}
\newcommand{\tm}{\mathcal{M}}
\newcommand{\dom}[1]{\mathrm{dom}(#1)}
\newcommand{\prefs}[1]{\mathrm{Prfs}(#1)}
\newcommand{\mchyper}{\textsc{MCHyper}\xspace}
\newcommand{\hypervis}{\textsc{HyperVis\xspace}}
\newcommand{\SigmaI}{\Sigma_I}
\newcommand{\SigmaO}{\Sigma_O}
\newcommand{\hist}{\mathrm{Hist}}
\newcommand{\upd}{\mathrm{upd}}
\newcommand{\nextmove}{\mathrm{nxt}}
\newcommand{\blocks}{\mathcal{B}}
\newcommand{\block}{b}
\newcommand{\blockseq}{\overline{b}}
\newcommand{\extend}{\mathrm{ext}}
\newcommand{\free}{\mathrm{shft}}
\newcommand{\sink}{s_\bot}
\newcommand{\restrict}[2]{#1\!\upharpoonright\!#2}
\newcommand{\lag}{\Delta}
\newcommand{\moore}{\mathcal{S}}
\newcommand{\owner}{o}
\newcommand{\equivnew}[1]{\equiv_{#1}}
\newcommand{\bigo}{\mathcal{O}}
\newcommand{\ind}{\mathit{idx}}
\newcommand{\autsize}{\mathit{sz}}
\title{Tracy, Traces, and Transducers:\\ Computable Counterexamples and Explanations\\ for HyperLTL Model-Checking\footnote{Supported by DIREC - Digital Research Centre Denmark.}}
\author{Sarah Winter (IRIF \& Université Paris Cité, Paris, France)\\ Martin Zimmermann (Aalborg University, Aalborg, Denmark)}
\date{}
\begin{document}

\maketitle

\begin{abstract}
    HyperLTL model-checking enables the automated verification of information-flow properties for se\-cu\-ri\-ty-critical systems. 
However, it only provides a binary answer. 
Here, we consider the problem of computing counterexamples and explanations for HyperLTL model-checking, thereby considerably increasing its usefulness.

Based on the maxim ``counterexamples/explanations are Skolem functions for the existentially quantified trace variables'', we consider (Turing machine) computable Skolem functions. 
As not every finite transition system and formula have computable Skolem functions witnessing that the system satisfies the formula, we consider the problem of deciding whether such functions exist.
Our main result shows that this problem is decidable by reducing it to solving multiplayer games with hierarchical imperfect information. 
Furthermore, our algorithm also computes transducers implementing such functions, if they exist.
\end{abstract}

\section{Introduction}
\label{sec:intro}

\subparagraph*{Prologue.}
Tracy sits in her office and needs to print her latest travel reimbursement claim. After hitting the print button, she walks to the printer room only to find out that the document has not been printed. So, she walks back to her office, hits the print button again, walks to the printer and is slightly surprised to find her document. Sometimes Tracy wonders whether the print system is nondeterministic. If only there was a way to find out.
\medskip

Information-flow properties, which are crucial in the specification of security-critical systems,
require the simultaneous reasoning about multiple executions of a system.
However, most classical specification languages like \ltl and \ctlstar refer to a single execution trace at a time. 
Clarkson and Schneider~\cite{ClarksonS10} coined the term \emph{hyperproperties} for properties that require the reasoning about multiple traces.
Just like ordinary trace and branching-time properties, hyperproperties can be specified using temporal logics, e.g., \hyltl and \hyctlstar~\cite{ClarksonFKMRS14}, expressive, but intuitive specification languages that are able to express typical information-flow properties such as noninterference, noninference, declassification, and input determinism.
Due to their practical relevance and theoretical elegance, hyperproperties and their specification languages have received considerable attention during the last decade.

\hyltl is obtained by extending \ltl~\cite{Pnueli77}, the most influential specification language for linear-time properties, by trace quantifiers to refer to multiple executions of a system. 
Hence, a \hyltl formula is indeed evaluated over a set of traces, which forms the universe for the quantifiers.
For example, the \hyltl~formula
$
\phi_{\mathrm{id}}=\forall \pi, \pi'.\ \G( i_\pi \leftrightarrow i_{\pi'}) \rightarrow \G (o_\pi \leftrightarrow o_{\pi'})
$
expresses input determinism, i.e., every pair of traces that always has the same input (represented by the proposition~$i$) also always has the same output (represented by the proposition~$o$).
Having learned about \hyltl, Tracy wonders whether she can formally prove that the print system violates  $\phi_{\mathrm{id}}$.

In this work, we focus on the model-checking problem for \hyltl, which intuitively asks whether a given (finite model of a) system satisfies a given \hyltl specification.
This problem is decidable, albeit \tower-complete~\cite{Rabe16Diss,MZ20}.

But the model-checking problem as described above is \myquot{just} a decision problem, i.e., the user only learns whether the system satisfies the specification or not, but not the \emph{reason} it does or does not.
It has been argued that this binary answer is in general not useful~\cite{K}: Most real-life systems are too complex to be modelled faithfully by a finite transition system. 
Hence, one always checks an abstraction, not the actual system.
Then, a positive answer to the model-checking problem does not show that the actual system is correct, bugs in it might have been abstracted away when constructing a finite transition system modelling it.
The actual killer application of model-checking is the automated generation of counterexamples in case the specification is not satisfied by the abstraction.
Given a counterexample in the abstraction one can then check whether this (erroneous) behaviour also exists in the actual system, or whether it was introduced during the abstraction~\cite{cegar}.
In the latter case, the abstraction has to be refined and checked again.
But if the erroneous behaviour can be found in the actual system, then this bug can be fixed in the actual system. 

But what is a counterexample in \hyltl model-checking?
For the formula~$\phi_{\mathrm{id}}$ expressing input determinism this is straightforward: if a transition system does not satisfy the formula, then it has two traces that coincide on their input, but not on their output. 
However, the situation becomes more interesting in the presence of existentially quantified variables and quantifier alternations.
Consider, for example, a formula of the form~$\phi = \exists \pi \forall \pi'.\ \psi$ with quantifier-free $\psi$ and let $\tsys$ be a transition system with set~$\traces(\tsys)$ of traces.
If $\tsys \not\models \phi$, then for every choice of $t \in \traces(\tsys)$ there is a $t' \in \traces(\tsys)$ such that the variable assignment~$\set{\pi\mapsto t, \pi'\mapsto t'} $ does not satisfy $\psi$. 
Thus, a counterexample is described by a Skolem function~$f\colon \traces(\tsys) \rightarrow \traces(\tsys)$ for the existentially quantified variable~$\pi'$ in the negation~$\forall \pi \exists\pi'.\ \neg \psi$ of $\phi$.
It gives, for every choice~$t$ for the existentially quantified~$\pi$ in $\phi$ a trace~$f(t)$ for the universally quantified~$\pi'$ in $\phi$ such that $\set{\pi\mapsto t, \pi'\mapsto f(t)} \models \neg \psi$, i.e., $\set{\pi\mapsto t, \pi'\mapsto f(t)} \not\models \psi$, thereby explaining for every choice of $t$ why it is not a good one.
The maxim \myquot{counterexamples are Skolem functions for existentially quantified variables in the negation of the specification} is true for arbitrary formulas. 
But before we explore this approach further, let us first consider a second application.

Explainability, the need to explain to, e.g., users, customers, and regulators, what a system does, is an aspect of system design that gains more and more significance.
This is in particular true when it comes to systems designed by algorithms, e.g., machine-learning or synthesis.
For any nontrivial system of this kind, it is impossible for humans to develop an explanation of their behaviour or a witness for their correctness.
This is a major obstacle preventing the wide-spread use of (unexplained) machine-generated software in safety-critical applications~\cite{xai}.
Also here, \hyltl model-checking can be useful: Assuming the system is supposed to satisfy a \hyltl specification and indeed does so, then Skolem functions \myquot{explain} why the specification is satisfied.

\subparagraph*{Our Contributions.} 
In this work, we are interested in computing counterexamples/ex\-pla\-na\-tions for \hyltl, which, as argued above, boils down to computing Skolem functions for \hyltl. 
Before we explain our contributions, let us remark that counterexamples are just explanations for the negation of the specification, as we have seen above. 
Hence, in the following we will focus on explanations, as this setting spares us from dealing with a negation.
Also, let us remark that for every transition system~$\tsys$ and every \hyltl formula~$\phi$, we either have $\tsys \models \varphi$ or $\tsys \models \neg \varphi$.
Hence, our framework will either explain why $\tsys$ satisfies $\phi$ or explain why $\tsys$ satisfies $\neg \phi$, i.e., explain why $\tsys$ does not satisfy $\phi$.

In general, we are given a transition system~$\tsys$ and a \hyltl formula~$\phi$ such that $\tsys \models \phi$, and we want to compute Skolem functions for the existentially quantified variables in $\phi$.
Note that the actual explanation-phase employing the Skolem functions is an interactive process between the user (i.e., Tracy) and these functions: Tracy has to specify choices for the universally quantified variables, which are then fed into the Skolem functions, yielding choices for the existentially quantified variables such that the combination of all of these traces satisfies the quantifier-free part of the specification.

To \emph{apply} Skolem functions in that manner, they need to be finitely representable. 
To be as general as possible, we consider here functions that are computable by Turing machines (in a very natural sense). 
But even for such a general model, $\tsys \models \varphi$ may not have a computable explanation. The underlying reason is that such Turing machines can only compute continuous functions:
Intuitively, if two inputs coincide on a \myquot{long} prefix, then the corresponding outputs also coincide on a \myquot{long} prefix.
However, it is straightforward to construct a pair~$\tsys \models \varphi$ that does not have continuous Skolem functions (see Theorem~\ref{thm_nocomp}).
Hence, our main focus is on the following question: given $\tsys$ and $\varphi$ with $\tsys \models \varphi$, is $\tsys \models \varphi$ witnessed by computable Skolem functions?
Our main result shows that this problem is decidable. 
To prove it, we combine techniques developed in the theory of uniformization~\cite{FWjournal}, delay games~\cite{KleinZimmermann}, and multiplayer games with hierarchical imperfect information~\cite{bbb} to express the existence of computable Skolem functions by a multi-player game with hierarchical imperfect information. 
Intuitively, there is one player for each existentially quantified variable and they form a coalition against a player corresponding to the universally quantified variables. Hierarchical imperfect information then captures the structure of the quantifier prefix, e.g., the Skolem function for $\pi_1$ in a formula of the form~$\forall \pi_0 \exists \pi_1 \forall \pi_2 \exists \pi_3.\ \psi $ depends only on $\pi_0$ while the one for $\pi_3$ depends on $\pi_0$ and $\pi_2$ (as usual one can assume that Skolem functions only depend on universally quantified variables).
Furthermore, delay games are a general approach to deciding the existence of continuous functions in synthesis and uniformization~\cite{HL72,HKT,KleinZimmermann,FWjournal}.

As a byproduct of our game-theoretic characterization, we show that if $\tsys \models \varphi$ has (Turing machine) computable Skolem functions, then it also has ones that are computed by word-to-word (one-way) transducers with bounded delay between input and output, a much more modest machine model.
In fact, our algorithm computes transducers implementing Skolem functions whenever computable Skolem functions exist.
This allows for the effective computation and simulation of computable Skolem functions as described above.

\section[Preliminaries]{Preliminaries\protect\footnotemark}\footnotetext{We like the twentieth letter of the alphabet. In fact, we like it so much that we use $t$ to denote traces, $T$ to denote sets of traces, $\tsys$ to denote transition systems, and $\transd$ to denote transducers. We hope this footnote will help the reader keep track of them.}
\label{sec:prels}

We denote the set of nonnegative integers by $\nats$. 
The domain of a partial function~$f \colon A \rightarrow B$ is denoted by $\dom{f} = \set{a \in A \mid f(a) \text{ is defined}}$.
More generally, we denote the domain~$\set{a \in A \mid (a,b)\in R \text{ for some } b\in B}$ of a relation~$R \subseteq A \times B$ by $\dom{R}$.

\subsection{Languages, Transition Systems, and Automata}
An alphabet is a nonempty finite set. 
The sets of finite and infinite words over an alphabet~$\Sigma$ are denoted by $\Sigma^*$ and $\Sigma^\omega$, respectively. The length of a finite word~$w$ is denoted by $\size{w}$. 
Given $n$ infinite words~$w_0,\ldots, w_{n-1}$, let their merge (also known as zip), which is an infinite word over $\Sigma^n$, be defined as
\[
    \combine{w_0, \ldots, w_{n-1}}  =
    \vectordots{w_0(0)}{w_{n-1}(0)}\vectordots{w_0(1)}{w_{n-1}(1)}\vectordots{w_0(2)}{w_{n-1}(2)}\cdots .
\]
We define $\combine{w_0, \ldots, w_{n-1}}$ for finite words~$w_0, \ldots, w_n$ of the same length analogously.

The set of prefixes of an infinite word~$w = w(0) w(1) w(2) \cdots \in \Sigma^\omega$ is $\prefs{w} = \set{w(0) \cdots w(i-1) \mid i\ge 0}$, which is lifted to languages~$L \subseteq \Sigma^\omega$ via $\prefs{L} = \bigcup_{w \in L} \prefs{w}$.
A language~$L\subseteq \Sigma^\omega$ is closed if $\set{w \in \Sigma^\omega \mid \prefs{w} \subseteq \prefs{L}} \subseteq L$.

Throughout this paper, we fix a finite set~$\ap$ of atomic propositions. 
A transition system~$\tsys = (V,E,v_\initmark, \lambda)$ consists of a finite set~$V$ of vertices, a set~$E \subseteq V \times V$ of (directed) edges, an initial vertex~$v_\initmark \in V$, and a labelling~$\lambda\colon V \rightarrow \pow{\ap}$ of the vertices by sets of atomic propositions.
We assume that every vertex has at least one outgoing edge.
A path~$\rho$ through~$\tsys$ is an infinite sequence~$\rho = v_0v_1v_2\cdots$ of vertices with  $v_0 = v_\initmark$ and $(v_n,v_{n+1})\in E$ for every $n \ge 0$.
The trace of $\rho$ is defined as $ \lambda(\rho ) = \lambda(v_0)\lambda(v_1)\lambda(v_2)\cdots \in (\pow{\ap})^\omega$.
The set of traces of $\tsys$ is $\traces(\tsys) = \set{\lambda(\rho) \mid \rho \text{ is a path of $\tsys$}}$.

\begin{remark}
\label{rem_closed}
The following facts follow directly from the definition of closed languages.
\begin{enumerate}
    \item\label{rem_closed_tsys} Let $\tsys$ be a transition system. Then, $\traces(\tsys)$ is closed.
    \item\label{rem_closed_product} If $L_0, \ldots,L_{n-1} \subseteq \Sigma^\omega$ are closed, then so is \mbox{$
    \set{\combine{w_0, \ldots, w_{n-1}} \mid w_i \in L_i \text{ for } 0\le i< n}
    .$}
\end{enumerate}
\end{remark}

A Büchi automaton~$\aut = (Q, \Sigma, q_\initmark, \delta, F)$ consists of a finite set~$Q$ of states containing the initial state~$q_\initmark \in Q$ and the subset~$F \subseteq Q$ of accepting states, an alphabet~$\Sigma$, and a transition function~$\delta \colon Q \times \Sigma \rightarrow \pow{Q}$.
Let $w = w(0) w(1) w(2) \cdots \in \Sigma^\omega$.
A run of $\aut$ on $w$ is a sequence~$q_0 q_1 q_2 \cdots $ with $q_0 = q_\initmark$ and $q_{n+1} \in \delta(q_n, w(n))$ for all $n\ge 0$.
A run~$q_0q_1q_2 \cdots$ is (Büchi) accepting if there are infinitely many~$n \in \nats$ with $q_n \in F$.
The language (Büchi) recognized by $\aut$, denoted by $L(\aut)$, is the set of infinite words over $\Sigma$ that have an accepting run of $\aut$ on $w$ that is accepting.

Büchi automata recognize exactly the $\omega$-regular languages, but require nondeterminism to do so.
In the following, it is sometimes prudent to work with deterministic $\omega$-automata.
A deterministic parity automaton~$\autp = (Q, \Sigma, q_\initmark, \delta, \col)$ consists of a finite set~$Q$ of states containing the initial state~$q_\initmark \in Q$, an alphabet~$\Sigma$, a transition function~$\delta \colon Q \times \Sigma \rightarrow Q$, and a coloring~$\col\colon Q \rightarrow \nats$ of the states by colors in $\nats$.
Let $w = w(0) w(1) w(2) \cdots \in \Sigma^\omega$.
Then, $\autp$ has a unique run~$q_0 q_1 q_2 \cdots $ on~$w$, defined as $q_0 = q_\initmark$ and $q_{n+1} = \delta(q_n, w(n))$ for all $n\ge 0$.
A run~$q_0q_1q_2 \cdots$ is (parity) accepting if the maximal color appearing infinitely often in the sequence~$\col(q_0)\col(q_1)\col(q_2)\cdots $ is even.
The language (parity) recognized by $\aut$, denoted by $L(\aut)$, is the set of infinite words over $\Sigma$ such that the run of $\aut$ on $w$ is accepting.

\subsection{\hyltl}
\label{subsec_hyperltl}

The formulas of \hyltl are given by the grammar
\begin{align*}
\phi  {}& \cceq {}  \exists \pi.\ \phi \mid \forall \pi.\ \phi \mid \psi \qquad\quad \psi   \cceq {}  a_\pi \mid \neg \psi \mid \psi \vee \psi \mid \X \psi \mid \psi \U \psi    
\end{align*}
where $a$ ranges over $\ap$ and where $\pi$ ranges over a fixed countable set~$\var$ of {(trace) variables}. Conjunction~($\wedge$), exclusive disjunction~($\oplus$), implication~($\rightarrow$), and equivalence~($\leftrightarrow$) are defined as usual, and the temporal operators \myquot{eventually}~($\F$) and \myquot{always}~($\G$) are derived as $\F\psi = \neg \psi\U \psi$ and $\G \psi = \neg \F \neg \psi$. A {sentence} is a  formula without free variables, which are defined as expected.

The semantics of \hyltl is defined with respect to a {trace assignment}, a partial mapping~$\Pi \colon \var \rightarrow (\pow{\ap})^\omega$. The assignment with empty domain is denoted by $\Pi_\emptyset$. Given a trace assignment~$\Pi$, a variable~$\pi$, and a trace~$t$ we denote by $\Pi[\pi \rightarrow t]$ the assignment that coincides with $\Pi$ everywhere but at $\pi$, which is mapped to $t$. 
Furthermore, $\suffix{\Pi}{j}$ denotes the trace assignment mapping every $\pi$ in $\Pi$'s domain to $\Pi(\pi)(j)\Pi(\pi)(j+1)\Pi(\pi)(j+2) \cdots $, the suffix of $\Pi(\pi)$ starting at position $j$.

For sets~$T$ of traces and trace assignments~$\Pi$ we define 
\begin{itemize}
	\item $(T, \Pi) \models a_\pi$ if $a \in \Pi(\pi)(0)$,
	\item $(T, \Pi) \models \neg \psi$ if $(T, \Pi) \not\models \psi$,
	\item $(T, \Pi) \models \psi_1 \vee \psi_2 $ if $(T, \Pi) \models \psi_1$ or $(T, \Pi) \models \psi_2$,
	\item $(T, \Pi) \models \X \psi$ if $(T,\suffix{\Pi}{1}) \models \psi$,
	\item $(T, \Pi) \models \psi_1 \U \psi_2$ if there is a $j \ge 0$ such that $(T,\suffix{\Pi}{j}) \models \psi_2$ and for all $0 \le j' < j$: $(T,\suffix{\Pi}{j'}) \models \psi_1$, 
	\item $(T, \Pi) \models \exists \pi.\ \phi$ if there exists a trace~$t \in T$ such that $(T,\Pi[\pi \rightarrow t]) \models \phi$, and 
	\item $(T, \Pi) \models \forall \pi.\ \phi$ if for all traces~$t \in T$: $(T,\Pi[\pi \rightarrow t]) \models \phi$. 
\end{itemize}
We say that $T$ {satisfies} a sentence~$\phi$ if $(T, \Pi_\emptyset) \models \phi$. In this case, we write $T \models \phi$ and say that $T$ is a {model} of $\phi$. 
A transition system~$\tsys$ satisfies $\phi$, written $\tsys \models \phi$, if $\traces(\tsys)\models \phi$. 
Although \hyltl sentences are required to be in prenex normal form, they are closed under Boolean combinations, which can be easily seen by transforming such a formula into an equivalent formula in prenex normal form. 
In particular, the negation~$\neg \phi$ of a sentence~$\varphi$ satisfies $T \models \neg \varphi$ iff  $T\not\models\varphi$.
Also, note that the statement~$(T,\Pi) \models \psi$ for quantifier-free formulas~$\psi$ is independent of $T$. Hence, we often just write $\Pi \models \psi$ for the sake of readability.

\begin{remark}
\label{remark_automataconstruction}
Let $\psi$ be a quantifier-free \hyltl formula (with $k$ free variables~$\pi_0, \ldots, \pi_{k-1}$) and let $\tsys$ be a transition system.
There is an (effectively constructible) Büchi automaton~$\aut_\psi^\tsys$ such that $L(\aut_\psi^\tsys)$ is equal to 
\[
 \{\combine{\Pi(\pi_0), \ldots, \Pi(\pi_{k-1})} \mid \Pi(\pi_i) \in \traces(\tsys) \text{ for $0 \le i < k-1$ and }  (\traces(\tsys),\Pi) \models \psi\}.
\]
It can be obtained by noting that $\psi$ is almost an \ltl formula as it is quantifier-free, but its atomic propositions are still labeled by trace variables, i.e., they are all of the form~$a_{\pi_i}$ for some $i \in \set{0, \ldots, k-1}$.
Let $\psi'$ be the (proper) \ltl formula obtained from $\psi$ by replacing each atomic proposition~$a_{\pi_j}$ by the atomic proposition~$(a,j)$, i.e., $\psi'$ is defined over the set~$\ap\times\set{0, \ldots, k-1}$ of atomic propositions.
For $\psi'$ there exists a Büchi automaton~$\aut_{\psi'}$ with $\size{\psi}\cdot 2^{\size{\psi}}$ many states recognizing the language
\[
L(\aut_{\psi'}) = \set{ t \in (\pow{\ap\times\set{0, \ldots, k-1}})^\omega \mid t \models \psi'}
\]
(see, e.g.,~\cite{BK08}).
By replacing each letter~$A \in \pow{\ap\times\set{0, \ldots, k-1}}$ on a transition of $\aut_{\psi'}$ by the letter
\[
\vectordots{\set{a \in \ap \mid (a,0) \in A}}{\set{a \in \ap \mid (a,k-1) \in A}} \in \left( \pow{\ap} \right)^k
\]
we obtain a Büchi automaton~$\aut_{\psi}$ recognizing 
\[\set{ \combine{\Pi(\pi_0), \ldots, \Pi(\pi_{k-1})} \mid \Pi(\pi_j) \in (\pow{\ap})^\omega \text{ for all $0 \le j < k$ and }   (\traces(\tsys),\Pi) \models \psi }.\]
Finally, by taking the product of $\aut_{\psi}$ with $k$ copies of $\tsys$ (where the $i$-th one restricts the $i$-th trace of $\combine{\Pi(\pi_0), \ldots, \Pi(\pi_{k-1})}$ to traces of $\tsys$), we obtain the desired Büchi automaton~$\aut_\psi^\tsys$ recognizing
\[
\{\combine{\Pi(\pi_0), \ldots, \Pi(\pi_{k-1})} \mid \Pi(\pi_i) \in \traces(\tsys) \text{ for $0 \le i < k-1$ and }  (\traces(\tsys),\Pi) \models \psi\}.
\]
Finally, it has $\size{\psi}\cdot 2^{\size{\psi}}\cdot \size{\tsys}^k$ many states.
\end{remark}

\subsection{Skolem Functions for \hyltl}
Let $\varphi = Q_0 \pi_0 \cdots Q_{k-1} \pi_{k-1}.\ \psi $ be a \hyltl sentence such that $\psi$ is quantifier-free and let $T$ be a set of traces. 
Moreover, let $i \in \set{0,1,\ldots, k-1}$ be such that $Q_i = \exists$ and let $U_i = \set{j < i \mid Q_{j} = \forall}$ be the indices of the universal quantifiers preceding $Q_i$.
Furthermore, let $f_i \colon T^{\size{U_i}} \rightarrow T$ for each such $i$ 
(note that $f_i$ is a constant, if $U_i$ is empty).
We say that a variable assignment~$\Pi$ with $\dom{\Pi} \supseteq \set{\pi_0, \pi_1, \ldots, \pi_{k-1}}$ is consistent with the $f_i$ if
$\Pi(\pi_i) \in T$ for all $i$ with $Q_i = \forall$ and $\Pi(\pi_i) = f_i(\Pi(\pi_{i_0}), \Pi(\pi_{i_1}), \ldots, \Pi(\pi_{i_{{\size{U_i}-1}}}))$ for all $i$ with $Q_i = \exists$, where $U_i = \set{i_0 < i_1 < \cdots < i_{\size{U_i}-1}}$.
If $\Pi \models \psi$ for each $\Pi$ that is consistent with the $f_i$, then we say that the $f_i$ are Skolem functions witnessing $T \models \varphi$.

\begin{remark}
$T \models \varphi$ iff there are Skolem functions for the existentially quantified variables of $\varphi$ that witness $T \models \varphi$.
\end{remark}

Note that only traces for universal variables are inputs for Skolem functions, but not those for existentially quantified variables. 
As usual, this is not a restriction, as the inputs of a Skolem function for an existentially quantified variable~$\pi_i$ is a superset of the inputs of a Skolem function for another existentially quantified variable~$\pi_{j}$ with $j < i$.

\begin{example}
\label{example_reconstruct}
Let $\varphi = \forall \pi \exists \pi_1 \exists \pi_2.\ \G(a_\pi \leftrightarrow (a_{\pi_1} \oplus a_{\pi_2}))$.
We have $(\pow{\set{a}})^\omega \models \varphi$.
Now, for every function~$f_1 \colon (\pow{\set{a}})^\omega \rightarrow (\pow{\set{a}})^\omega$, there is a function~$f_2 \colon (\pow{\set{a}})^\omega \rightarrow (\pow{\set{a}})^\omega$ such that $f_1,f_2$ are Skolem functions witnessing $(\pow{\set{a}})^\omega \models \varphi$, i.e., we need to define $f_2$ such that
$(f_2(t))(n) = (f_1(t))(n)$
for all $n \in\nats$ such that $t(n) = \emptyset$ and
$(f_2(t))(n) = \flip{(f_1(t))(n)} $
for all $n \in\nats$ such that $t(n) = \set{a}$, where $\flip{\set{a}} = \emptyset$ and $\flip{\emptyset} = \set{a}$.
Hence, $f_2$ depends on $f_1$, but the value of $f_1(t)$ (for the existentially quantified~$\pi_1$) does not need to be an input to $f_2$, it can be determined from the input~$t$ for the universally quantified $\pi$. 
This is not surprising, but needs to be taken into account in our constructions.
\end{example}

\section{Problem Statement}
\label{sec_problemstatement}

Our goal is to construct (Turing machine) computable Skolem functions that serve as algorithmic explanations for the satisfaction of a \hyltl property.
At first glance, this is a very ambitious goal, as it requires working with Turing machines processing infinite inputs and producing infinite outputs.
To tackle this issue, we re-formulate our problem as a (rather non-standard, more synthesis-like) uniformization problem.

Here, we consider the variant where one is given a relation~$R \subseteq A \times B$ and the goal is to determine whether there is a function uniformizing $R$ (i.e., a partial function~$f\colon A \rightarrow B$ such that $\dom{f} = \dom{R}$ and $\set{(a,f(a)) \mid a \in \dom{R}} \subseteq R$) that is computed by a machine from some fixed class of machines.
For the case where $A$ is the Cartesian product of the set of traces of a transition system and $B$ is the set of traces of the same transition system, we have captured the problem of computing a Skolem function as a uniformization problem.

Thus, for a sentence with quantifier prefix~$\forall^*\exists^*$ we can obtain computable Skolem functions by interpreting the problem as a uniformization problem.
However, for more complex quantifier prefixes, this is no longer straightforward, as the dependencies between the variables have to be considered, e.g., the Skolem function of an existentially quantified variable only has inputs corresponding to outermore universally quantified variables, but may also depend on outermore existentially quantified variables, i.e., outputs are also (implicit) inputs for other functions.

Another issue is that Turing machines are a very expressive model of computation. 
Filiot and Winter~\cite{FWjournal} studied synthesis of computable functions from rational specifications (e.g., specifications recognized by a Büchi automaton): they proved that uniformization by Turing machines coincides with uniformization by transducers with bounded delay (if the domain of the specification is closed), a much nicer class of machines computing functions from infinite words to infinite words.
Crucially, the functions computed by such transducers are also continuous in the Cantor topology over infinite words (we refer to \cite{FWjournal} for definitions and details). 
In the setting of Skolem functions for \hyltl model-checking for $\forall^*\exists^*$-sentences, that means that if two inputs agree on a \myquot{long} prefix, then the corresponding outputs also agree on a \myquot{long} prefix. 
Continuity is a desirable property when using Skolem functions on-the-fly: 
If Tracy has fixed a long prefix of the inputs for the Skolem functions, then future inputs do not change the output prefixes already produced by the Skolem functions for the fixed prefix.

However, we will show that there is a \hyltl sentence that does not have continuous Skolem functions (and thus also no computable ones).
Hence, it is natural to ask if it is decidable whether a given pair~$(\tsys,\phi)$ has a computable explanation. 
We prove that this is indeed the case, even for sentences with arbitrary quantifier prefixes.

\subsection{Uniformization by Computable Functions}
\label{subsec_uniformization}

In the following, for languages over the alphabet~$\Sigma \times \Gamma$ recognized by Büchi automata, i.e., $L \subseteq (\Sigma \times \Gamma)^\omega$ we speak about its induced relation
$R_L = \set{ (x, y) \in \Sigma^\omega \times \Gamma^\omega \mid \combine{x,y} \in L }.$
For the sake of readability, we often do not distinguish between (automata-recognizable) languages $L \subseteq (\Sigma \times \Gamma)^\omega$ and induced relations~$R_L\subseteq \Sigma^\omega \times \Gamma^\omega$.
A function~$f \colon \Sigma^\omega \rightarrow \Gamma^\omega$ uniformizes a relation~$R \subseteq \Sigma^\omega \times \Gamma^\omega$ if the domain of $f$ is equal to the domain of $R$ and the graph
$
\set{(w,f(w)) \mid w\in \dom{f}}
$
of $f$ is a subset of $R$.
In our setting, the uniformization problem asks whether, for a given relation, there is a computable function that uniformizes~it.

To define the computability of a function from~$\Sigma^\omega $ to $ \Gamma^\omega$, we consider deterministic three-tape Turing machines~$\tm$ with the following setup (following~\cite{FWjournal}): the first tape is a read-only, one-way tape and contains the input in $\Sigma^\omega$, the second one is a two-way working tape, and the third one is a write-only, one-way tape on which the output in $\Gamma^\omega$ is generated. 
Formally, we say that $\tm$ computes the partial function~$f \colon \Sigma^\omega \rightarrow \Gamma^\omega$ if, when started with input~$w \in \dom{f}$ on the first tape, $\tm$ produces (in the limit) the output~$f(w)$ on the third tape.
Note that we do not require the Turing machine to check whether its input is in the domain of $f$. We just require it to compute the correct output for those inputs in the domain, it may behave arbitrarily on inputs outside of the domain of $f$.
This is done so that the uniformization function only has to capture the complexity of transforming possible inputs into outputs, but does not have to capture the complexity of checking whether an input is in the domain. 
In our setting, this can be taken care of by deterministic $\omega$-automata that can be effectively computed.

We say that such an $\tm$ has bounded delay, if there is a $d \in\nats$ such that to compute the first $n$ letters of the output only $n+d$ letters of the input are read (i.e., the other cells of the input tape are not visited before $n$ output letters have been generated). 

\begin{lemma}
\label{lem_tmboundeddelay}
Let $f$ be computable and let $\dom{f}$ be closed. Then $f$ is computable by a Turing machine with bounded delay.
\end{lemma}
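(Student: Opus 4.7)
The plan is to combine the continuity of any Turing-machine computable function on infinite words with the closedness of $\dom{f}$ via a K\"onig-style compactness argument, extract a uniform delay bound, and then convert this bound into a bounded-delay Turing machine that still computes $f$.

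I first recall that any computable $f$ is continuous in the Cantor topology: for every $w \in \dom{f}$ and every $n \in \nats$ there is a prefix length $k$ such that any $w' \in \dom{f}$ sharing the first $k$ letters with $w$ satisfies $f(w')(0) \cdots f(w')(n-1) = f(w)(0) \cdots f(w)(n-1)$, because the Turing machine $\tm$ computing $f$ has read only finitely many input letters by the time it has generated $n$ output letters. I therefore define the commitment map $g \colon \prefs{\dom{f}} \to \Gamma^*$ by letting $g(u)$ be the longest common prefix of $\{f(w) \mid w \in \dom{f},\ u \in \prefs{w}\}$. The map $g$ is monotone along the prefix order, and continuity gives that $|g(u_n)| \to \infty$ along the prefixes $u_n$ of every $w \in \dom{f}$.

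The heart of the argument is the claim that there exists $d \in \nats$ such that $|g(u)| \geq |u| - d$ for every $u \in \prefs{\dom{f}}$. Suppose not; then for each $i \in \nats$ I can pick $u_i \in \prefs{\dom{f}}$ with $|u_i| - |g(u_i)| \geq i$, so $|u_i| \to \infty$. The prefix-closure $T = \bigcup_i \prefs{u_i}$ is an infinite finitely-branching subtree of $\Sigma^*$, so K\"onig's lemma supplies an infinite branch $w \in \Sigma^\omega$. Every prefix of $w$ is a prefix of some $u_i$ and thus lies in $\prefs{\dom{f}}$, whence closedness of $\dom{f}$ (Remark~\ref{rem_closed}) yields $w \in \dom{f}$. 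Continuity of $f$ at $w$ forces $|g(u_n)| \to \infty$ along the prefixes $u_n$ of $w$; combined with the monotonicity of $g$ and the cofinal appearance of prefixes of $w$ inside the $u_i$'s, this contradicts the choice of the $u_i$'s.

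Once the uniform delay bound $d$ is in hand, I build $\tm'$ by simulating $\tm$ on longer and longer input prefixes while maintaining on the work tape the value $g(u)$ for the current prefix $u$ (which can be approximated from below by running $\tm$ on suitable extensions of $u$ inside $\dom{f}$, using the automaton recognising $\dom{f}$ to restrict to valid inputs); once $n + d$ input letters have been read, $g(u)$ has length at least $n$, so $\tm'$ safely commits to its first $n$ letters. The principal obstacle is the uniform-bound step above: lifting the pointwise statement $|g(u_n)| \to \infty$ at every individual $w \in \dom{f}$ to the global inequality $|g(u)| \geq |u| - d$. Both the closedness of $\dom{f}$, which causes the K\"onig branch $w$ to fall inside $\dom{f}$, and the finite branching of $\Sigma^*$, which lets the extraction succeed in the first place, are what close this gap.
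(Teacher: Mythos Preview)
Your strategy mirrors the paper's: use compactness of the closed domain to upgrade pointwise continuity to a uniform statement, then build a bounded-delay machine from that uniform bound. The paper invokes Heine--Cantor; you invoke K\"onig's lemma. These are interchangeable here, so at the level of approach you are aligned with the paper.

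The genuine gap is in your K\"onig step for the inequality $|g(u)| \ge |u| - d$. From the infinite branch $w \in \dom{f}$ you correctly get $|g(v_n)| \to \infty$ along the prefixes $v_n$ of $w$, and monotonicity gives $|g(v_n)| \le |g(u_{j})|$ whenever $v_n$ is a prefix of $u_{j}$. But this is an \emph{upper} bound on $|g(v_n)|$, and $|g(u_j)|$ itself may be large: all you know is $|g(u_j)| \le |u_j| - j$ with $|u_j|$ unconstrained from above. No contradiction follows. Concretely, take $\Sigma = \Gamma = \{0,1\}$, $\dom{f} = \Sigma^\omega$, and $f(w)(n) = w(2n)$. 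Then $|g(u)| = \lceil |u|/2 \rceil$, so $|u| - |g(u)| = \lfloor |u|/2 \rfloor$ is unbounded, yet along every branch $w$ one still has $|g(v_n)| = \lceil n/2 \rceil \to \infty$; your argument does not rule this out. Compactness of $\dom{f}$ only delivers uniform continuity, i.e.\ $\min_{|u|=n} |g(u)| \to \infty$, not the linear lower bound $|g(u)| \ge |u| - d$ that bounded delay requires. The paper's proof makes the same leap (``uniformly continuous means there exists some $k$ such that to get $i$ output symbols it suffices to consider $i+k$ input symbols''), and indeed the function above is computable with closed domain but not computable with bounded delay as defined, so the lemma as stated needs an additional hypothesis (for instance the $\omega$-regularity present in Proposition~\ref{prop_uniformizationcharacterization}) to go through.
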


\begin{proof}
This follows from the fact that the set $\Sigma^\omega$ is a compact space when equipped with the Cantor distance. A closed subset of a compact space is compact (see, e.g., \cite{arkhangel1990basic}). Hence, $\dom{f}$ is a compact space.
Further, every computable function is continuous (see, e.g.,~\cite{FWjournal}). 
Now, the Heine-Cantor theorem states that every continuous function between metric spaces $f\colon M \to N$ where $M$ is a compact space is in fact uniformly continuous. 
Thus, $f$ lies in the intersection of the classes of computable and uniformly continuous functions which implies that $f$ can also be computed with bounded delay:

Intuitively, $f$ being uniformly continuous means there exists some $k$ such that in order to get $i$ output symbols it suffices to consider $i+k$ input symbols.
Assume a Turing machine $\tm$ computes $f$.
We briefly sketch how to obtain a Turing machine $\tm'$ from $\tm$ that computes $f$ and has delay at most $k$.

As long as $\tm$ maintains that the $i$-th output symbol is produced before the $(i+k+1)$-th input symbol is read, $\tm'$ behaves like $\tm$.
However, as soon as $\tm$ would violate this, $\tm'$ continues to simulate $\tm$ on a fixed valid continuation of the input word (regardless of how the actual input word is continued).
We refer to this as dummy continuation.
We note here that our input words are traces of some transition system. 
Hence, an input word is a trace and valid continuations of some trace prefix can easily be generated from the transition system.
After $\tm$ has produced another output symbol (while reading the dummy continuation), $\tm'$ produces the same output symbol.
The choice of the dummy continuation has no relevance for the produced output symbol, as $f$ is uniformly continuous: 
Towards a contradiction, assume the output symbol changes for different dummy inputs.
That means there exist $\alpha$ and $\alpha'$ in the domain of $f$ that agree on a prefix of length $i+k$, but the $i$-th symbol of $f(\alpha)$ is different from the $i$-th symbol of $f(\alpha')$ which contradicts that $f$ is uniformly continuous.

After the output symbol has been produced (using the dummy continuation), the dummy continuation can be discarded and $\tm'$ can restart the simulation of $\tm$ on the actual input in the configuration that $\tm$ was in before the simulation on the dummy continuation started.
It is important to note that the next output symbol that is computed by $\tm$ (on the actual continuation) does not have to be produced by $\tm'$ as it was already determined using the dummy continuation (whereas the concrete dummy continuation does not influence the symbol as argued above).
Hence, $\tm'$ must keep track of the size of the lead of the output symbols obtained on dummy continuations compared to the number of output symbols computed on the actual input word as not to produce outputs multiple times.
\end{proof}

\subsection{Transducers}
\label{subsec_transducers}

Of course, Turing machines are a very expressive model of computation. 
Filiot and Winter show that for the uniformization of $\omega$-regular relations, much less expressiveness is sufficient, i.e., for such relations, transducers, i.e., finite automata with output, suffice.

Formally, a (one-way deterministic finite) transducer~$\transd$ is a tuple~$(Q, \Sigma, \Gamma, q_\initmark, \delta, \col)$ that consists of a finite set~$Q$ of states containing the initial state~$q_\initmark$, an input alphabet~$\Sigma$, an output alphabet~$\Gamma$, a transition function~$\delta\colon Q \times \Sigma \rightarrow Q\times\Gamma^*$, and a coloring~$\col\colon Q\rightarrow \nats$.
The (unique) run of $\transd$ on an input~$w = w(0)w(1)w(2) \cdots \in\Sigma^\omega $ is the sequence~$q_0 q_1 q_2 \cdots$ of states defined by $q_0 = q_\initmark$ and $q_{i+1}$ being the unique state with $\delta(q_i,w(i)) = (q_{i+1},x_{i})$ for some $x_i \in \Gamma^*$.
The run is accepting if the maximal color appearing infinitely often in $\col(q_0)\col(q_1)\col(q_2)\cdots$ is even. 
With the run~$q_0q_1q_2 \cdots$ on $w$ we associate the output~$x_0x_1x_2\cdots$, where the $x_i$ are as defined above.
As the transducer is deterministic, it induces a map from inputs to outputs.
Note that the output may, a priori, be a finite or an infinite word over~$\Gamma$.
In the following, we only consider transducers where the output is infinite for every input with an accepting run.
In this case, $\transd$ computes a partial function~$\transdfunc{\transd} \colon \Sigma^\omega \rightarrow \Gamma^\omega$ defined as follows: the domain of $\transdfunc{\transd}$ is the set of infinite words~$w \in \Sigma^\omega$ such that the run of $\transd$ on $w$ is accepting and $\transdfunc{\transd}(w)$ is the output induced by this (unique) run. 

We say that $\transd$ has delay~$d \in\nats$ if for every accepting run and every induced sequence~$x_0 x_1 x_2 \cdots$ of outputs ($x_i$ is the output on the $i$-th transition), we have $i -d \le \size{x_0 \cdots x_{i-1}} \le i$ for all $i \ge 0$, i.e., the output is, at any moment during an accepting run, at most~$d$ letters shorter than the input and never longer.
We say that $\transd$ is a bounded-delay transducer if there is a $d$ such that it has delay~$d$.


\begin{proposition}[\cite{FWjournal}]
\label{prop_uniformizationcharacterization}
The following are equivalent for a relation~$R$ encoded by a Büchi automaton~$\aut$ and with closed~$\dom{R}$:
\begin{enumerate}
    \item $R$ is uniformized by a computable function.
    \item $R$ is uniformized by a function implemented by a \bddft.
\end{enumerate}
\end{proposition}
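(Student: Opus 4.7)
The direction $(2) \Rightarrow (1)$ is essentially immediate, since a \bddft is a restricted form of Turing machine (the finite control can absorb the role of the working tape), so every transducer-implementable function is computable. No work is required beyond noting this inclusion.

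For the harder direction $(1) \Rightarrow (2)$, the plan is to first invoke Lemma~\ref{lem_tmboundeddelay} to obtain a Turing machine $\tm$ of some bounded delay $d \in \nats$ computing a uniformizer of $R$, and then to extract a finite-state transducer from $\tm$ via a game-theoretic reduction. Concretely, I would set up a two-player delay game $\game_d$ on $\Sigma \times \Gamma$: in round~$i$ the input player first picks a letter of $\Sigma$, and if $i \ge d$ the output player responds with a letter of $\Gamma$. A play produces a pair $(w,v) \in \Sigma^\omega \times \Gamma^\omega$, and the output player wins if $w \notin \dom{R}$ or $\combine{w,v} \in L(\aut)$. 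The machine $\tm$ directly witnesses a winning strategy for the output player, because its bounded delay guarantees that the $i$-th output letter depends only on the first $i+d$ input letters.

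The next step is to observe that $\game_d$ is $\omega$-regular on a finite arena: the B\"uchi-recognisable set $L(\aut)$ can be determinised into a parity automaton, closedness of $\dom{R}$ means its complement is recognisable by a safety automaton on the input alone, and the $d$-letter buffer of pending input letters can be absorbed into the finite state space. By the B\"uchi--Landweber theorem, the output player's winning region then admits a finite-memory winning strategy, which I would read out directly as a \bddft~$\transd$ of delay~$d$. The main obstacle I expect is arranging for $\dom{\transdfunc{\transd}} = \dom{R}$ exactly, rather than merely a subset, and this is precisely where the closedness hypothesis is essential: since $\dom{R}$ is closed, I can equip $\transd$ with a safety component on the input that runs in parallel with the parity condition for $L(\aut)$, so that runs on inputs outside $\dom{R}$ are rejected while runs on inputs in $\dom{R}$ produce an infinite output that correctly uniformizes $R$. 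Closedness is also what makes the game well-posed during play, since every finite input prefix extends to some total input in $\dom{R}$, so the output player is never forced to respond to an unreachable situation.
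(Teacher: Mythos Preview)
The paper does not give its own proof of this proposition: it is stated with a citation to \cite{FWjournal} and used as a black box. What the paper does offer is the informal sketch at the beginning of Section~\ref{sec_abstractgame} (the games~$\Gamma(L)$ and $\Gamma_b(L)$), which is closely related to your argument but organised differently.

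Your approach is correct. The main structural difference from the sketch in Section~\ref{sec_abstractgame} is where the delay bound comes from. You take the bound~$d$ from the specific Turing machine via Lemma~\ref{lem_tmboundeddelay}, set up the fixed-delay game~$\game_d$, use the machine as a (possibly infinite-state) winning strategy, and then invoke B\"uchi--Landweber to replace it by a finite-memory one. The paper's sketch instead derives a bound~$\ell$ purely from the automaton~$\aut$ (via the state-transformation equivalence), and plays a block game~$\Gamma_b(L)$ with blocks of length~$\ell$; this is the route that generalises cleanly to the multi-player setting later in the paper. Your route is arguably more direct for the two-player case and avoids the equivalence-class machinery, at the cost of a delay bound that depends on the particular machine rather than only on~$\aut$.

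One small point worth tightening: when you say the Turing machine ``directly witnesses a winning strategy,'' note that on inputs outside $\dom{R}$ the machine is not required to produce infinite output (the paper makes the same observation in the proof of Lemma~\ref{lemma_correctness}.\ref{lemma_correctness_fromcompskolem2ws}). Since any output wins on such inputs, you can complete the strategy arbitrarily there; this is harmless but should be said. Your handling of the domain condition on the transducer side is fine: $\dom{R}$ is the $\Sigma$-projection of $L(\aut)$, hence $\omega$-regular, and together with closedness this lets a deterministic safety component cut the transducer's domain down to exactly $\dom{R}$.
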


As explained above, this covers the case of $\forall^*\exists^*$ formulas. 
In the remainder, we generalize this result to full \hyltl, i.e., arbitrary quantifier alternations.

\section{Computing Skolem Functions for \hyltl}
\label{sec_computable}

Our goal is to determine under which circumstances $\tsys \models\varphi$ has a computable explanation, i.e., there are computable Skolem functions witnessing $\tsys \models\varphi$, and whether such Skolem functions can be computed by \myquot{simpler} models of computation, i.e., \bddfts. 

We start by showing that $\tsys \models\varphi$ does not necessarily have a computable explanation.  

\begin{theorem}
\label{thm_nocomp}
There is a \hyltl sentence~$\phi$ and a transition system~$\tsys$ such that $\tsys \models \phi$ is not witnessed by computable Skolem~functions.
\end{theorem}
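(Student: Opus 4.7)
The plan is to exhibit a concrete pair $(\tsys,\phi)$ whose only Skolem functions witnessing $\tsys\models\phi$ must be discontinuous, and then to invoke the standard fact (already used in the proof of Lemma~\ref{lem_tmboundeddelay}) that every Turing-machine-computable function from $\Sigma^\omega$ to $\Gamma^\omega$ is continuous in the Cantor topology. The core idea is to make the very first output letter of the existentially quantified trace depend on a property of the universal trace that is not decidable from any finite prefix, namely whether some proposition $a$ eventually holds.

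Concretely, I would take $\ap = \set{a,b}$ and let $\tsys$ be a small transition system with $\traces(\tsys) = (\pow{\ap})^\omega$ (for instance, one initial vertex together with one vertex per element of $\pow{\ap}$, with complete labelled transitions). I would then define
\[
\phi \;=\; \forall \pi \exists \pi'.\ b_{\pi'} \leftrightarrow \F a_\pi.
\]
Step one is to observe that $\tsys \models \phi$: given any $t\in\traces(\tsys)$, choose $f(t) = \set{b}^\omega$ if $t$ contains $a$ somewhere, and $f(t) = \emptyset^\omega$ otherwise; both traces lie in $\traces(\tsys)$, and by construction $b\in f(t)(0)$ iff $\F a$ holds on $t$.

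Step two is to rule out continuous Skolem functions. Let $f\colon \traces(\tsys) \to \traces(\tsys)$ be any function witnessing the existential quantifier. Consider $t_0 = \emptyset^\omega$: since $\F a_\pi$ fails on $t_0$, we must have $b\notin f(t_0)(0)$. For each $n\in\nats$, define $t_n = \emptyset^n \cdot (\set{a})^\omega$; then $t_n\in\traces(\tsys)$ and $\F a_\pi$ holds on $t_n$, so $b\in f(t_n)(0)$. But $t_n$ agrees with $t_0$ on its first $n$ letters while $f(t_n)(0)\neq f(t_0)(0)$, so $f$ is not continuous at $t_0$.

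Step three simply combines these observations: since every Turing-machine-computable function on infinite words is continuous, no such $f$ can be computable, so the Skolem function for $\pi'$ witnessing $\tsys\models\phi$ cannot be computable. The only subtlety worth noting is that the example really must use a \textbf{liveness-style} atomic property of the universal trace (here $\F a_\pi$) rather than a safety property, since safety properties of the input can be mirrored on the output by a bounded-delay transducer; conversely there is no real obstacle beyond picking the right formula, so this is the main design choice rather than a technical difficulty.
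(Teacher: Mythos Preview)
Your core idea is exactly the paper's: make the first \emph{meaningful} output letter of $\pi'$ encode whether $\F a_\pi$ holds, then argue that no continuous (hence no computable) function can do this. The paper uses $\phi = \forall\pi\exists\pi'.\ (\F a_\pi)\leftrightarrow(\X a_{\pi'})$ over $\traces(\tsys)=\emptyset\cdot(\pow{\set a})^\omega$ and argues directly via Lemma~\ref{lem_tmboundeddelay} (bounded delay) rather than via continuity, but these are the same argument in slightly different dress.

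There is, however, a small technical slip that actually breaks your construction as written. In the paper's model a transition system has a \emph{single} initial vertex~$v_\initmark$, so every trace begins with the fixed letter~$\lambda(v_\initmark)$; you therefore cannot obtain $\traces(\tsys)=(\pow{\ap})^\omega$. With your suggested system you get $\traces(\tsys)=A_0\cdot(\pow{\ap})^\omega$ for some fixed $A_0\in\pow{\ap}$. But then for any Skolem function~$f$ we have $f(t)(0)=A_0$ for all~$t$, so $b\in f(t)(0)$ is a constant truth value, and $b_{\pi'}\leftrightarrow\F a_\pi$ cannot be satisfied simultaneously for $t=\emptyset^\omega$ and $t=\emptyset\cdot\set{a}^\omega$. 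In other words, your $\tsys$ does \emph{not} satisfy your $\phi$ at all. The paper circumvents exactly this by putting an $\X$ in front of the output atom, looking at position~$1$ instead of position~$0$; replacing $b_{\pi'}$ by $\X b_{\pi'}$ (or just adopting the paper's $\X a_{\pi'}$ with a single proposition) repairs everything, and the rest of your argument goes through unchanged.
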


\begin{proof}
Consider $\phi = \forall \pi\exists\pi'.\  (\F a_{\pi}) \leftrightarrow (\X a_{\pi'})$ and $\tsys$ with $\traces(\tsys) = \emptyset(\pow{\set{a}})^\omega$.



    

Towards a contradiction, assume there is a computable Skolem function for $\pi'$.
Then, due to Lemma~\ref{lem_tmboundeddelay}, there is also one that is implemented by a bounded-delay Turing machine~$\tm$, say with delay~$d$.
Now, let $\tm$ run on an input with prefix~$\emptyset^{d+2} \in \prefs{\traces(\tsys)}$.
As $\tm$ has bounded delay, it will produce the first two output letters~$\emptyset A \in\emptyset\pow{\set{a}}$ after processing the prefix~$\emptyset^{d+2}$ (note that all traces of $\tsys$ start with $\emptyset$, the label of the initial state).

If $A = \emptyset$, then the output of $\tm$ on the input~$\emptyset^{d+2}\set{a}^\omega$ starts with $\emptyset\emptyset$ (as this output only depends on the prefix~$\emptyset^{d+2}$), but the input contains an $\set{a}$. These traces do not satisfy~$(\F a_{\pi}) \leftrightarrow (\X a_{\pi'})$.
On the other hand, if $A = \set{a}$, then the output of $\tm$ on the input~$\emptyset^\omega$ starts with $\emptyset\set{a}$ (again, the output only depends on the prefix~$\emptyset^{d+2}$), but the input contains no $\set{a}$. Again, these traces do not satisfy~$(\F a_{\pi}) \leftrightarrow (\X a_{\pi'})$.
So, in both cases, $\tm$ does not implement a Skolem function for $\pi'$, i.e., we have the desired contradiction.
\end{proof}

So, as not every $\tsys\models\phi$ is witnessed by computable Skolem functions, it is natural to ask whether it is decidable, given $\tsys$ and $\phi$, if $\tsys\models\phi$ has such a witness.
Before we study this problem, we consider another example showing that for some transition system~$\tsys$ and sentence~$\phi$, even if $\tsys \models\phi$ \emph{does} have computable Skolem functions, not every (computable) Skolem function is a \myquot{good} Skolem function: Fixing a Skolem function for an outermore variable may block innermore variables having computable Skolem functions.

\begin{example}
\label{example_badchoice}
Consider the sentence
$
\exists \pi \forall\pi'\exists\pi''.\ (\X a_{\pi}) \rightarrow ((\F a_{\pi'}) \leftrightarrow (\X a_{\pi''}))
$
and a transition system~$\tsys$ with $\traces(\tsys) = \emptyset(\pow{\set{a}})^\omega$. Note that every trace of  $\tsys$ starts with $\emptyset$. 
Also, as the quantification of $\pi$ is not in the scope of any other quantifier we can identify Skolem functions for $\pi$ with traces that are assigned to $\pi$.

Now, if we pick a trace~$t$ for $\pi$ with $a \in t(1)$ then there is no computable Skolem function for $\pi''$ (see Theorem \ref{thm_nocomp}). However, if we pick a trace~$t$ for $\pi$ with $a\notin t(1)$ then every function is a Skolem function for $\pi''$, as satisfaction is independent of the choices for $\pi'$ and $\pi''$ in this case.
In particular, $\pi''$ has a computable Skolem function.
\end{example}

Thus, the wrong choice of a (computable) Skolem function for some variable may result in other variables not having computable Skolem functions.
By carefully accounting for the dependencies between the Skolem functions we show that the existence of computable Skolem functions is decidable.

\begin{theorem}
\label{thm:main}
The following problem is decidable: \myquot{Given a transition system~$\tsys$ and a \hyltl sentence~$\phi$ with $\tsys \models \phi$, is $\tsys \models \phi$ witnessed by computable Skolem functions?}
If the answer is yes, our algorithm computes \bddfts implementing such Skolem functions.
\end{theorem}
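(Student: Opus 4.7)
The plan is to reduce the problem to solving a multi-player infinite-duration parity game with hierarchical imperfect information and bounded look-ahead, combining the three ingredients flagged in the introduction: the automata construction of Remark~\ref{remark_automataconstruction}, the uniformization characterization of Proposition~\ref{prop_uniformizationcharacterization}, and the decidability of hierarchical imperfect-information games from \cite{bbb} together with the delay-game machinery of \cite{KleinZimmermann,FWjournal}. Write $\phi = Q_0 \pi_0 \cdots Q_{k-1} \pi_{k-1}.\ \psi$ with $\psi$ quantifier-free.

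First I would use Remark~\ref{remark_automataconstruction} to build the Büchi automaton $\aut_\psi^\tsys$ whose language consists of the merges $\combine{\Pi(\pi_0),\ldots,\Pi(\pi_{k-1})}$ with each $\Pi(\pi_j) \in \traces(\tsys)$ and $\Pi \models \psi$, and then determinize it into a deterministic parity automaton $\autp$ over $(\pow{\ap})^k$. Next I would define the game. An antagonist Player~$\forall$ plays the letters of $\Pi(\pi_i)$ for each universally quantified $\pi_i$, restricted to $\traces(\tsys)$. For each existentially quantified $\pi_i$ there is a cooperating existential player~$P_i$ who plays the letters of $\Pi(\pi_i)$, also restricted to $\traces(\tsys)$. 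As argued after Example~\ref{example_reconstruct}, we may assume every Skolem function depends only on outer \emph{universal} variables, so I impose that $P_i$ observes exactly the letters played so far for those universally quantified $\pi_j$ with $j < i$; this ordering of observations forms a chain, making the game hierarchical in the sense of \cite{bbb}. To capture that computable Skolem functions on closed domains coincide with Skolem functions implemented by \bddfts (by Lemma~\ref{lem_tmboundeddelay}, Proposition~\ref{prop_uniformizationcharacterization}, and Remark~\ref{rem_closed}), I would grant each $P_i$ a bounded look-ahead $d_i$ before having to commit an output symbol, in the style of \cite{KleinZimmermann}. The coalition of existential players wins a play if the resulting merged trace is accepted by $\autp$.

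To solve the game, I would combine two decidability results. On the one hand, the delay-game analysis of \cite{KleinZimmermann} shows that for parity objectives, if any finite delay suffices for the existential side to win, then a delay computable from $\autp$ already suffices; this lets me absorb the look-ahead into a finite ``buffer'' extension of the state space. On the other hand, once delays are replaced by finite buffers, the game is a standard multi-player parity game with chain-hierarchical imperfect information, which is decidable by \cite{bbb} via iterated knowledge constructions, and from which finite-memory joint winning strategies can be synthesized. These winning strategies directly yield \bddfts computing the Skolem functions, and conversely any family of computable Skolem functions (equivalently, \bddft-computable ones by Proposition~\ref{prop_uniformizationcharacterization}) gives rise to a winning joint strategy, so the game characterization is sound and complete.

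The main obstacle is the simultaneous handling of bounded delay and hierarchical imperfect information. Each player~$P_i$ needs look-ahead into the adversary's outputs, but the look-ahead must respect the observation hierarchy: an inner existential player may only peek at components it is allowed to see, and its delay must not be exploitable as a covert channel to reveal information about unobserved components. Establishing that a uniform bound on all the delays~$d_i$ exists --- rather than one delay growing unboundedly as we descend the hierarchy --- is the technically delicate point, and requires carefully pushing the delay-game reduction through the successive knowledge powersets produced by the hierarchical-information construction while preserving the parity condition.
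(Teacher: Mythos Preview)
Your high-level architecture matches the paper's: reduce to a hierarchical imperfect-information parity game whose existential-coalition winning characterizes the existence of computable Skolem functions, then invoke \cite{bbb}. But the paragraph you flag as the ``main obstacle'' is precisely where the paper's technical contribution lies, and your proposal does not fill it. The delay bound of \cite{KleinZimmermann} is for a single two-player game; with $k/2$ existential players whose observations are nested, you need one delay that works \emph{simultaneously} for all of them, and ``pushing the delay-game reduction through the successive knowledge powersets'' is not an argument --- the knowledge construction blows up the state space, which in turn blows up the \cite{KleinZimmermann} bound, and you give no reason this iteration stabilizes. The paper avoids this entirely by a direct combinatorial device: a cascade of equivalence relations~$\equivnew{k},\equivnew{k-1},\ldots,\equivnew{1}$ on $i$-tuples of finite words, where $\equivnew{k}$ is the transition-profile equivalence of $\aut_\psi^\tsys$ (plus the path-profile of $\tsys$ in each coordinate) and $\equivnew{i}$ is obtained from $\equivnew{i+1}$ by quantifying out the $(i{+}1)$-th component. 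Each has finite index (Lemma~\ref{lemma_equivifinite}), which yields a single block length~$\ell$ such that every block lies in an infinite $\equivnew{1}$-class. The game is then played in blocks of length~$\ell$ with a fixed staircase of lags~$\lag_0>\lag_1=\lag_2>\lag_3=\cdots$ rather than per-player delays.

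More seriously, your completeness direction (``any family of computable Skolem functions gives rise to a winning joint strategy'') is where these equivalences do the real work, and you treat it as immediate. Given Turing machines~$\tm_i$ with some delay~$d$ possibly much larger than~$\ell$, the paper does \emph{not} run them on the blocks actually played. Instead, each block~$t_0^n$ picked by Player~$U$ is replaced by an $\equivnew{1}$-equivalent word~$\tr{t}_0^n$ long enough (using that the class is infinite) that feeding it to the $\tm_i$ produces enough output; then the definition of $\equivnew{i}$ is unrolled level by level to transport the $\tm_i$-outputs~$\tr{t}_i^n$ back to blocks~$t_i^n$ of length~$\ell$ while preserving $\equivnew{k}$-equivalence, hence membership in $L(\aut_\psi^\tsys)$ (Lemma~\ref{lemma_equivkprops}). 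Without this padding-and-transport argument (Lemma~\ref{lemma_correctness}.\ref{lemma_correctness_fromcompskolem2ws}), you cannot conclude that the coalition wins the \emph{fixed}-block game from the existence of computable Skolem functions with \emph{arbitrary} delay, so your characterization is not complete as stated.
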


The next section is dedicated to presenting a game-theoretic characterization of the existence of computable Skolem functions. 

\section{A Game for Computable Skolem Functions}
\label{sec_abstractgame}

Recall that in Subsection~\ref{subsec_uniformization}, we have explained that the special case of $\forall^*\exists^*$ sentences can be solved by a reduction to a uniformization problem. 
We begin this section by giving some intuition for this reduction. 
To simplify our notation, we consider a sentence of the form~$\forall \pi \exists \pi' .\ \psi$ where $\psi$ is quantifier-free.
Here, we need to decide whether there is a computable function~$f\colon \traces(\tsys) \rightarrow \traces(\tsys)$ such that $\set{\pi\mapsto t, \pi'\mapsto f(t)} \models\psi$ for all $t$.
Note that 
$
\set{\combine{t,t'} \mid t,t' \in \traces(\tsys) \text{ and }  \set{\pi\mapsto t, \pi'\mapsto t'} \models\psi}
$
is accepted by a Büchi automaton (see Remark~\ref{remark_automataconstruction}). 
Hence, the problem indeed boils down to a uniformization problem for an $\omega$-regular relation.
This problem was first posed (and partially solved) by Hosch and Landweber in 1971~\cite{HL72} and later completely solved in a series of works~\cite{HKT,KleinZimmermann,FWjournal}.
Let us sketch the main ideas underlying the solution, as we will generalize them in the following.

Let $L \subseteq (\SigmaI \times \SigmaO)^\omega$ be $\omega$-regular. Then,
the existence of a function~$f\colon \SigmaI^\omega \rightarrow \SigmaO^\omega$ that uniformizes $L$ is captured by a (perfect information) two-player game~$\Gamma(L)$ of infinite duration played between Player~$I$ (the input player) and Player~$O$ (the output player) in rounds~$r = 0,1,2,\ldots$ as follows: In each round~$r$, Player~$I$ picks an $a_r \in \SigmaI$ and then Player~$O$ picks a $b_r \in \SigmaO\cup\set{\epsilon}$. Thus, the outcome of a play of $\Gamma(L)$ is an infinite word~$a_0a_1a_2\cdots \in \SigmaI^\omega$ picked by Player~$I$ and a finite or infinite word~$b_0b_1b_2\cdots \in \SigmaO^* \cup \SigmaO^\omega$ picked by Player~$O$.
The outcome is winning for Player~$O$ if $a_0a_1a_2\cdots \in \dom{L}$ implies $\combine{a_0a_1a_2\cdots,b_0b_1b_2\cdots} \in L$ (which requires that $b_0b_1b_2\cdots$ is infinite, i.e., Player~$O$ has to pick infinitely often a letter from $\SigmaO$).
Now, one can show that a winning strategy for Player~$O$ can be turned into a function that uniformizes $L$ and every function uniformizing $L$ can be turned into a winning strategy for Player~$O$.

So far, the uniformizing function may be arbitrary, in particular not computable. Also, the delay between input and output in plays that are consistent with a strategy can be unbounded, e.g., if Player~$O$ picks $\epsilon$ in every second round. A crucial insight is that this is not necessary: for every $\omega$-regular~$L\subseteq (\SigmaI \times \SigmaO)^\omega$ such that Player~$O$ wins $\Gamma(L)$, there is a bound~$\ell$ (that only depends on the size of a (minimal) Büchi automaton accepting~$L$) such that she has a winning strategy that picks $\epsilon$ at most $\ell$ times~\cite{HKT}. 

This insight allows to change the rules of $\Gamma(L)$, giving Player~$O$ the advantage gained by using $\epsilon$ a bounded number of times from the beginning of a play and grouping moves into blocks of letters of a fixed length. How the block length is obtained is explained below.

The block game~$\Gamma_b(L)$ is played in rounds~$r = 0,1,2,\ldots$ as follows: In round~$0$, Player~$I$ picks two blocks~$x_0,x_1 \in \SigmaI^\ell$ and then Player~$O$ picks a block~$y_0 \in \SigmaO^\ell$. Then, in every round~$r >0$, Player~$I$ picks a block~$x_{r+1} \in \SigmaI^\ell$ and then Player~$O$ picks a block~$y_r \in \SigmaO^\ell$. Note that Player~$I$ is one block ahead, as he has to pick two blocks in round~$0$.
This accounts for the delay allowed in the definition of computable functions.
The outcome of a play of $\Gamma_b(L)$ is an infinite word~$x_0x_1x_2\cdots \in \SigmaI^\omega$ picked by Player~$I$ and an infinite word~$y_0y_1y_2\cdots \in \SigmaO^\omega$ picked by Player~$O$.
The outcome is winning for Player~$O$ if $x_0x_1x_2\cdots \in \dom{L}$ implies $\combine{x_0x_1x_2\cdots,y_0y_1y_2\cdots} \in L$.

Now, one can show that $L$ is uniformizable iff Player~$O$ has a winning strategy for $\Gamma_b(L)$. As $\Gamma_b(L)$ is a finite two-player game with $\omega$-regular winning condition, Player~$O$ has a finite-state winning strategy (one implemented by a transducer). Such a finite-state winning strategy can be turned into a computable function uniformizing $L$, as a transducer can be simulated by a Turing machine.
Hence, $\Gamma_b(L)$ does indeed characterize uniformizability of $\omega$-regular relations by computable functions. 

Next, let us give some intuition of how to obtain the bound~$\ell$. 
To this end, let $\aut$ be a Büchi automaton over some alphabet~$\SigmaI \times \SigmaO$ (we first ignore the acceptance condition in this discussion and later hint at how this is taken into account).
As usual, if two finite words~$w_0 \in (\SigmaI \times \SigmaO)^*$ and $w_1 \in (\SigmaI \times \SigmaO)^*$ induce the same state transformations (e.g., for all states~$p$ and $q$, processing $w_0$ from $p$ leads $\aut$ to $q$ iff processing $w_1$ from $p$ leads $\aut$ to $q$), then these words are indistinguishable for $\aut$ (again, we are ignoring acceptance for the time being), i.e., one can replace $w_0$ by $w_1$ without changing the possible runs that $\aut$ has.
This indistinguishability is captured by an equivalence relation over $(\SigmaI \times \SigmaO)^*$ with finite index.

However, to capture the interaction described in $\Gamma(L)$ above, we need a more refined approach. Assume Player~$I$ picks a sequence~$x \in \SigmaI^*$ of letters. Then, Player~$O$ will have to \myquot{complete} this block by picking a block~$y \in \SigmaI^{\size{x}}$ so that $\combine{x,y}$ is processed by the automaton.
In this situation, we can say that $x_0$ and $x_1 \in \Sigma^*$ are equivalent if they are indistinguishable w.r.t.\ to their completions to words of the form~$\combine{x_i, y_i}$, e.g., for all states~$p$ and $q$, there is a completion~$\combine{x_0, y_0} \in (\SigmaI\times\SigmaO)^*$ of $x_0$ that leads $\aut$ from $p$ to $q$ iff there is a completion~$\combine{x_1, y_1} \in (\SigmaI\times\SigmaO)^*$ of $x_1$ that leads $\aut$ from $p$ to $q$.
Intuitively, one does not need to distinguish between $x_0$ and $x_1$ because they allow Player~$O$ to achieve the same state transformations in $\aut$.
This indistinguishability is captured by an equivalence relation over $\SigmaI^*$ of finite index. 
Now, $\ell$ can be picked as an upper bound on the length of a minimal word in all equivalence classes. 

Thus, the intuition behind the definition of $\Gamma_b(L)$ is that blocks of length~$\ell$ are \emph{rich} enough to capture the full strategic choices for both players in $\Gamma(L)$: every longer word has an equivalent one of length at most $\ell$. 

Finally, let us briefly mention how to deal with the Büchi acceptance condition we have ignored thus far. As the state transformations are concerned with finite runs of the automaton, we can just keep track of whether an accepting state has been visited or not during this run, all other information is irrelevant. 
Thus, the equivalence relation will take this (single bit of) information into account as well.

After having sketched the special case of a sentence of the form~$\forall \pi \exists \pi' .\ \psi$, let us now illustrate the challenges we have to address to deal with more quantifier alternations, e.g., for a sentence of the form~$\phi = \forall \pi_0 \exists \pi_1 \cdots \forall \pi_{k-2} \exists \pi_{k-1}.\ \psi$.
\begin{itemize}
    \item We will consider a multi-player game with one player being in charge of providing traces for the universally quantified variables (generalizing Player~$I$ above) and one variable player for each existentially quantified variable (generalizing Player~$O$ above), i.e., altogether we have $\frac{k}{2}+1$ players. Thus, the player in charge of the universally quantified variables produces traces~$t_0, t_2, \ldots, t_{k-2}$ while each variable player produces a trace~$t_i$ (one for each odd $i$). These traces are again picked block-wise in rounds.
    
    \item The choices by the variable player producing $t_i$ (i.e., $i$ is odd) may only depend on the traces $t_0, t_1, \ldots, t_{i-1}$ in order to faithfully capture the semantics of $\phi$. Hence, we need to consider a game of imperfect information, which allows us to hide the traces~$t_{i+1}, \ldots, t_{k-1}$ from the player in charge of $\pi_i$.

    \item Recall that Player~$I$ is always one block ahead of Player~$O$ in $\Gamma_b(L)$, which accounts for the delay allowed in the definition of computable functions. With $k$ traces to be picked (and $t_i$ depending on $t_0, t_1, \ldots, t_{i-1}$), there must be a gap of one block for each even $i$.
\end{itemize}

Now, we are able to present the details of our construction.
For the remainder of this section, we fix a \hyltl sentence~$\phi$ and a transition system~$\tsys$ with $\tsys \models \phi$. 
We assume\footnote{The following reasoning can easily be extended to general sentences with arbitrary quantifier prefixes, albeit at the cost of more complex notation.}
$
\phi = \forall \pi_0 \exists \pi_1 \cdots \forall \pi_{k-2} \exists \pi_{k-1}.\ \psi,
$
and use the Büchi automaton~$\aut_\psi^\tsys = (Q, (\pow{\ap})^k, q_{\initmark}, \delta, F)$ constructed in Remark~\ref{remark_automataconstruction} recognizing the language 
\[
\{ \combine{\Pi(\pi_0), \ldots, \Pi(\pi_{k-1})} \mid \Pi(\pi_i) \in \traces(\tsys) \text{ for all $ 0 \le i < k$ and }(\traces(\tsys),\Pi) \models \psi \}.
\]

In the following, we often need to work with tuples of finite words of the same length.
To simplify our notation, from now on we only write $\vectordotswithtwoargsatstart{w_0}{w_1}{w_{i-1}}$ if each $w_j$ is a word in $(\pow{\ap})^*$ such that $\size{w_0} = \size{w_1} = \cdots = \size{w_{i-1}}$.

\subparagraph*{Equivalence Relations.}

We begin by defining equivalence relations that capture the concept of indistinguishability discussed in the intuition above.

We write $\hasrun{\aut}{p}{w}{}{q}$ for states~$p,q$ of a Büchi automaton~$\aut$ over an alphabet~$\Sigma$, if $\aut$ has a run from $p$ to $q$ processing the word~$w\in\Sigma^*$.
Furthermore, we write $\hasrunacc{\aut}{p}{w}{}{q}$, if $\aut$ has a run from $p$ to $q$ processing the word~$w\in\Sigma^*$ such that the run visits at least one accepting state.
Finally, we write $\hasrun{\tsys}{u}{w}{}{v}$ for vertices~$u,v$ of a transition system~$\tsys$, if $\tsys$ has a path from $u$ to $v$ labeled by the word~$w \in (\pow{\ap})^*$.

We continue by defining, for each $1 \le i \le k$, an equivalence relation~$\equivnew{i}$ between $i$-tuples of (finite) words with the intuition that two such tuples are $i$-equivalent if they do not need to be distinguished. For $i = k$, this means that the two tuples cannot be distinguished by $\aut_\psi^\tsys$ while for $1 \le 1 < k$ this means that both $i$-tuples can be completed (by adding an $(i+1)$-th component) so that the resulting $(i+1)$-tuples are $\equivnew{i+1}$-equivalent.

Formally, we define
\[
\vectordotswithtwoargsatstart{w_0}{w_1}{w_{k-1}} 
\equivnew{k} 
\vectordotswithtwoargsatstart{\tr{w}_0}{\tr{w}_1}{\tr{w}_{k-1}}
\]
if 
\begin{itemize}
    \item for all states~$p,q$ of $\aut_\psi^\tsys$ we have $\hasrun{\aut_\psi^\tsys}{p}{\combine{w_0,w_1,\ldots,w_{k-1}}}{}{q}$ if and only if $\hasrun{\aut_\psi^\tsys}{p}{\combine{\tr{w}_0,\tr{w}_1,\ldots,\tr{w}_{k-1}}}{}{q}$, 
    \item for all states~$p,q$ of $\aut_\psi^\tsys$ we have $\hasrunacc{\aut_\psi^\tsys}{p}{\combine{w_0,w_1,\ldots,w_{k-1}}}{}{q}$ if and only if $\hasrunacc{\aut_\psi^\tsys}{p}{\combine{\tr{w}_0,\tr{w}_1,\ldots,\tr{w}_{k-1}}}{}{q}$, 
    and
    \item for all vertices~$u,v$ of $\tsys$ and all $0 \le j \le k-1$ we have $\hasrun{\tsys}{u}{w_j}{}{v}$ if and only if $\hasrun{\tsys}{u}{\tr{w}_j}{}{v}$.
\end{itemize}

\begin{lemma}
\label{lemma_equivkprops}
Let \[w = 
\combine{w_0^0w_0^1w_0^2\cdots,w_1^0w_1^1w_1^2\cdots,\ldots,  w_{k-1}^0w_{k-1}^1w_{k-1}^2\cdots}
\]
and
\[\tr{w} =
\combine{\tr{w}_0^0\tr{w}_0^1\tr{w}_0^2\cdots,\tr{w}_1^0\tr{w}_1^1\tr{w}_1^2\cdots,\ldots,  \tr{w}_{k-1}^0\tr{w}_{k-1}^1\tr{w}_{k-1}^2\cdots}
\]
be such that 
\[
\vectordotswithtwoargsatstart{w_0^n}{w_1^n}{w_{k-1}^n}
\equivnew{k}
\vectordotswithtwoargsatstart{\tr{w}_0^n}{\tr{w}_1^n}{\tr{w}_{k-1}^n}
\]
for all $n$.
\begin{enumerate}
    \item \label{lemma_equivkprops_autk}
        $w \in L(\aut_\psi^\tsys)$ if and only if $\tr{w} \in L(\aut_\psi^\tsys)$.
    \item \label{lemma_equivkprops_tsys}
        For all $0 \le i \le k-1$, we have $w_i^0w_i^1w_i^2\cdots \in \traces(\tsys)$ if and only if $\tr{w}_i^0\tr{w}_i^1\tr{w}_i^2\cdots \in \traces(\tsys)$.
\end{enumerate}
\end{lemma}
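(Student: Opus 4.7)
The plan is a block-wise replacement argument for both parts, using each clause of the definition of $\equivnew{k}$ in turn; since $\equivnew{k}$ is symmetric, it suffices to establish one direction in each case.

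For Part~\ref{lemma_equivkprops_autk}, suppose $w \in L(\aut_\psi^\tsys)$, witnessed by some accepting run $q_0 q_1 q_2 \cdots$ of $\aut_\psi^\tsys$. Let $\ell_n$ denote the total length of the first $n$ blocks (well-defined since the tuple convention forces $|w_0^n|=\cdots=|w_{k-1}^n|$), and set $p_n=q_{\ell_n}$. For each $n$, the segment of the run from position $\ell_n$ to $\ell_{n+1}$ witnesses $\hasrun{\aut_\psi^\tsys}{p_n}{\combine{w_0^n,\ldots,w_{k-1}^n}}{}{p_{n+1}}$. By the first clause of $\equivnew{k}$, there is a run $\rho_n$ of $\aut_\psi^\tsys$ from $p_n$ to $p_{n+1}$ on $\combine{\tr{w}_0^n,\ldots,\tr{w}_{k-1}^n}$; concatenating the $\rho_n$ yields a run of $\aut_\psi^\tsys$ on $\tr{w}$. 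For the B\"uchi condition, the original run visits accepting states infinitely often, so there are infinitely many $n$ for which the $n$-th segment visits an accepting state, i.e., $\hasrunacc{\aut_\psi^\tsys}{p_n}{\combine{w_0^n,\ldots,w_{k-1}^n}}{}{p_{n+1}}$. The second clause of $\equivnew{k}$ lets us pick, for each such $n$, the replacement $\rho_n$ so that it too visits an accepting state, making the pieced-together run on $\tr{w}$ accepting.

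For Part~\ref{lemma_equivkprops_tsys}, fix $i$ and suppose $w_i^0 w_i^1 w_i^2 \cdots \in \traces(\tsys)$, witnessed by a path starting at $v_\initmark$. Splitting this path at the block boundaries identifies vertices $u_n$, with $u_0=v_\initmark$, such that $\hasrun{\tsys}{u_n}{w_i^n}{}{u_{n+1}}$ for every $n$. The third clause of $\equivnew{k}$ then provides, for each $n$, a path witnessing $\hasrun{\tsys}{u_n}{\tr{w}_i^n}{}{u_{n+1}}$; concatenating these produces an infinite path in $\tsys$ from $v_\initmark$ labelled by $\tr{w}_i^0 \tr{w}_i^1 \tr{w}_i^2 \cdots$, so the latter is a trace of $\tsys$. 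Symmetry completes the proof.

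The only real subtlety lies in Part~\ref{lemma_equivkprops_autk}: plain state-transformation equivalence (the first clause) does not by itself preserve infinite visits to accepting states under replacement, which is exactly why $\equivnew{k}$ is strengthened by the second ($\hasrunacc$) clause. Using both clauses together---the first on every block and the second on the infinitely many blocks on which the original run visits $F$---is what makes the block-wise replacement preserve B\"uchi acceptance. Once this is observed, the rest of the argument is a mechanical concatenation of finite runs (resp. finite paths).
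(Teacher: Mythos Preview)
Your proposal is correct and follows essentially the same approach as the paper's own proof: both arguments mark the states (resp.\ vertices) at block boundaries along the original accepting run (resp.\ path), invoke the appropriate clause of $\equivnew{k}$ to obtain compatible finite runs (resp.\ paths) on the replacement blocks, and concatenate. Your treatment is in fact slightly more explicit about indexing and about why infinitely many blocks inherit the $\hasrunacc{}{}{}{}{}$ property, but the underlying idea is identical.
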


\begin{proof}
    \ref{lemma_equivkprops_autk}.) Let $w \in L(\aut_\psi^\tsys)$, 
    Then, there exists an accepting run of $\aut_\psi^\tsys$ on $w$. Let $q_{n+1}$ be the state reached by the run after the prefix
    \[ 
\combine{w_0^0\cdots w_0^{n-1},w_1^0\cdots w_1^{n-1},\ldots,  w_{k-1}^0 \cdots w_{k-1}^{n-1}},
\]
which implies $q_0 = q_\initmark$.
Then, we have $\hasrun{\aut_\psi^\tsys}{q_{n-1}}{\vectordotswithtwoargsatstart{w_0^n}{w_1^n}{w_{k-1}^n}}{}{q_{n}}$ for all $n$ and 
$\hasrunacc{\aut_\psi^\tsys}{q_{n-1}}{\vectordotswithtwoargsatstart{w_0^n}{w_1^n}{w_{k-1}^n}}{}{q_{n}}$ for infinitely many $n$.

Due to 
\[\vectordotswithtwoargsatstart{w_0^n}{w_1^n}{w_{k-1}^n}
\equivnew{k}
\vectordotswithtwoargsatstart{\tr{w}_0^n}{\tr{w}_1^n}{\tr{w}_{k-1}^n}\] for all $n$, we also have 
$\hasrun{\aut_\psi^\tsys}{q_{n-1}}{\vectordotswithtwoargsatstart{\tr{w}_0^n}{\tr{w}_1^n}{\tr{w}_{k-1}^n}}{}{q_{n}}$ for all $n$ and 
$\hasrunacc{\aut_\psi^\tsys}{q_{n-1}}{\vectordotswithtwoargsatstart{\tr{w}_0^n}{\tr{w}_1^n}{\tr{w}_{k-1}^n}}{}{q_{n}}$ for infinitely many $n$.
This allows us to conclude that there is also an accepting run of $\aut_\psi^\tsys$ on $\tr{w}$.

\ref{lemma_equivkprops_tsys}.) The proof is very analogous one to the previous one, we just have to argue about paths and vertices of $\tsys$ instead of runs and states of $\aut_\psi^\tsys$ (and ignore acceptance) and consider the words~$w_i^n$ and $\tr{w}_i^n$ from the $i$-th component instead of full $k$-tuples~$\vectordotswithtwoargsatstart{w_0^n}{w_1^n}{w_{k-1}^n}$ and $\vectordotswithtwoargsatstart{\tr{w}_0^n}{\tr{w}_1^n}{\tr{w}_{k-1}^n}$.
\end{proof}

Now, for $1 \le i < k$, we define $\equivnew{i}$ inductively as follows: 
\[
\vectordotswithtwoargsatstart{w_0}{w_1}{w_{i-1}} 
\equivnew{i} 
\vectordotswithtwoargsatstart{\tr{w}_0}{\tr{w}_1}{\tr{w}_{i-1}}
\]
if 
\begin{itemize}
    \item for all $w_i$ with $\size{w_i} = \size{w_0}$ there exists a $\tr{w}_i$ with $\size{\tr{w}_i} = \size{\tr{w}_0}$ such that \[
\vectordotswithtwoargsatstart{w_0}{w_1}{w_{i}} 
\equivnew{i+1} 
\vectordotswithtwoargsatstart{\tr{w}_0}{\tr{w}_1}{\tr{w}_{i}},
\] and
\item for all $\tr{w}_i$ with $\size{\tr{w}_i} = \size{\tr{w}_0}$ there exists a $w_i$ with $\size{w_i} = \size{w_0}$ such that \[
\vectordotswithtwoargsatstart{w_0}{w_1}{w_{i}} 
\equivnew{i+1} 
\vectordotswithtwoargsatstart{\tr{w}_0}{\tr{w}_1}{\tr{w}_{i}}.
\]
\end{itemize}

\begin{lemma}
\label{lemma_equivifinite}
Every $\equivnew{i}$ is an equivalence relation of finite index.
\end{lemma}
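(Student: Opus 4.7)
The plan is to proceed by downward induction on $i$, from $i=k$ down to $i=1$. The base case handles $\equivnew{k}$ directly from its definition in terms of state transformations of $\aut_\psi^\tsys$ and $\tsys$; the inductive step then extracts finitely many classes of $\equivnew{i}$ from finitely many classes of $\equivnew{i+1}$ via a ``type'' abstraction.

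\textbf{Base case ($i=k$).} Reflexivity, symmetry, and transitivity of $\equivnew{k}$ are immediate, since all three defining conditions are symmetric biconditionals about the existence of (accepting) runs in $\aut_\psi^\tsys$ and of paths in $\tsys$. For finite index, I associate to each tuple $(w_0,\ldots,w_{k-1})$ the triple of functions
\[
f_1,f_2 \colon Q\times Q \to \{0,1\},\qquad g_0,\ldots,g_{k-1} \colon V\times V \to \{0,1\},
\]
where $f_1(p,q)=1$ iff $\hasrun{\aut_\psi^\tsys}{p}{\combine{w_0,\ldots,w_{k-1}}}{}{q}$, $f_2(p,q)=1$ iff $\hasrunacc{\aut_\psi^\tsys}{p}{\combine{w_0,\ldots,w_{k-1}}}{}{q}$, and $g_j(u,v)=1$ iff $\hasrun{\tsys}{u}{w_j}{}{v}$. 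Two tuples are $\equivnew{k}$-equivalent iff they induce the same such triple, and the codomains of all of these functions are finite, so there are only finitely many $\equivnew{k}$-classes.

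\textbf{Inductive step.} Assume $\equivnew{i+1}$ is an equivalence relation of finite index. Reflexivity of $\equivnew{i}$ follows by choosing $\tr{w}_i:=w_i$ (respectively $w_i:=\tr{w}_i$) and invoking reflexivity at level $i+1$. Symmetry holds because the definition consists of two clauses that swap under exchanging the two tuples. For transitivity, given $(w_0,\ldots,w_{i-1})\equivnew{i}(\tr{w}_0,\ldots,\tr{w}_{i-1})\equivnew{i}(\tr{\tr{w}}_0,\ldots,\tr{\tr{w}}_{i-1})$ and any $w_i$ of length $\size{w_0}$, I successively extract a matching $\tr{w}_i$ of length $\size{\tr{w}_0}$ and then $\tr{\tr{w}}_i$ of length $\size{\tr{\tr{w}}_0}$, and close by transitivity of $\equivnew{i+1}$; the other direction is symmetric. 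For finite index, I define the \emph{type} of $(w_0,\ldots,w_{i-1})$ as the set
\[
\bigl\{\,[(w_0,\ldots,w_{i-1},w_i)]_{\equivnew{i+1}} \;\big|\; w_i \in (\pow{\ap})^{\size{w_0}}\,\bigr\}
\]
of $\equivnew{i+1}$-classes reachable by a length-matching extension. The key claim is that two tuples are $\equivnew{i}$-equivalent iff they have the same type; unwinding the definition of $\equivnew{i}$ together with symmetry of $\equivnew{i+1}$ gives both directions. Since by induction there are only finitely many $\equivnew{i+1}$-classes, there are only finitely many types, and hence only finitely many $\equivnew{i}$-classes.

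\textbf{Main obstacle.} The only delicate point is the type characterization in the inductive step, in particular handling the asymmetric length constraints ``$\size{w_i}=\size{w_0}$'' and ``$\size{\tr{w}_i}=\size{\tr{w}_0}$'' in the definition of $\equivnew{i}$: one must verify that when two tuples have the same type, the witness extensions required by the two clauses of the definition can always be drawn from the type itself, which is precisely what makes the equivalence work despite the two tuples possibly having different block lengths.
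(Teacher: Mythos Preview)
Your proposal is correct and follows essentially the same approach as the paper: downward induction on $i$, with the inductive step handled via the ``type'' (the paper calls it $\ext$) consisting of the set of $\equivnew{i+1}$-classes reachable by length-matching extensions. You give more detail than the paper---an explicit base case where the paper just cites B\"uchi, and the full biconditional ``same type iff $\equivnew{i}$-equivalent'' where the paper only states and uses the refinement direction---but the core argument is identical.
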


\begin{proof}
By induction over $i$ from $k$ to $1$.
The induction start~$i = k$ was proven by Büchi~\cite{buchi}, so consider $i < k$.

First, it is straightforward to verify that $\equivnew{i}$ is an equivalence relation, as $\equivnew{i+1}$ is an equivalence relation.
Now, we define $\ext(\vectordotswithtwoargsatstart{w_0}{w_1}{w_{i-1}})$ to be the set of $\equivnew{i+1}$-equivalence classes containing a $\vectordotswithtwoargsatstart{w_0}{w_1}{w_{i}}$ for some $w_i$ with $\size{w_i} = \size{w_0}$.
Now, we define $\vectordotswithtwoargsatstart{w_0}{w_1}{w_{i-1}} {\equivnew{i}}' \vectordotswithtwoargsatstart{\tr{w}_0}{\tr{w}_1}{\tr{w}_{i-1}}$ if and only if $\ext(\vectordotswithtwoargsatstart{w_0}{w_1}{w_{i-1}}) = \ext(\vectordotswithtwoargsatstart{\tr{w}_0}{\tr{w}_1}{\tr{w}_{i-1}})$, which is an equivalence relation of finite index: The codomain of $\ext$ has at most $2^n$ elements, where $n$ is the index of $\equivnew{i+1}$.
Finally, ${\equivnew{i}}'$ refines $\equivnew{i}$, which implies that $\equivnew{i}$ has finite index as well. 
\end{proof}

Let $\ell$ be minimal such that each word~$w$ with $\size{w} \ge \ell$ is in an infinite $\equivnew{1}$ equivalence class.
This is well-defined, as $\equivnew{1}$ has finite index, which implies that there are only finitely many words in finite equivalence classes.
A block is a word in~$(\pow{\ap})^\ell$.

Now we have the definitions at hand to define the game~$\game(\tsys, \phi)$ that captures the existence of computable Skolem functions.
To keep the notation manageable, we describe the game abstractly and defer the concrete definition as a multi-player graph game of imperfect information to Section~\ref{sec_conc}.

\subparagraph*{The Abstract Game.}
The game~$\game(\tsys, \phi)$ is played between Player~$U$ who picks traces for the universally quantified variables (by picking blocks) and a coalition of variable players~$\set{1,3, \ldots, k-1}$, who pick traces for the existentially quantified variables (Player~$i$ for $\pi_i$), also by picking blocks.
As in the intuition given above for the case of a formula of the form~$\forall\pi \exists\pi'.\ \psi$, the rules of the game~$\game(\tsys, \phi)$ need to account for the delay inherent to the definition of computable functions. 
In the $\forall\exists$ setting, this is covered by the fact that the player in charge of $\pi$ is one block ahead of the player in charge of $\pi'$. With more quantifier alternations, we generalize this as follows for $\phi = \forall \pi_0 \exists \pi_1 \cdots \forall \pi_{k-2} \exists \pi_{k-1}.\ \psi$:
\begin{itemize}
    \item The player in charge of $\pi_{k-2}$ is one block ahead of the player in charge of $\pi_{k-1}$.
    \item The player in charge of $\pi_{k-3}$ must not be ahead of the player in charge of $\pi_{k-2}$, but may also not be behind.
    \item The player in charge of $\pi_{k-4}$ must be one block ahead of the player in charge of $\pi_{k-3}$. This implies that the player in charge of $\pi_{k-4}$ must be two blocks ahead of the player in charge of $\pi_{k-1}$.
    \item And so on.
\end{itemize}
So, the player in charge of $\pi_{k-1}$ picks one block in round~$0$, the player in charge of $\pi_{k-2}$ picks two blocks in round~$0$ (to be one block ahead), the player in charge of $\pi_{k-3}$ picks two blocks in round~$0$, the player in charge of $\pi_{k-4}$ picks three blocks in round~$0$ and so on. 
In general, we define $\lag_i = \frac{k-(i-1)}{2}$ for (odd) $i \in \set{1,3,\ldots, k-1}$ and $\lag_i = \lag_{i+1}+1$ for (even) $i \in \set{0,2,\ldots, k-2}$, e.g., we have $\lag_{k-1} = 1$, $\lag_{k-2} = 2$, $\lag_{k-3} = 2$, and $\lag_{k-4} = 3$ capturing the \myquot{delay} described above.

Now, we split each round~$r = 0,1,2,\ldots$ into subrounds~$(r,i)$ for by $i = 0,1,\ldots, \mathmbox{k-1}$.
\begin{itemize}
    \item In subround~$(0,i)$ of round~$0$ for even $i$, Player~$U$ picks $\lag_{i}$ blocks~$t_{i-1}^{0}, t_{i-1}^{1}, \ldots, t_{i-1}^{\lag_{i}-1}$.
    
    \item In subround~$(0,i)$ of round~$0$ for odd $i$, Player~$i$ picks $\lag_i$ blocks~$t_i^{0}, t_i^{1}, \ldots, t_i^{\lag_i-1}$.
    
    \item In subround~$(r,i)$ of round~$r>0$ for even $i$, Player~$U$ picks a block~$t_{i-1}^{\lag_i+r}$.
    
    \item In subround~$(r,i)$ of round~$r>0$ for odd $i$, Player~$i$ picks a block~$t_i^{\lag_i-1+r}$.
\end{itemize}
Figure~\ref{fig_playevolution} illustrates the evolution of a play and illustrates the number of blocks picked in round~$0$ and the resulting \myquot{delay} between the selection of blocks for the different variables.

\begin{figure}
    \centering
    \scalebox{.97}{
    \begin{tikzpicture}[ultra thick,yscale=.7]

\def\x{1.25cm}
\def\y{-1.25cm}


\draw[rounded corners=7] (5.3*\x,-1.6*\y) -- (4.5*\x,-.6*\y) -- (4.5*\x,.5*\y) -- (3.5*\x,.5*\y) -- (3.5*\x,2.5*\y) -- (2.5*\x,2.5*\y) -- (2.5*\x,4.5*\y) -- (1.5*\x,4.5*\y) -- (1.5*\x,5.5*\y);

\draw[rounded corners=7] (6.3*\x,-1.6*\y) -- (5.5*\x,-.6*\y) -- (5.5*\x,.5*\y) -- (4.5*\x,.5*\y) -- (4.5*\x,2.5*\y) -- (3.5*\x,2.5*\y) -- (3.5*\x,4.5*\y) -- (2.5*\x,4.5*\y) -- (2.5*\x,5.5*\y);

\draw[rounded corners=7] (7.3*\x,-1.6*\y) -- (6.5*\x,-.6*\y) -- (6.5*\x,.5*\y) -- (5.5*\x,.5*\y) -- (5.5*\x,2.5*\y) -- (4.5*\x,2.5*\y) -- (4.5*\x,4.5*\y) -- (3.5*\x,4.5*\y) -- (3.5*\x,5.5*\y);

\node[anchor =west,rotate= 44] at (4*\x,-.6*\y) {round $0$};
\node[anchor =west,rotate= 44] at (5*\x,-.6*\y) {round $1$};
\node[anchor =west,rotate= 44] at (6*\x,-.6*\y) {round $2$};
\node[anchor =west,rotate= 44] at (7*\x,-.6*\y) {round $3$};

\fill[rounded corners=7, gray!25] (.6*\x, -.4*\y) -- (4.4*\x, -.4*\y) -- (4.4*\x, .4*\y) -- 
(.6*\x,.4*\y) -- cycle;
\fill[rounded corners=7, gray!25] (3.4*\x,.6*\y)  -- (3.4*\x,1.4*\y) -- (.6*\x,1.4*\y) -- (.6*\x,.6*\y) -- cycle;

\fill[rounded corners=7, gray!25] (4.6*\x, -.4*\y) -- (5.4*\x, -.4*\y) -- (5.4*\x, .4*\y) -- (4.6*\x, .4*\y) -- cycle;
\fill[rounded corners=7, gray!25] (3.6*\x, .6*\y) -- (4.4*\x, .6*\y) -- (4.4*\x, 1.4*\y) -- (3.6*\x, 1.4*\y) -- cycle;

\fill[rounded corners=7, gray!25] (5.6*\x, -.4*\y) -- (6.4*\x, -.4*\y) -- (6.4*\x, .4*\y) -- (5.6*\x, .4*\y) -- cycle;
\fill[rounded corners=7, gray!25] (4.6*\x, .6*\y) -- (5.4*\x, .6*\y) -- (5.4*\x, 1.4*\y) -- (4.6*\x, 1.4*\y) -- cycle;

\fill[rounded corners=7, gray!25] (6.6*\x, -.4*\y) -- (7.4*\x, -.4*\y) -- (7.4*\x, .4*\y) -- (6.6*\x, .4*\y) -- cycle;
\fill[rounded corners=7, gray!25] (5.6*\x, .6*\y) -- (6.4*\x, .6*\y) -- (6.4*\x, 1.4*\y) -- (5.6*\x, 1.4*\y) -- cycle;


\fill[rounded corners=7, gray!25] (.6*\x, 1.6*\y) -- (3.4*\x, 1.6*\y) -- (3.4*\x, 2.4*\y) --
(0.6*\x,2.4*\y) -- cycle;
\fill[rounded corners=7, gray!25] (2.4*\x,2.6*\y)  -- (2.4*\x,3.4*\y) -- (.6*\x,3.4*\y) -- (0.6*\x,2.6*\y) -- cycle;

\fill[rounded corners=7, gray!25] (3.6*\x, 1.6*\y) -- (4.4*\x, 1.6*\y) -- (4.4*\x, 2.4*\y) -- (3.6*\x, 2.4*\y) -- cycle;
\fill[rounded corners=7, gray!25] (2.6*\x, 2.6*\y) -- (3.4*\x, 2.6*\y) -- (3.4*\x, 3.4*\y) -- (2.6*\x, 3.4*\y) -- cycle;

\fill[rounded corners=7, gray!25] (4.6*\x, 1.6*\y) -- (5.4*\x, 1.6*\y) -- (5.4*\x, 2.4*\y) -- (4.6*\x, 2.4*\y) -- cycle;
\fill[rounded corners=7, gray!25] (3.6*\x, 2.6*\y) -- (4.4*\x, 2.6*\y) -- (4.4*\x, 3.4*\y) -- (3.6*\x, 3.4*\y) -- cycle;

\fill[rounded corners=7, gray!25] (5.6*\x, 1.6*\y) -- (6.4*\x, 1.6*\y) -- (6.4*\x, 2.4*\y) -- (5.6*\x, 2.4*\y) -- cycle;
\fill[rounded corners=7, gray!25] (4.6*\x, 2.6*\y) -- (5.4*\x, 2.6*\y) -- (5.4*\x, 3.4*\y) -- (4.6*\x, 3.4*\y) -- cycle;

\fill[rounded corners=7, gray!25] (.6*\x, 3.6*\y) -- (2.4*\x, 3.6*\y) -- (2.4*\x, 4.4*\y) -- 
(0.6*\x,4.4*\y) -- cycle;
\fill[rounded corners=7, gray!25] (1.4*\x,4.6*\y)  -- (1.4*\x,5.4*\y) -- (.6*\x,5.4*\y) -- (.6*\x,4.6*\y) -- cycle;

\fill[rounded corners=7, gray!25] (2.6*\x, 3.6*\y) -- (3.4*\x, 3.6*\y) -- (3.4*\x, 4.4*\y) -- (2.6*\x, 4.4*\y) -- cycle;
\fill[rounded corners=7, gray!25] (1.6*\x, 4.6*\y) -- (2.4*\x, 4.6*\y) -- (2.4*\x, 5.4*\y) -- (1.6*\x, 5.4*\y) -- cycle;

\fill[rounded corners=7, gray!25] (3.6*\x, 3.6*\y) -- (4.4*\x, 3.6*\y) -- (4.4*\x, 4.4*\y) -- (3.6*\x, 4.4*\y) -- cycle;
\fill[rounded corners=7, gray!25] (2.6*\x, 4.6*\y) -- (3.4*\x, 4.6*\y) -- (3.4*\x, 5.4*\y) -- (2.6*\x, 5.4*\y) -- cycle;

\fill[rounded corners=7, gray!25] (4.6*\x, 3.6*\y) -- (5.4*\x, 3.6*\y) -- (5.4*\x, 4.4*\y) -- (4.6*\x, 4.4*\y) -- cycle;
\fill[rounded corners=7, gray!25] (3.6*\x, 4.6*\y) -- (4.4*\x, 4.6*\y) -- (4.4*\x, 5.4*\y) -- (3.6*\x, 5.4*\y) -- cycle;

    \node[anchor = east] at (0.7,0) {Player~$U$ (variable~$\pi_0$):};
    \node[anchor = east] at (0.7,\y) {Player~$1$ (variable~$\pi_1$):};
    \node[anchor = east] at (0.7,2*\y) {Player~$U$ (variable~$\pi_2$):};
    \node[anchor = east] at (0.7,3*\y) {Player~$3$ (variable~$\pi_3$):};
    \node[anchor = east] at (0.7,4*\y) {Player~$U$ (variable~$\pi_4$):};
    \node[anchor = east] at (0.7,5*\y) {Player~$5$ (variable~$\pi_5$):};

\node at (7.75*\x,0) {$\cdots$};
\node at (6.75*\x,1*\y) {$\cdots$};
\node at (6.75*\x,2*\y) {$\cdots$};
\node at (5.75*\x,3*\y) {$\cdots$};
\node at (5.75*\x,4*\y) {$\cdots$};
\node at (4.75*\x,5*\y) {$\cdots$};

    \node at (\x,0) {$t_0^0$};
    \node at (\x,\y) {$t_1^0$};
    \node at (\x,2*\y) {$t_2^0$};
    \node at (\x,3*\y) {$t_3^0$};
    \node at (\x,4*\y) {$t_4^0$};
    \node at (\x,5*\y) {$t_5^0$};
    
    \node at (2*\x,0) {$t_0^1$};
    \node at (2*\x,\y) {$t_1^1$};
    \node at (2*\x,2*\y) {$t_2^1$};
    \node at (2*\x,3*\y) {$t_3^1$};
    \node at (2*\x,4*\y) {$t_4^1$};
    \node at (2*\x,5*\y) {$t_5^1$};
    
    \node at (3*\x,0) {$t_0^2$};
    \node at (3*\x,\y) {$t_1^2$};
    \node at (3*\x,2*\y) {$t_2^2$};
    \node at (3*\x,3*\y) {$t_3^2$};
    \node at (3*\x,4*\y) {$t_4^2$};
    \node at (3*\x,5*\y) {$t_5^2$};

    \node at (4*\x,0) {$t_0^3$};
    \node at (4*\x,\y) {$t_1^3$};
    \node at (4*\x,2*\y) {$t_2^3$};
    \node at (4*\x,3*\y) {$t_3^3$};
    \node at (4*\x,4*\y) {$t_4^3$};
    \node at (4*\x,5*\y) {$t_5^3$};

    \node at (5*\x,0) {$t_0^4$};
    \node at (5*\x,\y) {$t_1^4$};
    \node at (5*\x,2*\y) {$t_2^4$};
    \node at (5*\x,3*\y) {$t_3^4$};
    \node at (5*\x,4*\y) {$t_4^4$};
    
    \node at (6*\x,0) {$t_0^5$};
    \node at (6*\x,\y) {$t_1^5$};
    \node at (6*\x,2*\y) {$t_2^5$};
    
    \node at (7*\x,0) {$t_0^6$};





    

    \end{tikzpicture}
   } 
    \caption{The evolution of a play of $\game(\tsys, \varphi)$ for a sentence~$\phi$ with six variables. Each gray shape is a subround, consisting of a move of Player~$U$ or a move of one variable player. We have $\lag_5 = 1$, $\lag_4 = \lag_3 = 2$, $\lag_2 = \lag_1= 3$, and $\lag_0 = 4$, which corresponds to the number of blocks picked by the player in charge of variable~$\pi_i$ in round~$0$.}
    \label{fig_playevolution}
\end{figure}
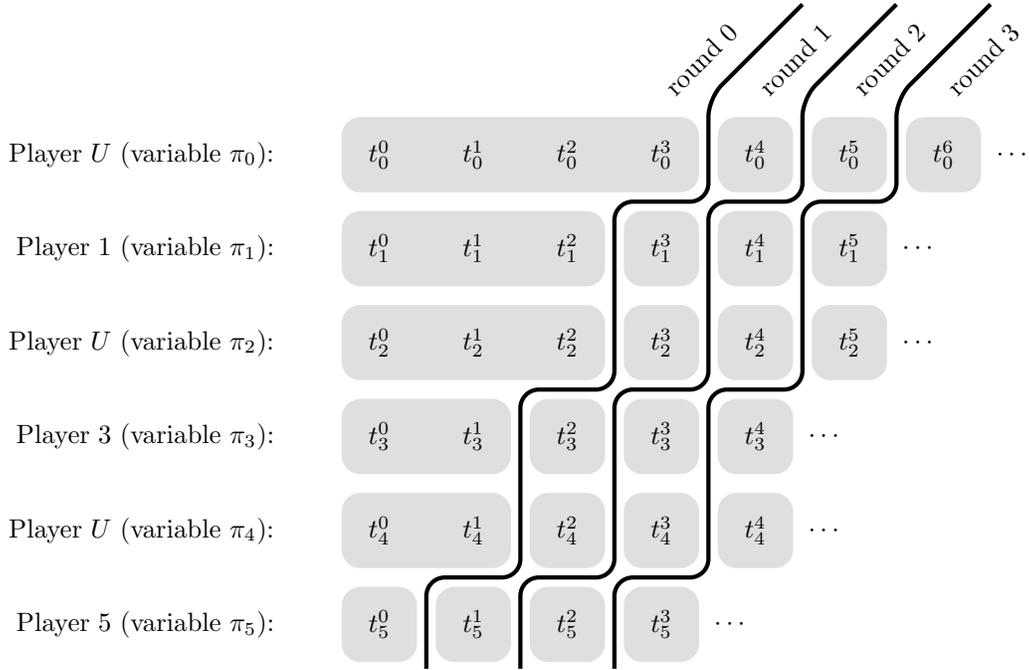

During a play of $\game(\tsys, \varphi)$ the players build traces~$t_0, t_1, \ldots, t_{k-1}$ defined as $t_i = t_i^0t_i^1t_i^2\cdots$.
We call $(t_0, t_1, \ldots, t_{k-1})$ the outcome of the play.
The coalition of variable players wins the play if $t_i \notin \traces(\tsys) $ for some even $i$ or if $\combine{t_0, t_1, \ldots, t_{k-1}} \in L(\aut_\psi^\tsys)$, i.e., the variable assignment mapping each $\pi_i$ to $t_i$ satisfies $\psi$ and each $t_i$ is in $\traces(\tsys)$.

As already alluded to above, the game described above must be a game of imperfect information to capture the fact that the Skolem function for an existentially quantified $\pi_i$ depends only on the universally quantified variables~$\pi_{j}$ with $j \in \set{0,2,\ldots, i-1}$.
Intuitively, we capture this by giving Player~$i$ access to all blocks picked in subrounds~$(r,j)$ with $j \in \set{0,2,\ldots, i-1}$, but hiding all other picks made by the players in subrounds~$(r,j)$ with $j \in \set{1,3,\ldots, i-2, i, i+1, i+2, \ldots, k-1}$. Note that Player~$i$ not having access to their own moves is not a restriction, as they can always be reconstructed, if necessary.

Formally, a strategy for Player~$i$ is a function~$\sigma_i$ mapping sequences of the form\footnote{In this equation, we use column vectors for a more compact presentation.}
{
\renewcommand{\vectordotswithtwoargsatstart}[3]{
\begin{pmatrix}
  #1 \\
  #2\\
  \vdots \\
  #3\\
\end{pmatrix}}
\renewcommand{\vectoroflengthtwo}[2]{\begin{pmatrix}
  #1 \\
  #2\\
\end{pmatrix}}
\begin{align}
\label{eq_stratinput}
\vectordotswithtwoargsatstart{t_0^0}{t_2^0}{t_{i-1}^0}
\vectordotswithtwoargsatstart{t_0^1}{t_2^1}{t_{i-1}^1}
\cdots 
\vectordotswithtwoargsatstart{t_0^{\lag_i+r}}{t_2^{\lag_i+r}}{t_{i-1}^{\lag_i+r}}
\vectordotswithtwoargsatstart{t_0^{\lag_i+r+1}}{t_2^{\lag_i+r+1}}{t_{i-3}^{\lag_i+r+1}}
\vectordotswithtwoargsatstart{t_0^{\lag_i+r+2}}{t_2^{\lag_i+r+1}}{t_{i-5}^{\lag_i+r+2}}
\cdots 
\vectoroflengthtwo{t_{0}^{\lag_i+r+\frac{i}{2}-1}}{t_{2}^{\lag_i+r+\frac{i}{2}-1}}
\left(t_{0}^{\lag_i+r+\frac{i}{2}}\right)
\end{align}
}
for $r \ge 0$ to a block (or a sequence of $\lag_i$ blocks for~$r=0$). The vectors getting shorter at the end is a manifestation the fact that the players in charge of variables~$\pi_j$ with smaller $j$ are  ahead of the players in charge of variables with larger $j'$ (see Figure~\ref{fig_playevolution}).

A (finite or infinite) play is consistent with $\sigma_i$, if the pick of Player~$i$ in each round~$r$ is the one prescribed by $\sigma_i$.
A collection~$(\sigma_i)_{i \in \set{1,3,\ldots, k-1}}$ of strategies, one for each variable player, is winning, if every play that is consistent with all $\sigma_i$ is won by the variable players.
We say that a strategy~$\sigma_i$ is finite-state, if it is implemented by a transducer that reads inputs as in Equation~(\ref{eq_stratinput}) (over some suitable finite alphabet) and produces an output block (or a sequence of $\lag_i$ blocks in round~$0$).

The following lemma shows that the existence of a winning collection of strategies characterizes the existence of computable Skolem functions.
Note that there is a slight mismatch, as the first implication requires the strategies to be finite-state, while the second implication only yields arbitrary strategies. This gap will be closed later.

\begin{lemma}
\label{lemma_correctness}\hfill
\begin{enumerate}
    \item \label{lemma_correctness_fromfsws2compskolem} If the coalition of variable players has a winning collection of finite-state strategies then $\tsys \models \varphi$ has computable Skolem functions.
    \item \label{lemma_correctness_fromcompskolem2ws} If $\tsys \models \varphi$ has computable Skolem functions, then the coalition of variable players has a winning collection of strategies.
\end{enumerate}
\end{lemma}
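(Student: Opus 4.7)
The plan is to prove the two directions separately, both by simulation, making the interaction between the game's block length $\ell$ and delays of Turing machines carry the argument.

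For the first implication, given a winning collection of finite-state strategies $(\sigma_i)_{i\in\set{1,3,\ldots,k-1}}$, each implemented by a transducer, we construct for each odd~$i$ a Turing machine $\tm_i$ computing a Skolem function~$f_i\colon \traces(\tsys)^{\size{U_i}}\to\traces(\tsys)$. The machine $\tm_i$ treats its input as the merge of the traces $t_0,t_2,\ldots,t_{i-1}$, consuming input in batches of $\ell$ merged symbols, which yields one $\ell$-letter block of every input trace. After accumulating $\lag_0+r$ such batches, it feeds $\sigma_i$ its staircase-shaped history (truncating the available blocks down to $\lag_j+r$ blocks for each $\pi_j$), collects the block(s) that $\sigma_i$ prescribes for $\pi_i$, and writes them to the output tape. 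The delay between input and produced output is the constant $\ell\cdot(\lag_0-\lag_i)$, independent of the round~$r$. To verify that $f_i$ is a Skolem function, fix any tuple $(t_0,t_2,\ldots,t_{k-2})\in\traces(\tsys)^{k/2}$ and consider the unique play of $\game(\tsys,\phi)$ in which Player~$U$ produces these traces block-by-block while every variable player follows~$\sigma_i$. This play is consistent with the winning strategies, hence won by the variable players; because every universal trace lies in $\traces(\tsys)$, the outcome must lie in $L(\aut_\psi^\tsys)$, so every $t_i$ for odd~$i$ is in $\traces(\tsys)$ and the resulting assignment satisfies~$\psi$. Well-definedness of $f_i$ as a function of only $(t_0,\ldots,t_{i-1})$ follows from imperfect information: $\sigma_i$ never consults blocks of $\pi_j$ for $j>i$, so Player~$U$'s choices for those variables cannot affect $\sigma_i$'s output.

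For the second implication, given computable Skolem functions $f_1,f_3,\ldots,f_{k-1}$, we observe that $\dom{f_i}=\traces(\tsys)^{\size{U_i}}$ is closed by Remark~\ref{rem_closed}, so Lemma~\ref{lem_tmboundeddelay} yields a bounded-delay Turing machine $\tm_i$ implementing~$f_i$; let $d$ be the maximum of these delays. The strategy $\sigma_i$ is defined round by round: on a history containing at least $\lag_i+1+r$ blocks of the shortest viewed universal trace (namely $\pi_{i-1}$) in round~$r$, Player~$i$ runs $\tm_i$ on the viewed prefixes and emits those blocks of $f_i(t_0,t_2,\ldots,t_{i-1})$ that have just become determined. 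The $\ell$-letter gap of lookahead per round absorbs the delay~$d$ whenever $\ell\ge d$, so the strategy always has enough input to commit to the next output block. Any play consistent with $(\sigma_i)_i$ has outcome $t_i=f_i(t_0,t_2,\ldots,t_{i-1})$ for every odd~$i$, and the Skolem property then gives that either some universally picked trace $t_j$ is outside $\traces(\tsys)$ (an automatic win for the variable players) or else every trace is in $\traces(\tsys)$ and the assignment satisfies~$\psi$, so the merged outcome lies in $L(\aut_\psi^\tsys)$.

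The main obstacle is the interplay between the block length~$\ell$ fixed by the game (coming from the equivalence relations~$\equivnew{i}$) and the bounded delay~$d$ provided by Lemma~\ref{lem_tmboundeddelay}: the construction in the second direction requires $\ell\ge d$. We expect to address this by choosing $\ell$ sufficiently large, since any value beyond the minimum from Lemma~\ref{lemma_equivifinite} still has the property that every longer word lies in an infinite $\equivnew{1}$-class. We also remark that the formal mismatch the statement flags---the second direction yields only an arbitrary strategy while the first requires a finite-state one---is to be closed later via the $\omega$-regular character of $\game(\tsys,\phi)$, which ensures finite-state winning strategies exist whenever winning strategies do.
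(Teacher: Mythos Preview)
Your argument for Item~\ref{lemma_correctness_fromfsws2compskolem} is essentially the paper's: simulate the finite-state strategy on the input blocks and output what it prescribes. That is fine.

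The gap is in Item~\ref{lemma_correctness_fromcompskolem2ws}. You correctly identify that your construction needs $\ell\ge d$, but your proposed fix---``choose $\ell$ sufficiently large''---cannot work as stated. The block length~$\ell$ is a fixed parameter of $\game(\tsys,\phi)$, determined solely by $\tsys$ and $\phi$ via the relations~$\equivnew{i}$; it is what makes the game a \emph{single, effectively constructible} object for the decidability argument. The delay bound~$d$, by contrast, depends on the particular computable Skolem functions you happen to start with, and these are not known when the game is built. So you cannot adjust $\ell$ to $d$. You could try to argue that the games for all $\ell'\ge \ell$ are equivalent, but the direction you would need---winning with large blocks implies winning with blocks of length~$\ell$---is exactly the hard direction and does not follow from any simple monotonicity.

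The paper resolves this differently, and the equivalence relations~$\equivnew{i}$ are not incidental but central. When Player~$U$ plays a block~$t_j^n$ of length~$\ell$, the strategy replaces it by an $\equivnew{1}$-equivalent word~$\tr{t}_j^n$ that is long enough (at least $d$ longer than what came before); this is possible precisely because every block lies in an \emph{infinite} $\equivnew{1}$-class. The Turing machine~$\tm_i$ is run on these stretched inputs, producing stretched outputs~$\tr{t}_i^n$. The definition of $\equivnew{i}$ then lets one pick short words~$t_i^n$ (of block length~$\ell$) such that the resulting tuples remain $\equivnew{i+1}$-equivalent to the stretched ones. Maintaining $\equivnew{k}$-equivalence block-by-block throughout, Lemma~\ref{lemma_equivkprops} transfers acceptance by $\aut_\psi^\tsys$ from the stretched play (which is governed by the Skolem functions) to the actual play in~$\game(\tsys,\phi)$. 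In short: the paper does not enlarge~$\ell$; it enlarges the \emph{representative words} inside the equivalence classes, and the whole point of defining~$\equivnew{1},\ldots,\equivnew{k}$ is to make this stretching-and-shrinking sound.
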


\begin{proof}
We first show Item~\ref{lemma_correctness_fromfsws2compskolem}. So, let $(\sigma_i)_{i \in \set{1,3,\ldots, k-1}}$ be a winning collection of finite-state strategies for the variable players.
We construct computable Skolem functions~$(f_i)_{i \in \set{1,3,\ldots, k-1}}$ witnessing $\tsys \models \varphi$.
So, fix some $i\in \set{1,3,\ldots, k-1}$.

The machine~$\tm_i$ computing $f_i$ works in iterations $n = 0,1,2,\ldots$ coinciding with the rounds of $\game(\tsys, \phi)$.
Its input is $\combine{t_0, t_2, \ldots, t_{i-1}}$ (encoding $\frac{i}{2}$ input traces as a single infinite word on the input tape),
where we split each $t_{j}$ into blocks~$t_j^0 t_j^1 t_j^2\cdots$.
Recall that the block length~$\ell$ is a constant, i.e., $\tm_i$ can read its input blockwise.

Now, in iteration~$0$, $\tm_i$ reads the first $\lag_i+\frac{i}{2}$ blocks of the input, which yields $\lag_i+\frac{i}{2}$ blocks of each $t_j$. 
These blocks can be used to simulate the moves of Player~$U$ in subrounds~$(0,j)$ for even $j < i$. Note that this does not require all blocks of the $t_j$ for $j > 0$. These have to be stored in the working tape for later use, as the reading tape is one-way.
The simulated moves by Player~$U$ can be fed into the finite-state implementation of $\sigma_i$, yielding blocks $t_i^{0}, t_i^{1}, \ldots, t_i^{\lag_i-1}$ as output.
The word $t_i^{0}t_i^{1}\cdots  t_i^{\lag_i-1}$ is then written to the output tape of $\tm_i$, which completes iteration~$0$.

In general, assume $\tm_i$ has completed iteration $n-1$ and now starts iteration~$n>0$.
This iteration begins with $\tm_i$ reading another block of the input, which yields another block of each $t_j$. 
The new block of $t_0$, and the oldest stored block for each $t_j$ with $j>0$ can be used to continue the simulated play (restricted to moves by Player~$U$ in subrounds for the variables~$\pi_j$ for $j \in \set{0,2,\ldots, i-1}$) by feeding them into the finite-state implementation of $\sigma_i$, yielding a block~$t$ as output. This block is then appended on the output tape. The unused new blocks of $t_j$ with $j>0$ are again stored on the working tape.
This ends iteration~$n$.

To simulate the play, $\tm_i$ can just store the whole play prefix on the working tape. 
To process the play prefix by the finite-state implementation of $\sigma_i$, $\tm_i$ can just store the whole run prefix on the working tape, although a more economical approach is be possible (see the proof of Theorem~\ref{thm:main} on Page~\pageref{page:proofofmainthm}).

Now, we show that the functions~$f_i$ computed by the $\tm_i$ constructed above are indeed Skolem functions witnessing $\tsys \models \varphi$.
To this end, let $\Pi$ with $\dom{\Pi} \supseteq \set{\pi_0, \pi_1, \ldots, \pi_{k-1}}$ be a variable assignment that is consistent with the $f_i$, i.e., each $\Pi(\pi_i)$ with even $i$ is in $\traces(\tsys)$ and each $\Pi(\pi_i)$ with odd $i$ is equal to $f_i(\Pi(\pi_0), \Pi(\pi_2),\ldots,\Pi(\pi_{i-1}))$.
We need to show that each $\Pi(\pi_i)$ for odd $i$ is in $\traces(\tsys)$ (i.e., the functions~$f_i$ are well-defined) and that $\Pi \models \psi$, i.e., $\combine{\Pi(\pi_0), \Pi(\pi_1),\ldots, \Pi(\pi_{k-1}) }\in L(\aut_\psi^\tsys)$. 

By construction, $(\Pi(\pi_0), \Pi(\pi_1),\ldots, \Pi(\pi_{k-1}))$ is the outcome of a play of $\game(\tsys, \phi)$ that is consistent with the $\sigma_i$ and therefore winning for the variable players.
As each $\Pi(\pi_i)$ with even $i$ is in $\traces(\tsys)$, we conclude $\combine{\Pi(\pi_0), \Pi(\pi_1),\ldots, \Pi(\pi_{k-1}) }\in L(\aut_\psi^\tsys)$, as required.
Note that this does also imply $\Pi(\pi_i)$ for odd $i$ is in $\traces(\tsys)$, as $L(\aut_\psi^\tsys)$ only contains tuples of traces from $\traces(\tsys)$.

Now, let us consider Item~\ref{lemma_correctness_fromcompskolem2ws}. 
Let $f_i \colon (\traces(\tsys))^{\frac{i+1}{2}} \rightarrow \traces(\tsys)$ for $i \in \set{1,3,\ldots, k-1}$ be computable Skolem functions witnessing~$\tsys\models\varphi$, say each $f_i$ is implemented by a Turing machine~$\tm_i$.
By Lemma~\ref{lem_tmboundeddelay}, each $\tm_i$ can be assumed to have bounded delay: there is a $d_i$ such that to compute the first $n$ letters of the output only $n+d_i$ letters of the input are read. 
Note that we can run such a Turing machine~$\tm_i$ with delay~$d_i$ on a finite input~$w$ of length~$n+d_i$ and obtain the first $n$ letters of the output $f_i(w')$ of every infinite~$w'$ that starts with the prefix~$w$.
We will apply this fact to simulate the $\tm_i$ on-the-fly on longer and longer prefixes. 

Also note that our definition of a function~$f$ being computed by a Turing machine~$\tm$ only requires it to compute the output~$f(w)$ for all $w \in \dom{f}$, but it can produce arbitrary (even finite) outputs for $w \notin \dom{f}$.
To simplify our construction, we assume here that each $\tm_i$ produces an infinite output for every input, even if it is not in the domain of $f_i$. 
This can be done w.l.o.g., as the $\tm_i$ have bounded delay~$d_i$: as soon as $\tm_i$ wants to access input letter~$n+d_i+1$ without having produced $n+1$ output letters so far (this can be detected, as $d_i$ is a constant), the run does not have delay~$d_i$, which implies that the input cannot be in $\dom{f_i}$. 
Hence, a designated state can be entered, which produces an arbitrary infinite output while ignoring the remaining input.
The resulting machine still has delay~$d_i$, but a complete domain.

Let~$d\in\nats$ be minimal such that each $\tm_i$ has delay at most $d$.
We inductively define a winning collection of strategies for the variable players.

\paragraph*{Round~\boldmath$0$.}

\subparagraph*{Subrounds~$(0,0)$ and $(0,1)$.}
Assume Player~$U$ picks~$t_0^0, t_0^1, \ldots,t_0^{\lag_0-1}$ in subround~$(0,0)$ to start a play.
We fix $\tr{t}_0^0 = t_0^0$ and fix $\tr{t}_0^n$ for $n \in \set{1, 2, \ldots, \lag_0-1}$ such that $\tr{t}_0^n\equivnew{1} t_0^n$ and $\size{\tr{t}_0^0\tr{t}_0^1\cdots\tr{t}_0^n} \ge \size{\tr{t}_0^0\tr{t}_0^1\cdots\tr{t}_0^{n-1}} + d$ for all such $n$. This is always possible, as the $\equivnew{1}$ equivalence class of each $t_0^n$ is infinite and therefore contains arbitrarily long words.

Let $\tr{t}_1^0 \tr{t}_1^1 \cdots \tr{t}_1^{\lag_0-2} $ be the output of $\tm_1$ when given the partial input~$\tr{t}_0^0\tr{t}_0^1\cdots\tr{t}_0^{\lag_0-1}$ such that 
$\size{\tr{t}_1^n} = \size{\tr{t}_0^n}$ for all $n$. This is well-defined by the choice of the length of the $\tr{t}_0^n$ and the fact that $\tm_1$ has delay~$d$. Note that $\tm_1$ might produce even more output on that input. Any such additional output is ignored in this subround.

As we have $t_0^n \equivnew{1} \tr{t}_0^n$ for all such $n$, there also exists a $t_1^n$ with $\size{t_1^n} = \size{t_0^n}$ such that $
\vectoroflengthtwo{t_0^n}{t_1^n} \equivnew{2} \vectoroflengthtwo{\tr{t}_0^n}{\tr{t}_1^n}$.
We define $\sigma_1$ such that it picks $t_1^0, t_1^1, \ldots, t_1^{\lag_0-2}$ in subround~$(0,1)$. As $\lag_0-2 = \lag_1+1$, these are $\lag_1$ many blocks, as required by the definition of $\game(\tsys, \phi)$. 

\subparagraph*{Subrounds~$(0,2)$ and $(0,3)$.}
    Now, assume Player~$U$ picks~$t_2^0, t_2^1, \ldots,t_2^{\lag_2-1}$ in subround~$(0,2)$.
We fix $\tr{t}_2^0 = t_2^0$ and then fix 
 $\tr{t}_2^n$ for $n \in \set{1 ,2, \ldots,  \lag_2-1}$ such that $ 
 \vectoroflengththree{\tr{t}_0^n}{\tr{t}_1^n}{\tr{t}_2^n}
  \equivnew{3} 
 \vectoroflengththree{t_0^n}{t_1^n}{t_2^n}
 $. This is possible, as we have $
\vectoroflengthtwo{t_0^n}{t_1^n} \equivnew{2} \vectoroflengthtwo{\tr{t}_0^n}{\tr{t}_1^n}$ for all such $n$.
Let $\tr{t}_3^0 \tr{t}_3^1 \cdots \tr{t}_3^{\lag_2-2} $ be the output of $\tm_3$ when given the partial input~$\combine{\tr{t}_0^0\tr{t}_0^1\cdots\tr{t}_0^{\lag_2-1}, \tr{t}_2^0\tr{t}_2^1\cdots\tr{t}_2^{\lag_2-1}} $ such that 
$\size{\tr{t}_3^n} = \size{\tr{t}_0^n}$ for all~$n$ (again, this is well-defined due to the choice of the length of the $\tr{t}_0^n$ and $\tm_3$ having delay~$d$, and might require to ignore some output).

As we have $ 
 \vectoroflengththree{\tr{t}_0^n}{\tr{t}_1^n}{\tr{t}_2^n}
  \equivnew{3} 
 \vectoroflengththree{t_0^n}{t_1^n}{t_2^n}
 $ for all such $n$, there also exists a $t_3^n$ with $\size{t_3^n} = \size{t_0^n}$ such that $
\vectordotswithtwoargsatstart{t_0^n}{t_1^n}{t_3^n} \equivnew{4} \vectordotswithtwoargsatstart{\tr{t}_0^n}{\tr{t}_1^n}{\tr{t}_3^n}$.
We define $\sigma_3$ such that it picks $t_3^0 ,t_3^1 ,\ldots, t_3^{\lag_2-2}$ in subround~$(0,3)$.
As $\lag_2-2 = \lag_3 -1$, these are $\lag_3$ many blocks, as required by the definition of $\game(\tsys, \phi)$. 

\subparagraph*{Subrounds~$(0,i-1)$ and $(0,i)$ for odd $i \in \set{5, 7, \ldots, k-1}$.}
Assume Player~$U$ picks
$t_{i-1}^0, t_{i-1}^1, \ldots,t_{i-1}^{\lag_{i-1}-1}$
in subround~$(0,i-1)$.
As before, we fix $\tr{t}_{i-1}^0 = t_{i-1}^0$ and then fix 
 $\tr{t}_{i-1}^n$ for $n \in \set{1,2, \ldots,  \lag_{i-1}-1}$ such that $ 
 \vectordotswithtwoargsatstart{\tr{t}_0^n}{\tr{t}_1^n}{\tr{t}_{i-1}^n}
  \equivnew{i} 
 \vectordotswithtwoargsatstart{t_{0}^n}{t_1^n}{t_{i-1}^n}
 $. This is possible, as $
\vectordotswithtwoargsatstart{t_0^n}{t_1^n}{t_{i-2}^n} \equivnew{i-1} \vectordotswithtwoargsatstart{\tr{t}_0^n}{\tr{t}_1^n}{\tr{t}_{i-2}^n}$ for all such $n$ is an invariant of our construction.

Let $\tr{t}_i^0 \tr{t}_i^1 \cdots \tr{t}_i^{\lag_{i-1}-2}$ be the output of $\tm_i$ when given the partial input
\[\combine{
\tr{t}_0^0\tr{t}_0^1\cdots\tr{t}_0^{\lag_{i-1}-1},\tr{t}_2^0\tr{t}_2^1\cdots\tr{t}_2^{\lag_{i-1}-1},  \ldots,
\tr{t}_{i-1}^0\tr{t}_{i-1}^1\cdots\tr{t}_{i-1}^{\lag_{i-1}-1}} \] such that 
$\size{\tr{t}_i^n} = \size{\tr{t}_0^n}$ for all $n$. 
As in the previous cases, this is well-defined.

As we have $ \vectordotswithtwoargsatstart{\tr{t}_0^n}{\tr{t}_1^n}{\tr{t}_{i-1}^n}
  \equivnew{i}
 \vectordotswithtwoargsatstart{t_{0}^n}{t_{1}^n}{t_{i-1}^n}
 $ for all such $n$, there also exists a $t_i^n$ with $\size{t_i^n} = \size{t_0^n}$ such that $\vectordotswithtwoargsatstart{t_0^n}{t_1^n}{t_i^n} \equivnew{i+1}
 \vectordotswithtwoargsatstart{\tr{t}_0^n}{\tr{t}_1^n}{\tr{t}_i^n}$ for all $n$, satisfying the invariant again.
We define $\sigma_i$ such that it picks $t_i^0, t_i^1,\ldots, t_i^{\lag_{i-1}-2} $ in subround~$(0,i)$.
As $\lag_{i-1}-2 = \lag_i -1$, these are $\lag_i$ many blocks, as required by the definition of $\game(\tsys, \phi)$. 

\paragraph*{Round~\boldmath$r>0$.}
Now, we consider a round~$r> 0$, assuming the $\sigma_i$ are already defined for all earlier rounds. 
The construction is very similar to the one for round~$0$, but simpler as each player (also Player~$U$!) only picks a single block in each subround~$(r,i)$ of round~$r$.

\subparagraph*{Subrounds~$(r,0)$ and $(r,1)$}
Assume Player~$U$ picks $t_0^{\lag_0-1+r}$ in subround~$(r,0)$.
We fix~$\tr{t}_0^{\lag_0-1+r}$ such that $t_0^{\lag_0-1+r} \equivnew{1} \tr{t}_0^{\lag_0-1+r}$ and 
$
\size{\tr{t}_0^0\tr{t}_0^1 \cdots \tr{t}_0^{\lag_0-1+r}} \ge \size{\tr{t}_0^0\tr{t}_0^1 \cdots \tr{t}_0^{\lag_0+r-2}} + d
. 
$
This is possible, as the $\equivnew{1}$ equivalence class of $t_0^{\lag_0-1+r}$ is infinite and therefore contains arbitrarily long words.

We run $\tm_1$ on $\tr{t}_0^0 \tr{t}_0^1 \cdots \tr{t}_0^{\lag_0-1+r}$ and obtain another block~$\tr{t}_1^{\lag_0+r-2}$. 
There is a $t_1^{\lag_0+r-2}$ with $\size{t_1^{\lag_0+r-2}} = \size{t_0^{\lag_0+r-2}}$ such that $
\vectoroflengthtwo{t_0^{\lag_0+r-2}}{t_1^{\lag_0+r-2}} \equivnew{2} \vectoroflengthtwo{\tr{t}_0^{\lag_0+r-2}}{\tr{t}_1^{\lag_0+r-2}}$, as we have $t_0^{\lag_0+r-2} \equivnew{1} \tr{t}_0^{\lag_0+r-2}$.
We define $\sigma_1$ such that it picks the block~$t_1^{\lag_0+r-2}$ in subround~$(r,1)$ (note that $\lag_0+r-2 = \lag_1-1+r$).

\subparagraph*{Subrounds~$(r,i-1)$ and $(r,i)$ for $i \in \set{3,5, \ldots, k-1}$.}
Now, assume Player~$U$ picks $t_{i-1}^{\lag_{i-1}-1+r}$ in subround~$(r,i-1)$.
We fix $\tr{t}_{i-1}^{\lag_{i-1}-1+r}$ 
such that
$
\vectordotswithtwoargsatstart{\tr{t}_0^{\lag_{i-1}-1+r}}
 {\tr{t}_1^{\lag_{i-1}-1+r}}
 {\tr{t}_{i-1}^{\lag_{i-1}-1+r}}
  \equivnew{i}
 \vectordotswithtwoargsatstart{t_{0}^{\lag_{i-1}-1+r}}
 {t_{1}^{\lag_{i-1}-1+r}}
 {t_{i-1}^{\lag_{i-1}-1+r}}
 $. This is possible, as $
\vectordotswithtwoargsatstart{\tr{t}_0^{\lag_{i-1}-1+r}}
 {\tr{t}_1^{\lag_{i-1}-1+r}}
 {\tr{t}_{i-2}^{\lag_{i-1}-1+r}}
  \equivnew{i-1} 
 \vectordotswithtwoargsatstart{t_{0}^{\lag_{i-1}-1+r}}
 {t_{1}^{\lag_{i-1}-1+r}}
 {t_{i-2}^{\lag_{i-1}-1+r}}$ is an invariant of our construction.

 We run $\tm_i$ on 
\[
     \combine{\tr{t}_0^0\tr{t}_0^1\cdots\tr{t}_0^{\lag_{i-1}-1+r},
     \tr{t}_2^0\tr{t}_0^1\cdots\tr{t}_2^{\lag_{i-1}-1+r},
      \ldots , 
    \tr{t}_{i-1}^0\tr{t}_{i-1}^1\cdots\tr{t}_{i-1}^{\lag_{i-1}-1+r}},
\]
 yielding another block~$\tr{t}_{i}^{\lag_{i-1}+r-2}$. 
 As \[\vectordotswithtwoargsatstart
{\tr{t}_{0}^{\lag_{i-1}+r-2}}
{\tr{t}_{1}^{\lag_{i-1}+r-2}}
{\tr{t}_{i-1}^{\lag_{i-1}+r-2}}
  \equivnew{i}
\vectordotswithtwoargsatstart
{{t_{0}^{\lag_{i-1}+r-2}}}
{{t_{1}^{\lag_{i-1}+r-2}}}
{{t_{i-1}^{\lag_{i-1}+r-2}}}
 ,\] there also exists a 
 $t_{i}^{\lag_{i-1}+r-2}$ with $\size{t_{i}^{\lag_{i-1}+r-2}} = \size{t_{0}^{\lag_{i-1}+r-2}}$
such that $
\vectordotswithtwoargsatstart
{{t_{0}^{\lag_{i-1}+r-2}}}
{{t_{1}^{\lag_{i-1}+r-2}}}
{{t_{i}^{\lag_{i-1}+r-2}}}
\equivnew{i+1}
\vectordotswithtwoargsatstart
{\tr{t}_{0}^{\lag_{i-1}+r-2}}
{\tr{t}_{1}^{\lag_{i-1}+r-2}}
{\tr{t}_{i}^{\lag_{i-1}+r-2}}
 $.
We define $\sigma_i$ such that it picks $t_{i}^{\lag_{i-1}+r-2}$ in subround~$(r,i)$ (note that $\lag_{i-1}+r-2 = \lag_i-1+r$).

This completes the definition of the~$\sigma_i$. Note that each $\sigma_i$ does indeed only depend on the blocks picked in subrounds~$(r,j)$ with $j \in \set{0,2,\ldots, i-1}$, i.e., $\sigma_i$ is indeed a strategy for Player~$i$ in $\game(\tsys, \varphi)$.

It remains to show that the $\sigma_i$ are a winning collection of strategies. 
To this end, let $(t_0, t_1, \ldots, t_{k-1})$ be an outcome of a play that is consistent with the $\sigma_i$. 
If a $t_i$ with even $i$ is not in $\traces(\tsys)$, then the variable players win immediately. 
So, assume each $t_i$ with even~$i$ is in $\traces(\tsys)$. 
Let $\tr{t}_0, \tr{t}_1,\ldots, \tr{t}_{k-1}$ be the traces constructed during the inductive definition of the $\sigma_i$.
By applying Remark~\ref{lemma_equivkprops}.\ref{lemma_equivkprops_tsys}, we obtain that each $\tr{t}_i$ with even $i$ is in $\traces(\tsys)$ as well.
Also, the $\tr{t}_i$ for odd $i$ satisfy $\tr{t}_i = f_i(\tr{t}_0, \tr{t}_2, \ldots, \tr{t}_{i-1})$ by construction, i.e., they are obtained by applying the Skolem functions.
Hence, the variable assignment mapping $\pi_i$ to $\tr{t}_i$ satisfies $\psi$, which implies that $\combine{\tr{t}_0, \tr{t}_1,\ldots, \tr{t}_{i-1}}$ is in $L(\aut_\psi^\tsys)$.

Now, as $\vectordotswithtwoargsatstart{\tr{t}_0^n}{\tr{t}_1^n}{\tr{t}_{k-1}^n} \equivnew{k} \vectordotswithtwoargsatstart{{t_0^n}}{{t_1^n}}{{t_{k-1}^n}}$ for all $n$, applying Lemma~\ref{lemma_equivkprops}.\ref{lemma_equivkprops_autk} yields that $\combine{{t}_0, {t}_1,\ldots, {t}_{i-1}}$ is in $L(\aut_\psi^\tsys)$ as well, i.e., the variable players do indeed win.
\end{proof}



Now, one can formalize $\game(\tsys, \varphi)$ as a multi-player graph game of (hierarchical) imperfect information. The existence of a winning collection of strategies is decidable for such games~\cite{bbb,DBLP:conf/focs/PnueliR90}. Furthermore, if there is a winning collection of strategies, then also a winning collection of finite-state strategies, which closes the gap in the statement of Lemma~\ref{lemma_correctness}: $\tsys \models \varphi$ has computable Skolem functions iff the coalition of variable players has a winning collection of finite-state strategies.
Furthermore, similarly to the proof of Lemma~\ref{lemma_correctness}.\ref{lemma_correctness_fromfsws2compskolem}, one can show that such finite-state strategies can even be implemented by \bddfts, thereby completing the proof of Theorem~\ref{thm:main}.

\subsection{The Concrete Game}
\label{subsec_mainproof}

After having shown that the abstract game~$\game(\tsys, \varphi)$ characterizes the existence of computable Skolem functions, we now model $\game(\tsys, \varphi)$ as a multi-player graph game of imperfect information using the notation of Berwanger et al.~\cite{bbb}.
In Subsection~\ref{subsec_distributedgames}, we introduce the necessary definitions before we model the game in Subsection~\ref{subsec_concretegame}.
The games considered by Berwanger et al.\ are concurrent games (i.e., the players make their moves simultaneously), while $\game(\tsys, \varphi)$ is turn-based, i.e., the players make their moves one after the other. 
Hence, we will also introduce some notation for the special case of turn-based games, which simplifies our modeling.
\subsubsection{Distributed Games}
\label{subsec_distributedgames}

\paragraph*{Game Graphs} 
Fix some set~$N = \set{1,\ldots, n}$ of players and a distinguished agent called Nature (which is not in $N$!). 
A profile is a list~$p = (p_1, \ldots, p_n)$ of elements~$p_i \in P_i$ for sets~$P_i$ that will be clear from context.
For each player~$i \in N$ we fix a finite set~$A^i$ of actions and a finite set~$B^i$ of observations.
A game graph~$G = (V, E, v_\initmark, (\beta^i)_{i\in N})$ consists of a finite set~$V$ of positions, an edge relation~$E \subseteq V \times A \times V$ representing simultaneous moves labeled by action profiles (i.e., $A = A^1 \times \cdots \times A^n$), an initial position~$v_\initmark \in V$, and a profile~$(\beta^i)_{i\in N}$ of observation functions~$\beta^i \colon V \rightarrow B^i$ that label, for  each player, the positions with observations.
We require that $E$ has no dead ends, i.e., for every $v \in V$ and every~$a \in A$ there is a $v' \in V$ with $(v,a,v') \in E$.

A game graph~$(V, E, v_\initmark, (\beta^i)_{i\in N})$ yields hierarchical information if there exists a total order~$\preceq$ over $N$ such that if $i \preceq j$ then for all $v,v' \in V$, $\beta^i(v) = \beta^i(v')$ implies $\beta^j(v) = \beta^j(v')$, i.e., if Player~$i$ cannot distinguish $v$ and $v'$, then neither can Player~$j$ for $i \preceq j$.

\paragraph*{Plays} 
Intuitively, a play starts at position~$v_\initmark \in V$ and proceeds in rounds. In a round at position~$v$, each Player~$i$ chooses simultaneously and independently an action~$a^i \in A^i$, then Nature chooses a successor position~$v'$ such that $(v,a,v') \in E$. Now, each player receives the observation~$\beta^i(v')$ and the next round is played at position~$v'$. 
Thus, a play of $G$ is an infinite sequence~$v_0 v_1 v_2 \cdots$ of vertices such that $v_0 = v_\initmark$ and for all $r \ge 0$ there is an $a_r \in A$ such that $(v_r, a_r, v_{r+1}) \in E$.

A history is a prefix~$v_0 v_1 \cdots v_r$ of a play. 
We denote the set of all histories by $\hist(G)$ and extend $\beta^i\colon V \rightarrow B^i$ to plays and histories by defining $\beta^i(v_0 v_1 v_2 \cdots) = \beta^i(v_1)\beta^i(v_2)\beta^i(v_3)\cdots$. Note that the observation of the initial position is discarded for technical reasons~\cite{bbb}.
We say two histories $h$ and $h'$ are indistinguishable to Player~$i$, denoted by $h \sim_i h'$, if $\beta^i(h) = \beta^i(h')$. 

\paragraph*{Strategies} 
A strategy for Player~$i$ is a mapping~$s^i \colon V^* \rightarrow A^i$ that satisfies $s^i(h) = s^i(h')$ for all $h,h'$ with $h \sim_i h'$ (i.e., the action selected by the strategy only depends on the observations of the history).
A play~$v_0 v_1 v_2 \cdots$ is consistent with $s^i$ if for every $r \ge 0$, there is an $a_r = (a^1, \ldots, a^n) \in A$ with $(v_r, a_r, v_{r+1}) \in E$ and $a^i = s^i(v_0 v_1 \cdots v_r)$.
A play is consistent with a strategy profile~$(s^1, \ldots,s^n )$ if it is consistent with each $s^i$.
The set of possible outcomes of a strategy profile is the set of all plays that are consistent with $s$.

A distributed game $\game = (G, W)$ consists of a game graph and a winning condition~$W \subseteq V^\omega$, where $V$ is the set of positions of $G$.
A play is winning if it is in $W$ and a strategy profile~$S$ is winning in $\game$ if all its outcomes are winning.

\paragraph*{Finite-state Strategies} 

Next, we define what it means for a strategy for Player~$i$ to be finite-state. So far, we used transducers, i.e., automata with output on transitions to implement strategies in a finitary manner.
From now on, we follow the definitions used by Berwanger et al.~\cite{bbb} and use Moore machines, i.e., finite automata with output on states. However, each Moore machine can be transformed into a transducer by \myquot{moving} the output from a state to all its outgoing transitions.

Let $A^i$ and $B^i$ be the actions and observations of Player~$i$. 
A Moore machine~$\moore = (M, m_\initmark, \upd,\nextmove)$ for Player~$i$ consists of a finite set~$M$ of memory states containing the initial memory state~$m_\initmark$, a memory update function~$\upd\colon M \times B^i \rightarrow M$, and a next-move function~$\nextmove\colon M\rightarrow A^i$.
We extend~$\upd$ to words over $B^i$ by defining $\upd(\epsilon) = m_\initmark$ and $\upd(b_0 b_1 \cdots b_r) = \upd(\upd(b_0b_1 \cdots b_{r-1}),b_r)$.
We say $\moore$ implements the strategy mapping $v_0 v_1 \cdots v_r$ to $\nextmove(\upd(\beta^i(v_0 v_1 \cdots v_r)))$. 
A strategy is finite-state, if it is implemented by some Moore machine.

\begin{proposition}[\cite{DBLP:conf/focs/PnueliR90,bbb}]
\label{prop_distributedgames}\hfill
\begin{enumerate}
    \item \label{prop_distributedgames_decidability}
        The following problem is decidable: Given a distributed game with $\omega$-regular winning condition, does it have a winning strategy profile?
    \item \label{prop_distributedgames_finitestate}
        A distributed game with $\omega$-regular winning condition has a winning strategy profile if and only if it has a winning profile of finite-state strategies.
\end{enumerate}
\end{proposition}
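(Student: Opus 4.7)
The plan is to reduce the distributed game with hierarchical imperfect information to a single perfect-information two-player game between the coalition and Nature, solve the latter by standard $\omega$-regular techniques, and then lift finite-state winning strategies back.

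First, I would exploit the hierarchical information order $\preceq$ guaranteed by the game graph (in the intended application, Player~$i_j$ observes exactly the blocks picked for outermore universally quantified variables, giving a natural total order on information). I would eliminate coalition members one by one, starting with the \emph{most} informed. To eliminate Player~$i_n$, I build a knowledge-based game graph whose positions are pairs $(v, K)$ where $K$ is the set of histories of the original game that the less-informed player immediately below cannot distinguish from the actual one; Player~$i_n$'s choice is replaced by a universal quantifier over action maps $K \to A^{i_n}$, which is accessible to the remaining players because $K$ is by construction computable from their observations (hierarchical information is crucial here). Iterating this construction $n-1$ times leaves a single player, the least-informed coalition head, playing a perfect-information game against Nature.

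Second, I would verify that the pulled-back winning condition remains $\omega$-regular: each elimination step is a product with a deterministic parity (or Safra-determinized Büchi) automaton tracking the subset of the original positions consistent with the observable history, and an accepting run in the reduced game projects back to a family of plays in the original game all satisfying the original condition. The resulting perfect-information $\omega$-regular game is decidable and admits finite-memory winning strategies for the coalition head. This gives Part~\ref{prop_distributedgames_decidability}.

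For Part~\ref{prop_distributedgames_finitestate}, I would decode a finite-state winning strategy in the reduced game into a profile of finite-state strategies in the original game. Each Player~$i_j$'s Moore machine carries, as memory, the memory of the coalition head's strategy together with a knowledge-tracking component that simulates, on the fly, exactly those subset constructions which Player~$i_j$ is allowed to observe. Because the construction only ever used information available to the less-informed player at each elimination step, this decoding respects the observation constraint $h \sim_{i_j} h' \Rightarrow s^{i_j}(h) = s^{i_j}(h')$, and the overall memory remains finite.

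The main obstacle, and the reason this was proved only in Pnueli--Rosner~\cite{DBLP:conf/focs/PnueliR90} for two players and in Berwanger et al.~\cite{bbb} in the multi-player hierarchical setting, is establishing that each subset construction genuinely preserves $\omega$-regularity of the winning condition and, simultaneously, that the knowledge required at each level is a \emph{function} of the corresponding observation sequence. Without the hierarchical assumption this fails and the problem becomes undecidable (Peterson--Reif); with it, the chain of constructions is well-defined but delicate, and I would invoke the cited results rather than reproving the full tower of projections.
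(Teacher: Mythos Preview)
The paper does not prove this proposition at all: it is stated as a citation to Pnueli--Rosner~\cite{DBLP:conf/focs/PnueliR90} and Berwanger et al.~\cite{bbb}, with no accompanying argument. So there is nothing in the paper to compare your sketch against.

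That said, your outline is broadly the standard route taken in the cited literature: exploit the hierarchical order to perform an iterated knowledge-based subset construction, collapsing the coalition to a single player in a perfect-information $\omega$-regular game, then pull back finite-state winning strategies. One point worth tightening: the direction and mechanics of the elimination are slightly garbled. When you eliminate the most-informed coalition member first, their choice should be quantified \emph{existentially} (they are on the coalition's side), and their strategy must be a function of \emph{their own} observation sequence, not of the knowledge set of a less-informed player. The usual presentation instead tracks, at each stage, the knowledge set of the player being eliminated and lets the remaining (less-informed) players pick, at each of their decision points, a map from that knowledge set to actions---which is sound precisely because hierarchical information guarantees the less-informed players' observations determine the relevant knowledge set. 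Your final paragraph already concedes that you would ultimately invoke the cited results, which is exactly what the paper does.
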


\paragraph*{Turn-based Game Graphs}
We say that a game graph~$G = (V, E, v_\initmark, (\beta^i)_{i\in N})$ is turn-based, if 
there is a function~$\owner \colon V \rightarrow N$ such that if $(v,a,v') \in E$, $(v,a',v'') \in E$, and the action profiles~$a$ and $a'$ having the same action for Player~$\owner(v)$ implies $v' = v''$. 
To simplify our notation, we label the edges leaving $v$ only by actions of Player~$\owner(v)$.
Thus, in a turn-based game graph, at every position~$v$ Player~$\owner(v)$ determines the possible next moves, and Nature selects one of them. Turn-based distributed games are distributed games whose game graphs are turn-based.

\subsubsection{Formalization of the Abstract Game}
\label{subsec_concretegame}

Now, we are finally ready to formalize the abstract game~$\game(\tsys,\phi)$ described in Section~\ref{sec_abstractgame} as a turn-based distributed game.

We begin by introducing notation for the positions of the game, which intuitively keep track of blocks picked by the players of $\game(\tsys,\phi)$ until they can be processed by an automaton recognizing the winning condition. 
Due to the delay between the choices by the different players, this requires some notation.

Recall that we have defined $\lag_i$ for even $i$ to be the number of blocks Player~$U$ picks in subround~$(0,i)$ for variable~$i$ and for odd $i$ to be the number of blocks Player~$i$ picks in subround~$(0,i)$ for variable~$i$.
Now, let 
\[
D = \set{(j,x) \mid j \in \set{0,1,\ldots,k-1} \text{ and } x \in \set{0,1,\ldots,\lag_i-1} }.
\]
A configuration is a partial function~$c\colon D\rightarrow \blocks$, where $\blocks 
 = (\pow{\ap})^\ell$ denotes the set of blocks.
Let $C$ denote the set of all configurations.
The following definitions are visualized in Figure~\ref{fig_playevolutionconcrete}.

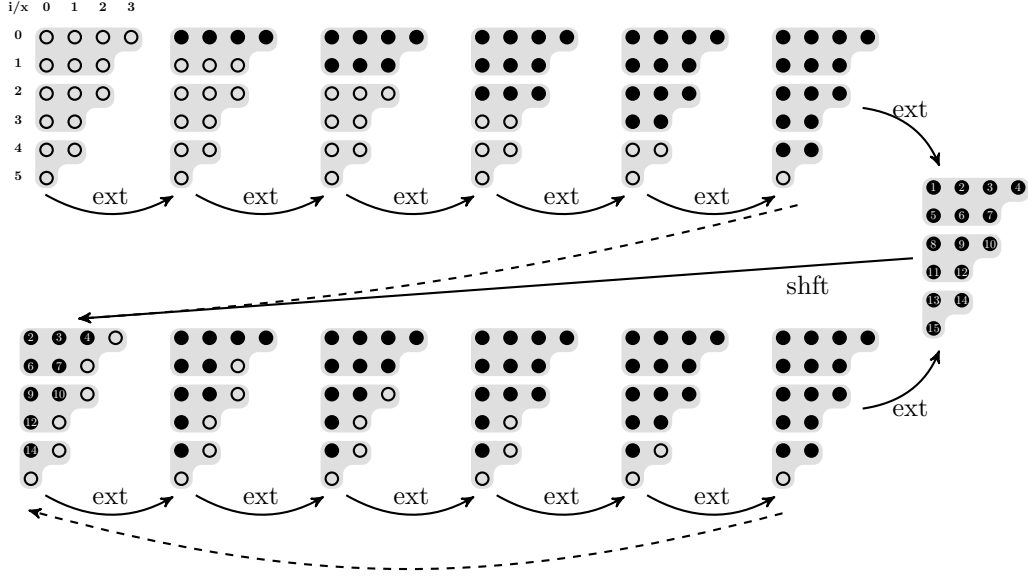
\begin{figure}[t]
    \centering

\begin{tikzpicture}[xscale=1.85,yscale=2]
    \node[anchor=south] (i0) at (0,0) {\scalebox{.5}{\begin{tikzpicture}[ultra thick]

\def\x{.75cm}
\def\y{-.75cm}
\fill[rounded corners=7, gray!25] (.6*\x, -.6*\y) -- (4.4*\x, -.6*\y) -- (4.4*\x, .25*\y) -- (3.4*\x,.25*\y)  -- (3.4*\x,1.1*\y) -- (.6*\x,1.1*\y) -- cycle;

\fill[rounded corners=7, gray!25] (.6*\x, 1.4*\y) -- (3.4*\x, 1.4*\y) -- (3.4*\x, 2.25*\y) -- (2.4*\x,2.25*\y)  -- (2.4*\x,3.1*\y) -- (.6*\x,3.1*\y) -- cycle;

\fill[rounded corners=7, gray!25] (.6*\x, 3.4*\y) -- (2.4*\x, 3.4*\y) -- (2.4*\x, 4.25*\y) -- (1.4*\x,4.25*\y)  -- (1.4*\x,5.1*\y) -- (.6*\x,5.1*\y) -- cycle;

   \node at (0, 0) {\bf 0};
    \node at (0, \y) {\bf 1};
    \node at (0, 2*\y) {\bf 2};
    \node at (0, 3*\y) {\bf 3};
    \node at (0, 4*\y) {\bf 4};
    \node at (0, 5*\y) {\bf 5};

    \node[anchor = center] at (0,-1.3*\y) {\bf i/x};

    \node at (\x,-\y) {\bf 0};
    \node at (2*\x,-\y) {\bf 1};
    \node at (3*\x,-\y) {\bf 2};
    \node at (4*\x,-\y) {\bf 3};
    



    \node[nind] at (\x,0) {};
    \node[nind] at (\x,\y) {};
    \node[nind] at (\x,2*\y) {};
    \node[nind] at (\x,3*\y) {};
    \node[nind] at (\x,4*\y) {};
    \node[nind] at (\x,5*\y) {};
    
    \node[nind] at (2*\x,0) {};
    \node[nind] at (2*\x,\y) {};
    \node[nind] at (2*\x,2*\y) {};
    \node[nind] at (2*\x,3*\y) {};
    \node[nind] at (2*\x,4*\y) {};
    
    \node[nind] at (3*\x,0) {};
    \node[nind] at (3*\x,\y) {};
    \node[nind] at (3*\x,2*\y) {};

    \node[nind] at (4*\x,0) {};

    \end{tikzpicture}}};
    \node[anchor=south] (i1) at (1,0) {\scalebox{.5}{\begin{tikzpicture}[ultra thick]

\def\x{.75cm}
\def\y{-.75cm}
\fill[rounded corners=7, gray!25] (.6*\x, -.6*\y) -- (4.4*\x, -.6*\y) -- (4.4*\x, .25*\y) -- (3.4*\x,.25*\y)  -- (3.4*\x,1.1*\y) -- (.6*\x,1.1*\y) -- cycle;

\fill[rounded corners=7, gray!25] (.6*\x, 1.4*\y) -- (3.4*\x, 1.4*\y) -- (3.4*\x, 2.25*\y) -- (2.4*\x,2.25*\y)  -- (2.4*\x,3.1*\y) -- (.6*\x,3.1*\y) -- cycle;

\fill[rounded corners=7, gray!25] (.6*\x, 3.4*\y) -- (2.4*\x, 3.4*\y) -- (2.4*\x, 4.25*\y) -- (1.4*\x,4.25*\y)  -- (1.4*\x,5.1*\y) -- (.6*\x,5.1*\y) -- cycle;




    \node[ind] at (\x,0) {};
    \node[nind] at (\x,\y) {};
    \node[nind] at (\x,2*\y) {};
    \node[nind] at (\x,3*\y) {};
    \node[nind] at (\x,4*\y) {};
    \node[nind] at (\x,5*\y) {};
    
    \node[ind] at (2*\x,0) {};
    \node[nind] at (2*\x,\y) {};
    \node[nind] at (2*\x,2*\y) {};
    \node[nind] at (2*\x,3*\y) {};
    \node[nind] at (2*\x,4*\y) {};
    
    \node[ind] at (3*\x,0) {};
    \node[nind] at (3*\x,\y) {};
    \node[nind] at (3*\x,2*\y) {};

    \node[ind] at (4*\x,0) {};

    \end{tikzpicture}}};
    \node[anchor=south] (i2) at (2,0) {\scalebox{.5}{\begin{tikzpicture}[ultra thick]

\def\x{.75cm}
\def\y{-.75cm}
\fill[rounded corners=7, gray!25] (.6*\x, -.6*\y) -- (4.4*\x, -.6*\y) -- (4.4*\x, .25*\y) -- (3.4*\x,.25*\y)  -- (3.4*\x,1.1*\y) -- (.6*\x,1.1*\y) -- cycle;

\fill[rounded corners=7, gray!25] (.6*\x, 1.4*\y) -- (3.4*\x, 1.4*\y) -- (3.4*\x, 2.25*\y) -- (2.4*\x,2.25*\y)  -- (2.4*\x,3.1*\y) -- (.6*\x,3.1*\y) -- cycle;

\fill[rounded corners=7, gray!25] (.6*\x, 3.4*\y) -- (2.4*\x, 3.4*\y) -- (2.4*\x, 4.25*\y) -- (1.4*\x,4.25*\y)  -- (1.4*\x,5.1*\y) -- (.6*\x,5.1*\y) -- cycle;




    \node[ind] at (\x,0) {};
    \node[ind] at (\x,\y) {};
    \node[nind] at (\x,2*\y) {};
    \node[nind] at (\x,3*\y) {};
    \node[nind] at (\x,4*\y) {};
    \node[nind] at (\x,5*\y) {};
    
    \node[ind] at (2*\x,0) {};
    \node[ind] at (2*\x,\y) {};
    \node[nind] at (2*\x,2*\y) {};
    \node[nind] at (2*\x,3*\y) {};
    \node[nind] at (2*\x,4*\y) {};
    
    \node[ind] at (3*\x,0) {};
    \node[ind] at (3*\x,\y) {};
    \node[nind] at (3*\x,2*\y) {};

    \node[ind] at (4*\x,0) {};

    \end{tikzpicture}}};
    \node[anchor=south] (i3) at (3,0) {\scalebox{.5}{\begin{tikzpicture}[ultra thick]

\def\x{.75cm}
\def\y{-.75cm}
\fill[rounded corners=7, gray!25] (.6*\x, -.6*\y) -- (4.4*\x, -.6*\y) -- (4.4*\x, .25*\y) -- (3.4*\x,.25*\y)  -- (3.4*\x,1.1*\y) -- (.6*\x,1.1*\y) -- cycle;

\fill[rounded corners=7, gray!25] (.6*\x, 1.4*\y) -- (3.4*\x, 1.4*\y) -- (3.4*\x, 2.25*\y) -- (2.4*\x,2.25*\y)  -- (2.4*\x,3.1*\y) -- (.6*\x,3.1*\y) -- cycle;

\fill[rounded corners=7, gray!25] (.6*\x, 3.4*\y) -- (2.4*\x, 3.4*\y) -- (2.4*\x, 4.25*\y) -- (1.4*\x,4.25*\y)  -- (1.4*\x,5.1*\y) -- (.6*\x,5.1*\y) -- cycle;




    \node[ind] at (\x,0) {};
    \node[ind] at (\x,\y) {};
    \node[ind] at (\x,2*\y) {};
    \node[nind] at (\x,3*\y) {};
    \node[nind] at (\x,4*\y) {};
    \node[nind] at (\x,5*\y) {};
    
    \node[ind] at (2*\x,0) {};
    \node[ind] at (2*\x,\y) {};
    \node[ind] at (2*\x,2*\y) {};
    \node[nind] at (2*\x,3*\y) {};
    \node[nind] at (2*\x,4*\y) {};
    
    \node[ind] at (3*\x,0) {};
    \node[ind] at (3*\x,\y) {};
    \node[ind] at (3*\x,2*\y) {};

    \node[ind] at (4*\x,0) {};

    \end{tikzpicture}}};
    \node[anchor=south] (i4) at (4,0) {\scalebox{.5}{\begin{tikzpicture}[ultra thick]

\def\x{.75cm}
\def\y{-.75cm}
\fill[rounded corners=7, gray!25] (.6*\x, -.6*\y) -- (4.4*\x, -.6*\y) -- (4.4*\x, .25*\y) -- (3.4*\x,.25*\y)  -- (3.4*\x,1.1*\y) -- (.6*\x,1.1*\y) -- cycle;

\fill[rounded corners=7, gray!25] (.6*\x, 1.4*\y) -- (3.4*\x, 1.4*\y) -- (3.4*\x, 2.25*\y) -- (2.4*\x,2.25*\y)  -- (2.4*\x,3.1*\y) -- (.6*\x,3.1*\y) -- cycle;

\fill[rounded corners=7, gray!25] (.6*\x, 3.4*\y) -- (2.4*\x, 3.4*\y) -- (2.4*\x, 4.25*\y) -- (1.4*\x,4.25*\y)  -- (1.4*\x,5.1*\y) -- (.6*\x,5.1*\y) -- cycle;




    \node[ind] at (\x,0) {};
    \node[ind] at (\x,\y) {};
    \node[ind] at (\x,2*\y) {};
    \node[ind] at (\x,3*\y) {};
    \node[nind] at (\x,4*\y) {};
    \node[nind] at (\x,5*\y) {};
    
    \node[ind] at (2*\x,0) {};
    \node[ind] at (2*\x,\y) {};
    \node[ind] at (2*\x,2*\y) {};
    \node[ind] at (2*\x,3*\y) {};
    \node[nind] at (2*\x,4*\y) {};
    
    \node[ind] at (3*\x,0) {};
    \node[ind] at (3*\x,\y) {};
    \node[ind] at (3*\x,2*\y) {};

    \node[ind] at (4*\x,0) {};

    \end{tikzpicture}}};
    \node[anchor=south] (i5) at (5,0) {\scalebox{.5}{\begin{tikzpicture}[ultra thick]

\def\x{.75cm}
\def\y{-.75cm}
\fill[rounded corners=7, gray!25] (.6*\x, -.6*\y) -- (4.4*\x, -.6*\y) -- (4.4*\x, .25*\y) -- (3.4*\x,.25*\y)  -- (3.4*\x,1.1*\y) -- (.6*\x,1.1*\y) -- cycle;

\fill[rounded corners=7, gray!25] (.6*\x, 1.4*\y) -- (3.4*\x, 1.4*\y) -- (3.4*\x, 2.25*\y) -- (2.4*\x,2.25*\y)  -- (2.4*\x,3.1*\y) -- (.6*\x,3.1*\y) -- cycle;

\fill[rounded corners=7, gray!25] (.6*\x, 3.4*\y) -- (2.4*\x, 3.4*\y) -- (2.4*\x, 4.25*\y) -- (1.4*\x,4.25*\y)  -- (1.4*\x,5.1*\y) -- (.6*\x,5.1*\y) -- cycle;




    \node[ind] at (\x,0) {};
    \node[ind] at (\x,\y) {};
    \node[ind] at (\x,2*\y) {};
    \node[ind] at (\x,3*\y) {};
    \node[ind] at (\x,4*\y) {};
    \node[nind] at (\x,5*\y) {};
    
    \node[ind] at (2*\x,0) {};
    \node[ind] at (2*\x,\y) {};
    \node[ind] at (2*\x,2*\y) {};
    \node[ind] at (2*\x,3*\y) {};
    \node[ind] at (2*\x,4*\y) {};
    
    \node[ind] at (3*\x,0) {};
    \node[ind] at (3*\x,\y) {};
    \node[ind] at (3*\x,2*\y) {};

    \node[ind] at (4*\x,0) {};

    \end{tikzpicture}}};
    \node[anchor=south] (f) at (6,-1) {\scalebox{.5}{\begin{tikzpicture}[ultra thick]

\def\x{.75cm}
\def\y{-.75cm}
\fill[rounded corners=7, gray!25] (.6*\x, -.6*\y) -- (4.4*\x, -.6*\y) -- (4.4*\x, .25*\y) -- (3.4*\x,.25*\y)  -- (3.4*\x,1.1*\y) -- (.6*\x,1.1*\y) -- cycle;

\fill[rounded corners=7, gray!25] (.6*\x, 1.4*\y) -- (3.4*\x, 1.4*\y) -- (3.4*\x, 2.25*\y) -- (2.4*\x,2.25*\y)  -- (2.4*\x,3.1*\y) -- (.6*\x,3.1*\y) -- cycle;

\fill[rounded corners=7, gray!25] (.6*\x, 3.4*\y) -- (2.4*\x, 3.4*\y) -- (2.4*\x, 4.25*\y) -- (1.4*\x,4.25*\y)  -- (1.4*\x,5.1*\y) -- (.6*\x,5.1*\y) -- cycle;

    \node[ind] at (\x,0) {};
    \node[ind] at (\x,\y) {};
    \node[ind] at (\x,2*\y) {};
    \node[ind] at (\x,3*\y) {};
    \node[ind] at (\x,4*\y) {};
    \node[ind] at (\x,5*\y) {};
    
    \node[ind] at (2*\x,0) {};
    \node[ind] at (2*\x,\y) {};
    \node[ind] at (2*\x,2*\y) {};
    \node[ind] at (2*\x,3*\y) {};
    \node[ind] at (2*\x,4*\y) {};
    
    \node[ind] at (3*\x,0) {};
    \node[ind] at (3*\x,\y) {};
    \node[ind] at (3*\x,2*\y) {};

    \node[ind] at (4*\x,0) {};

    \node[wt] at (\x,0) {1};
    \node[wt] at (\x,\y) {5};
    \node[wt] at (\x,2*\y) {8};
    \node[wt] at (\x,3*\y) {11};
    \node[wt] at (\x,4*\y) {13};
    \node[wt] at (\x,5*\y) {15};
    
    \node[wt] at (2*\x,0) {2};
    \node[wt] at (2*\x,\y) {6};
    \node[wt] at (2*\x,2*\y) {9};
    \node[wt] at (2*\x,3*\y) {12};
    \node[wt] at (2*\x,4*\y) {14};
    
    \node[wt] at (3*\x,0) {3};
    \node[wt] at (3*\x,\y) {7};
    \node[wt] at (3*\x,2*\y) {10};

    \node[wt] at (4*\x,0) {4};

    \end{tikzpicture}}};
    \node[anchor=south] (s0) at (0,-2) {\scalebox{.5}{\begin{tikzpicture}[ultra thick]

\def\x{.75cm}
\def\y{-.75cm}
\fill[rounded corners=7, gray!25] (.6*\x, -.6*\y) -- (4.4*\x, -.6*\y) -- (4.4*\x, .25*\y) -- (3.4*\x,.25*\y)  -- (3.4*\x,1.1*\y) -- (.6*\x,1.1*\y) -- cycle;

\fill[rounded corners=7, gray!25] (.6*\x, 1.4*\y) -- (3.4*\x, 1.4*\y) -- (3.4*\x, 2.25*\y) -- (2.4*\x,2.25*\y)  -- (2.4*\x,3.1*\y) -- (.6*\x,3.1*\y) -- cycle;

\fill[rounded corners=7, gray!25] (.6*\x, 3.4*\y) -- (2.4*\x, 3.4*\y) -- (2.4*\x, 4.25*\y) -- (1.4*\x,4.25*\y)  -- (1.4*\x,5.1*\y) -- (.6*\x,5.1*\y) -- cycle;




    \node[ind] at (\x,0) {};
    \node[ind] at (\x,\y) {};
    \node[ind] at (\x,2*\y) {};
    \node[ind] at (\x,3*\y) {};
    \node[ind] at (\x,4*\y) {};
    \node[nind] at (\x,5*\y) {};
    
    \node[ind] at (2*\x,0) {};
    \node[ind] at (2*\x,\y) {};
    \node[ind] at (2*\x,2*\y) {};
    \node[nind] at (2*\x,3*\y) {};
    \node[nind] at (2*\x,4*\y) {};
    
    \node[ind] at (3*\x,0) {};
    \node[nind] at (3*\x,\y) {};
    \node[nind] at (3*\x,2*\y) {};

    \node[nind] at (4*\x,0) {};

    \node[wt] at (\x,0) {2};
    \node[wt] at (\x,\y) {6};
    \node[wt] at (\x,2*\y) {9};
    \node[wt] at (\x,3*\y) {12};
    \node[wt] at (\x,4*\y) {14};
    
    \node[wt] at (2*\x,0) {3};
    \node[wt] at (2*\x,\y) {7};
    \node[wt] at (2*\x,2*\y) {10};

    \node[wt] at (3*\x,0) {4};

    \end{tikzpicture}}};
    \node[anchor=south] (s1) at (1,-2) {\scalebox{.5}{\begin{tikzpicture}[ultra thick]

\def\x{.75cm}
\def\y{-.75cm}
\fill[rounded corners=7, gray!25] (.6*\x, -.6*\y) -- (4.4*\x, -.6*\y) -- (4.4*\x, .25*\y) -- (3.4*\x,.25*\y)  -- (3.4*\x,1.1*\y) -- (.6*\x,1.1*\y) -- cycle;

\fill[rounded corners=7, gray!25] (.6*\x, 1.4*\y) -- (3.4*\x, 1.4*\y) -- (3.4*\x, 2.25*\y) -- (2.4*\x,2.25*\y)  -- (2.4*\x,3.1*\y) -- (.6*\x,3.1*\y) -- cycle;

\fill[rounded corners=7, gray!25] (.6*\x, 3.4*\y) -- (2.4*\x, 3.4*\y) -- (2.4*\x, 4.25*\y) -- (1.4*\x,4.25*\y)  -- (1.4*\x,5.1*\y) -- (.6*\x,5.1*\y) -- cycle;




    \node[ind] at (\x,0) {};
    \node[ind] at (\x,\y) {};
    \node[ind] at (\x,2*\y) {};
    \node[ind] at (\x,3*\y) {};
    \node[ind] at (\x,4*\y) {};
    \node[nind] at (\x,5*\y) {};
    
    \node[ind] at (2*\x,0) {};
    \node[ind] at (2*\x,\y) {};
    \node[ind] at (2*\x,2*\y) {};
    \node[nind] at (2*\x,3*\y) {};
    \node[nind] at (2*\x,4*\y) {};
    
    \node[ind] at (3*\x,0) {};
    \node[nind] at (3*\x,\y) {};
    \node[nind] at (3*\x,2*\y) {};

    \node[ind] at (4*\x,0) {};

    \end{tikzpicture}}};
    \node[anchor=south] (s2) at (2,-2) {\scalebox{.5}{\begin{tikzpicture}[ultra thick]

\def\x{.75cm}
\def\y{-.75cm}
\fill[rounded corners=7, gray!25] (.6*\x, -.6*\y) -- (4.4*\x, -.6*\y) -- (4.4*\x, .25*\y) -- (3.4*\x,.25*\y)  -- (3.4*\x,1.1*\y) -- (.6*\x,1.1*\y) -- cycle;

\fill[rounded corners=7, gray!25] (.6*\x, 1.4*\y) -- (3.4*\x, 1.4*\y) -- (3.4*\x, 2.25*\y) -- (2.4*\x,2.25*\y)  -- (2.4*\x,3.1*\y) -- (.6*\x,3.1*\y) -- cycle;

\fill[rounded corners=7, gray!25] (.6*\x, 3.4*\y) -- (2.4*\x, 3.4*\y) -- (2.4*\x, 4.25*\y) -- (1.4*\x,4.25*\y)  -- (1.4*\x,5.1*\y) -- (.6*\x,5.1*\y) -- cycle;




    \node[ind] at (\x,0) {};
    \node[ind] at (\x,\y) {};
    \node[ind] at (\x,2*\y) {};
    \node[ind] at (\x,3*\y) {};
    \node[ind] at (\x,4*\y) {};
    \node[nind] at (\x,5*\y) {};
    
    \node[ind] at (2*\x,0) {};
    \node[ind] at (2*\x,\y) {};
    \node[ind] at (2*\x,2*\y) {};
    \node[nind] at (2*\x,3*\y) {};
    \node[nind] at (2*\x,4*\y) {};
    
    \node[ind] at (3*\x,0) {};
    \node[ind] at (3*\x,\y) {};
    \node[nind] at (3*\x,2*\y) {};

    \node[ind] at (4*\x,0) {};

    \end{tikzpicture}}};
    \node[anchor=south] (s3) at (3,-2) {\scalebox{.5}{\begin{tikzpicture}[ultra thick]

\def\x{.75cm}
\def\y{-.75cm}
\fill[rounded corners=7, gray!25] (.6*\x, -.6*\y) -- (4.4*\x, -.6*\y) -- (4.4*\x, .25*\y) -- (3.4*\x,.25*\y)  -- (3.4*\x,1.1*\y) -- (.6*\x,1.1*\y) -- cycle;

\fill[rounded corners=7, gray!25] (.6*\x, 1.4*\y) -- (3.4*\x, 1.4*\y) -- (3.4*\x, 2.25*\y) -- (2.4*\x,2.25*\y)  -- (2.4*\x,3.1*\y) -- (.6*\x,3.1*\y) -- cycle;

\fill[rounded corners=7, gray!25] (.6*\x, 3.4*\y) -- (2.4*\x, 3.4*\y) -- (2.4*\x, 4.25*\y) -- (1.4*\x,4.25*\y)  -- (1.4*\x,5.1*\y) -- (.6*\x,5.1*\y) -- cycle;



    
        \node[ind] at (\x,0) {};
    \node[ind] at (\x,\y) {};
    \node[ind] at (\x,2*\y) {};
    \node[ind] at (\x,3*\y) {};
    \node[ind] at (\x,4*\y) {};
    \node[nind] at (\x,5*\y) {};
    
    \node[ind] at (2*\x,0) {};
    \node[ind] at (2*\x,\y) {};
    \node[ind] at (2*\x,2*\y) {};
    \node[nind] at (2*\x,3*\y) {};
    \node[nind] at (2*\x,4*\y) {};
    
    \node[ind] at (3*\x,0) {};
    \node[ind] at (3*\x,\y) {};
    \node[ind] at (3*\x,2*\y) {};

    \node[ind] at (4*\x,0) {};

    \end{tikzpicture}}};
    \node[anchor=south] (s4) at (4,-2) {\scalebox{.5}{\begin{tikzpicture}[ultra thick]

\def\x{.75cm}
\def\y{-.75cm}
\fill[rounded corners=7, gray!25] (.6*\x, -.6*\y) -- (4.4*\x, -.6*\y) -- (4.4*\x, .25*\y) -- (3.4*\x,.25*\y)  -- (3.4*\x,1.1*\y) -- (.6*\x,1.1*\y) -- cycle;

\fill[rounded corners=7, gray!25] (.6*\x, 1.4*\y) -- (3.4*\x, 1.4*\y) -- (3.4*\x, 2.25*\y) -- (2.4*\x,2.25*\y)  -- (2.4*\x,3.1*\y) -- (.6*\x,3.1*\y) -- cycle;

\fill[rounded corners=7, gray!25] (.6*\x, 3.4*\y) -- (2.4*\x, 3.4*\y) -- (2.4*\x, 4.25*\y) -- (1.4*\x,4.25*\y)  -- (1.4*\x,5.1*\y) -- (.6*\x,5.1*\y) -- cycle;



    
        \node[ind] at (\x,0) {};
    \node[ind] at (\x,\y) {};
    \node[ind] at (\x,2*\y) {};
    \node[ind] at (\x,3*\y) {};
    \node[ind] at (\x,4*\y) {};
    \node[nind] at (\x,5*\y) {};
    
    \node[ind] at (2*\x,0) {};
    \node[ind] at (2*\x,\y) {};
    \node[ind] at (2*\x,2*\y) {};
    \node[ind] at (2*\x,3*\y) {};
    \node[nind] at (2*\x,4*\y) {};
    
    \node[ind] at (3*\x,0) {};
    \node[ind] at (3*\x,\y) {};
    \node[ind] at (3*\x,2*\y) {};

    \node[ind] at (4*\x,0) {};

    \end{tikzpicture}}};
    \node[anchor=south] (s5) at (5,-2) {\scalebox{.5}{\begin{tikzpicture}[ultra thick]

\def\x{.75cm}
\def\y{-.75cm}
\fill[rounded corners=7, gray!25] (.6*\x, -.6*\y) -- (4.4*\x, -.6*\y) -- (4.4*\x, .25*\y) -- (3.4*\x,.25*\y)  -- (3.4*\x,1.1*\y) -- (.6*\x,1.1*\y) -- cycle;

\fill[rounded corners=7, gray!25] (.6*\x, 1.4*\y) -- (3.4*\x, 1.4*\y) -- (3.4*\x, 2.25*\y) -- (2.4*\x,2.25*\y)  -- (2.4*\x,3.1*\y) -- (.6*\x,3.1*\y) -- cycle;

\fill[rounded corners=7, gray!25] (.6*\x, 3.4*\y) -- (2.4*\x, 3.4*\y) -- (2.4*\x, 4.25*\y) -- (1.4*\x,4.25*\y)  -- (1.4*\x,5.1*\y) -- (.6*\x,5.1*\y) -- cycle;




        \node[ind] at (\x,0) {};
    \node[ind] at (\x,\y) {};
    \node[ind] at (\x,2*\y) {};
    \node[ind] at (\x,3*\y) {};
    \node[ind] at (\x,4*\y) {};
    \node[nind] at (\x,5*\y) {};
    
    \node[ind] at (2*\x,0) {};
    \node[ind] at (2*\x,\y) {};
    \node[ind] at (2*\x,2*\y) {};
    \node[ind] at (2*\x,3*\y) {};
    \node[ind] at (2*\x,4*\y) {};
    
    \node[ind] at (3*\x,0) {};
    \node[ind] at (3*\x,\y) {};
    \node[ind] at (3*\x,2*\y) {};

    \node[ind] at (4*\x,0) {};

    \end{tikzpicture}}};

    \path[->]
    ($(i0.south) + (-.2,.02)$) edge[bend right] node[above] {$\extend$} ($(i1.south west) + (.1,.02)$)
    ($(i1.south) + (-.2,.02)$) edge[bend right] node[above] {$\extend$} ($(i2.south west) + (.1,.02)$)
    ($(i2.south) + (-.2,.02)$) edge[bend right] node[above] {$\extend$} ($(i3.south west) + (.1,.02)$)
    ($(i3.south) + (-.2,.02)$) edge[bend right] node[above] {$\extend$} ($(i4.south west) + (.1,.02)$)
    ($(i4.south) + (-.2,.02)$) edge[bend right] node[above] {$\extend$} ($(i5.south west) + (.1,.02)$)
    ($(i5.east) + (-.2,0)$) edge[bend left] node[above] {$\extend$} ($(f.north) + (-.25,0)$)
    ($ (i5.south)+(-.2,-.05)$) edge[dashed,bend left=5] node[above] {} (s0.north)
     (f.west) edge node[below,very near start] {$\free$} (s0.north)
    ($(s0.south) + (-.2,.02)$) edge[bend right] node[above] {$\extend$} ($(s1.south west) + (.1,.02)$)
    ($(s1.south) + (-.2,.02)$) edge[bend right] node[above] {$\extend$} ($(s2.south west) + (.1,.02)$)
    ($(s2.south) + (-.2,.02)$) edge[bend right] node[above] {$\extend$} ($(s3.south west) + (.1,.02)$)
    ($(s3.south) + (-.2,.02)$) edge[bend right] node[above] {$\extend$} ($(s4.south west) + (.1,.02)$)
    ($(s4.south) + (-.2,.02)$) edge[bend right] node[above] {$\extend$} ($(s5.south west) + (.1,.02)$)
    ($(s5.east) + (-.2,0)$) edge[bend right] node[below] {$\extend$} ($(f.south) + (-.25,0)$)
    ($(s5.south) + (-.3,-.1) $) edge[dashed,bend left=15] node[above] {} ($(s0.south) + (-.33,-.075) $)
    ;

\end{tikzpicture}
    
    \caption{Illustrating configurations for a formula with six variables, where a filled circle denotes an element in the domain, and an unfilled circle an element that is not in the domain. The upper row shows initial $i$-configurations for $i=0,1,2,\ldots, 5 $ (from left to right), the lower row shows looping $i$-configurations for $i=0,1,2,\ldots, 5 $ (from left to right), and the configuration on the right in between the rows is full. Solid arrows show the effect of extending and shifting configurations. The shifting operation is illustrated using the numbers in the circles.
    Finally, in a play of $\game(\tsys,\phi)$ the configurations stored in positions follow the solid arrows, but take the dashed shortcuts avoiding full configurations.}
    \label{fig_playevolutionconcrete}
\end{figure}

We only need certain types of configurations~$c$ for our construction. We say that $c$ is
\begin{itemize}
    \item a full configuration if $\dom{c} = D$, 
    \item an initial $i$-configuration (for $i \in \set{0,1,\ldots, k-1}$) if 
    \[
    \dom{c} =  \set{(j,x) \mid j \in \set{0,1,\ldots,i-1} \text{ and } x \in \set{0,1,\ldots,\lag_j-1} },
    \]
    and
    \item a looping $i$-configuration (again for $i \in \set{0,1,\ldots, k-1}$) if 
    \begin{align*}
        \dom{c} = {}&{} \set{(j,x) \mid j \in \set{0,1,\ldots,i-1} \text{ and } x \in \set{0,1,\ldots,\lag_j-1} } \cup\\
        {}&{} \set{(j,x) \mid j \in \set{i,i+1,\ldots,k-1} \text{ and } x \in \set{0,1,\ldots,\lag_j-2} }.
    \end{align*}
\end{itemize}
Note that for $i = k-1$, both the definition of initial and looping $i$-configuration coincides, as we have $\lag_{k-1} = 1$.
Hence, in the following, we will just speak of $\mathmbox{(k-1)}$-configurations whenever convenient.

Given an initial $i$-configuration~$c$ and a sequence~$\blockseq = \block_0, \block_1, \ldots, \block_{\lag_i-1}$ of $\lag_i$ blocks, we define $\extend(c,\blockseq)$ to be the configuration~$c'$ defined as
\[
c'(j,x) = \begin{cases}
c(j,x) &\text{ if } (j,x) \in \dom{c},\\
\block_x &\text{ if } j = i,\\
\text{undefined} &\text{ otherwise.}
\end{cases}
\]
Furthermore, given a looping $i$-configuration~$c$ and a block~$\block$, we define $\extend(c,\block)$ to be the configuration~$c'$ defined as
\[
c'(j,x) = \begin{cases}
c(j,x) &\text{ if } (j,x) \in \dom{c},\\
\block &\text{ if } j = i \text{ and } x = \lag_i-1,\\
\text{undefined} &\text{ otherwise.}
\end{cases}
\]
Finally, given a full configuration~$c$, we define $\free(c)$ to be the configuration~$c'$ defined as
\[
c'(j,x) = \begin{cases}
c(j,x+1) &\text{ if } x < \lag_j-1,\\
\text{undefined}&\text{ otherwise.}
\end{cases}
\]

The following remark collects how these operations update initial and looping configurations. 

\begin{remark}\hfill
\begin{enumerate}
    \item If $c$ is an initial $i$-configuration for $i < k-1$, then $\extend(c,\blockseq)$ is an initial $\mathmbox{(i+1)}$-configuration.
    \item If $c$ is a $(k-1)$-configuration, then $\extend(c,\block)$ is a full configuration.
    \item If $c$ is a full configuration, then $\free(c)$ is a looping $0$-configuration.
    \item If $c$ is a looping $i$-configuration for $i < k-1$, then $\extend(c,\block)$ is a looping $\mathmbox{(i+1)}$-configuration.
\end{enumerate}
\end{remark}

Finally, let $\autp = (Q, (\pow{\ap})^k, q_\initmark,\delta,\col)$ be a deterministic parity automaton accepting the language of words of the form~$\combine{t_0, t_1, \ldots, t_{k-1}} \in ((\pow{\ap})^k)^\omega$ such that either $t_i \notin \traces(\tsys) $ for some even $i$ or if $\combine{t_0, t_1, \ldots, t_{k-1}} \in L(\aut_\psi^\tsys)$ (i.e., accepting the winning outcomes of plays of $\game(\tsys,\phi)$).

Now, we are able to formally define $\game(\tsys, \varphi)$.
It will be played by the players in $N = \set{1,3,\ldots, k-1}$ (ignoring, for the sake of readability, the fact that $N$ is not of the form~$\set{1,2,\ldots, n}$ for some $n$, as required by the definitions in Subsection~\ref{subsec_distributedgames}). Furthermore, the role of Player~$U$ will be played by Nature.

We define the set of positions to contain all tuples~$(i,c,q)$ where 
$i \in \set{0,1,\ldots, k-1}$, $c$ is an (initial or looping) $i$-configuration, and $q$ is a state of $\autp$, together with a sink state~$\sink$.
The initial position is $(0, c_\bot, q_\initmark)$ where $c_\bot$ is the configuration with empty domain (which is the unique initial $0$-configuration).

We define the action set for Player~$i$ (for odd $i$) as $\blocks^{\lag_i} \cup \blocks$, where actions in $\blocks^{\lag_i}$ are intended for round~$0$ and those in $\blocks$ are intended for all other rounds. If the wrong action is used, then the sink state will be reached.
Next, we define the function~$\owner$ determining which player picks an action to continue a play:
we have $\owner(i,c,q) = i$ for odd $i$, $\owner(i,c,q) = 1$ for even $i$ (we will soon explain how Player~$U$ moves are simulated even though Player~$1$ owns the corresponding positions), and $\owner(\sink) = 1$. 

The set~$E$ of edges is defined as follows (recall that we label edges by actions of a single player, as we define a turn-based game):
\begin{itemize}
    \item We begin by modelling the moves of Player~$U$. Recall that in a turn-based distributed game, Nature resolves the nondeterminism left after the player who is in charge at that positions has picked an action. Thus, we simulate a move of Player~$U$ by giving the position to (say) Player~$1$. Then, we define the edges such that Player~$1$'s move is irrelevant, but the nondeterminism models the choice of Player~$U$.
    
    Formally, for $i \in \set{0,2,\ldots, k-2}$, an initial $i$-configuration~$c$, and a state~$q$ of $\autp$, we have the edge~$((i,c,q),a,(i+1,\extend(c,\blockseq),q))$ for every action~$a$ of Player~$1$ and every $\blockseq \in \blocks^{\lag_i}$: No matter which action~$a$ Player~$1$ picks, Nature can for each possible~$\blockseq$ pick a successor that extends~$c$ by $\blockseq$. This indeed simulates the move of Player~$U$ in subround~$(0,i)$.

    \item For $i \in \set{1,3, \ldots, k-3}$, an initial $i$-configuration~$c$, and a state~$q$ of $\autp$, we have the edge~$((i,c,q),\blockseq,(i+1, \extend(c,\blockseq),q))$ for each $\blockseq \in  \blocks^{\lag_i}$ (this simulates the move of Player~$i$ in subround~$(0,i)$) as well as the edge~$((i,c,q),\block,\sink)$ for each $\block \in  \blocks$ (Player~$i$ may not pick a single block in subround~$(0,i)$ if $i < k-1$).
    Note that there is no nondeterminism to resolve for Nature, as there is a unique successor position for each action.

    \item For a $(k-1)$-configuration~$c$ and state~$q$ of $\autp$, we have the edge~$((\mathmbox{k-1},c,q),\block,(0,\free(c'), q'))$ for each $\block \in \blocks$ (recall that we have $\lag_{k-1} = 1$ here), where $c' = \extend(c,\block)$ and $q'$ is the state reached by $\autp$ when processing $
    \combine{c'(0,0), c'(1,0), \ldots, c'(k-1,0)}$ from $q$.

    \item For $i \in \set{0,2,\ldots, k-2}$, a looping $i$-configuration~$c$, and a state~$q$ of $\autp$, we have the edge~$((i,c,q),a,(i+1,\extend(c,\block),q))$ for every action~$a$ of Player~$1$ and every $\block \in \blocks$: No matter which action~$a$ Player~$1$ picks, Nature can for each possible~$\block$ pick a successor that extends~$c$ by $\block$. This simulates the move of Player~$U$ in a subround~$(r,i+1)$ for $r>0$ using the same mechanism as described above for moves of Player~$U$ in initial configurations.

    \item For $i \in \set{1,3, \ldots, k-3}$, a looping $i$-configuration~$c$, and a state~$q$ of $\autp$, we have the edge~$((i,c,q),\block,(i+1, \extend(c,\block),q))$ for each $\block \in  \blocks$ (this simulates the move of Player~$i$ in subround~$(r,i)$ for $r > 0$) as well as the edge~$((i,c,q),\blockseq,\sink)$ for each $\blockseq \in  \blocks^{\lag_i}$ (Player~$i$ may not pick a sequence of blocks in a subround~$(r,i)$).
    Again, there is no nondeterminism to resolve for Nature, as there is a unique successor position for each action.

    \item For completeness, we have the edge~$(\sink,a,\sink)$ for every action~$a$ of Player~$1$.
\end{itemize}

It remains to define the observation functions~$\beta^i$ as $\beta^i(i,c,q) = \restrict{c}{i}$ where $\restrict{c}{i}$ is the configuration defined as 
\[
(\restrict{c}{i})(j,x) = \begin{cases}
    c(j,x) &\text{ if } j \in \set{0,2, \ldots, i-1},\\
    \text{undefined} &\text{ otherwise,}
\end{cases}
\]
i.e., Player~$i$ can only observe the blocks picked for the variables~$\pi_j$ with even $j < i$.
For completeness, we also define the observation of the sink state~$\sink$ to be $\bot$, where $\bot \notin C$. 
Then, the order~$1 \succeq 3 \succeq \cdots \succeq k-1$ witnesses that the game graph yields hierarchical information.

This completes the definition of the game graph. To complete the definition of the game we define the winning condition as follows:
Let $(i_0,c_0,q_0)(i_1,c_1,q_1)(i_2,c_2,q_2)\cdots$ be a play and let $(i_0,c_0,q_0)(i_k,c_k,q_k)(i_{2k},c_{2k},q_{2k})\cdots$ be the subsequence of all (initial and looping) $0$-configurations.
Recall that the state~$q_{rk}$ in such a position (for $r>0$) is obtained by processing some~$\combine{b_0, \ldots, b_{k-1}}$ from $q_{(r-1)k}$ , where the $b_i$ are blocks picked by the players. 
We say $(i_0,c_0,q_0)(i_1,c_1,q_1)(i_2,c_2,q_2)\cdots$ is in the winning condition, if $\col(q_0)\col(q_k)\col(q_{2k})\cdots$ satisfies the parity condition, which is an $\omega$-regular winning condition. In particular, no winning play may visit the sink~$\sink$.

\begin{remark}
The (concrete) distributed game constructed here is a formalization of the abstract game~$\game(\tsys, \phi)$ described in Section~\ref{sec_abstractgame}. In particular, a winning collection of (finite-state) strategies for the coalition of players in the abstract game corresponds to a winning (finite-state) strategy profile in the concrete game and vice versa.

Hence, whenever convenient below, we do not distinguish between the concrete and the abstract game.
\end{remark}

Now, our main theorem (Theorem~\ref{thm:main}) is a direct consequence of Proposition~\ref{prop_distributedgames} and Lemma~\ref{lemma_correctness}.
Recall that the theorem states that the problem \myquot{Given a transition system~$\tsys$ and a \hyltl sentence~$\phi$ with $\tsys \models \phi$, is $\tsys \models \phi$ witnessed by computable Skolem functions?} is decidable and that, if the answer is yes, our algorithm computes \bddfts implementing such Skolem functions.

\begin{proof}
\label{page:proofofmainthm}
Due to Lemma~\ref{lemma_correctness} and Proposition~\ref{prop_distributedgames}.\ref{prop_distributedgames_finitestate}, the following statements are equivalent:
\begin{itemize}
    \item $\tsys \models\varphi$ has computable Skolem functions.
    \item $\game(\tsys,\varphi)$ has a winning strategy profile.
    \item $\game(\tsys,\varphi)$ has a winning profile of finite-state strategies.
\end{itemize}
The last statement can be decided effectively due to Proposition~\ref{prop_distributedgames}.\ref{prop_distributedgames_decidability}. Thus, the existence of computable Skolem functions is decidable.

Now, assume $\tsys\models\varphi$ has computable Skolem functions. Due to the equivalence above, (the concrete) $\game(\tsys,\varphi)$ has a winning profile~$(s^1, s^3, \ldots, s^{k-1})$ of finite-state strategies. 
We show by induction over~$i$ how the finite-state strategies~$s^i$ can be turned into bounded-delay transducers~$\transd_i$ computing Skolem functions witnessing $\tsys \models \varphi$. The proof follows closely the analogous results shown in Lemma~\ref{lemma_correctness}.\ref{lemma_correctness_fromfsws2compskolem} for Turing machine computable Skolem functions.

Let us fix some $i\in\set{1,3,\ldots, k-1}$ and let $\moore_i$ be a Moore machine for Player~$i$ implementing $s^i$. Recall that $\moore_i$ reads observations of Player~$i$ (configurations of the form~$\restrict{c}{i}$ for initial and looping configurations) and returns actions of Player~$i$. 

On the other hand, $\transd_i$ reads an input~$\combine{t_0, t_2, \ldots, t_{i-1}} \in ((\pow{\ap})^{\frac{i}{2}})^\omega$ where we split each $t_j$ into blocks $t_j = t_j^0t_j^1t_j^2\cdots$.

We construct $\transd_i$ so that it works in two phases, an initialization phase and a looping phase, that is repeated ad infinitum.
We begin by describing the initialization. 
It begins by $\transd_i$ reading $\lag_0$ blocks from each $t_j$.
These blocks can now be assembled into a sequence of observations of Player~$i$ corresponding to the play prefix of $\game(\tsys, \phi)$ in which Player~$U$ picks these blocks. Note that this requires to process each such observation twice, as Player~$i$'s observation does not get updated, when Player~$j$ for odd $j < i$ makes a move.
All unused blocks are stored in the state space of $\transd$. 
Now, $\transd_i$ simulates the run of the Moore machine~$\moore_i$ implementing $s^i$ on this sequence of observations, yielding an action~$a^i$. 
As $s^i$ is winning, this action is a sequence~$\blockseq = b_0, b_1, \ldots, b_{\lag_i-1}$ of blocks.
Then, $\transd_i$ outputs~$b_0b_1\cdots b_{\lag_i-1}$. 
Now, we process the last observation another $k-i$ times with $\moore_i$, simulating the moves for the remaining variables (which are hidden from Player~$i$ which implies the observation is unchanged).
This concludes the initialization phase.

The looping phase begins with $\transd_i$ reading another block from each $t_j$. 
Again, these blocks and the ones stored in the state space can be assembled into observations of Player~$i$ corresponding to a continuation of the simulated play prefix in which Player~$U$ pick these blocks. Then, the run of the Moore machine~$\moore_i$ implementing $s^i$ can be continued, yielding an action~$a^i$.
This is now a block~$b$, which is output by $\transd_i$.
Again, we process the last observation just assembled another $k-i$ times to simulate the moves for the remaining variables, which concludes one looping phase. 
Note that the delay of $\transd_i$ is bounded by~$k\cdot\ell$, where $\ell$ is the block length. 

By storing blocks from the input that have been read but not yet used in observations, by discarding blocks no longer needed, and by keeping track of the state the simulated run of $\moore_i$ ends in, this behaviour can indeed be implemented using a finite state-space. 
We leave the tedious, but straightforward, formal definition of $\transd_i$ to the reader.
\end{proof}

\subsection{Complexity Analysis}
\label{subsec:complexityanalysis}

In this subsection, we determine the complexity of our algorithm. 
Our benchmark here is the complexity of the model-checking problem for \hyltl, which is \tower-complete. More precisely, checking whether a given transition system~$\tsys$ satisfies  a given formula~$\varphi$ is complete for nondeterministic space bounded by a $k$-fold exponential in $\size{\varphi}$, where $k$ is the number of alternations in $\varphi$~\cite{FinkbeinerRS15,Rabe16Diss}.

In the following, we bound the size of $\game(\tsys,\varphi)$ constructed in Subsection~\ref{subsec_concretegame} and then apply known complexity results for solving distributed games, there giving an upper bound on the complexity of deciding whether $\tsys\models\varphi$ is witnessed by computable Skolem functions.

Recall that we start the construction of $\game(\tsys,\varphi)$ with a transition system~$\tsys$ and a \hyltl sentence
$
\phi = \forall \pi_0 \exists \pi_1 \cdots \forall \pi_{k-2} \exists \pi_{k-1}.\ \psi.
$
First, we construct the Büchi automaton~$\aut_\psi^\tsys$ recognizing the language
\[
  \{ \combine{\Pi(\pi_0), \ldots, \Pi(\pi_{k-1})} \mid \Pi(\pi_i) \in \traces(\tsys) \text{ for all $ 0 \le i < k$ and }(\traces(\tsys),\Pi) \models \psi \}.
\]
Due to Remark~\ref{remark_automataconstruction}, we can bound the size of $\aut_\psi^\tsys$ by 
\[
\size{\psi}\cdot 2^{\size{\psi}}\cdot \size{\tsys}^k \le 2^{\log (\size{\psi}\cdot 2^{\size{\psi}}\cdot \size{\tsys}^k)} = 2^{\log (\size{\psi}) + \size{\psi} + k\cdot\log(\size{\tsys})} \le 2^{\bigo(\size{\psi} \cdot \size{\tsys})},
\]
i.e., exponentially in $\size{\psi} \cdot \size{\tsys}$.

Now, let $n$ denote the size of $\aut_\psi^\tsys$ and $s$ the size of $\tsys$. 
Then, $\equivnew{k}$ has index at most~$3^{n^2} \cdot 2^{ks^2} \le 2^{\bigo(n^2 + ks^2)}$: 
There are $3$ choices for each pair~$(p,q)$ of states of $\aut_\psi^\tsys$, i.e., 
\begin{enumerate}
    \item there is a run between $p$ and $q$ that visits at least one accepting state,
    \item there is a run between $p$ and $q$, but none that visits an accepting state, and
    \item there is no run between $p$ and $q$ at all,
\end{enumerate}
resulting in the factor~$3^{n^2}$, and there are two choices for each $j \in \set{0,1,\ldots, k-1}$ and each pair~$(u,v)$ of vertices of $\tsys$, i.e.,
\begin{enumerate}
    \item there is a path of $\tsys$ from $u$ to $v$ and 
    \item there is no path of $\tsys$ from $u$ to $v$,
\end{enumerate}
resulting in the factor~$2^{ks^2}$.

Furthermore, every equivalence class of $\equivnew{k}$, which is a language of finite words over the alphabet~$(\pow{\ap})^k$, is recognized by a DFA with at most
\[(3^n)^n \cdot (2^s)^{s\cdot k}=  3^{n^2}\cdot 2^{ks^2}  \le 2^{\bigo(n^2 + ks^2)}\]
states: 
For each state~$q$ of $\aut_\psi^\tsys$ the DFA keeps track of which states are reachable from $q$ (with and without having seen an accepting state) by processing the input word, which requires the product of $n$ modified powerset automata derived from $\aut_\psi^\tsys$, each with $3^n$ states.
Furthermore, for each component~$j$ and each vertex~$v$ of $\tsys$, it also keeps track of which vertices are reachable from $v$ by paths labeled with the $j$-th component of the input word, again requiring $s$ powerset automata derived from $\tsys$, each with $2^s$ states.
The DFA accepts if exactly the right states and vertices (i.e., those uniquely identifying the equivalence class) are reachable.

Now, let $\equivnew{i+1}$ for $0 < i < k$ have index~$\ind$ and let each of its equivalence classes be accepted by a DFA of size at most~$\autsize$.
Then, the reasoning used to prove Lemma~\ref{lemma_equivifinite} shows that $\equivnew{i}$ has index at most~$2^{\ind}$ and that each equivalence class of $\equivnew{i}$ is recognized by a DFA with at most~$\autsize^\ind$ states, as membership in an $\equivnew{i}$ equivalence class is determined by membership and non-membership in $\equivnew{i+1}$ equivalence classes. This can be checked by the product of $\ind$ many DFA for the equivalence classes (or their complements) of $\equivnew{i+1}$. Note that we are working with DFA, so complementation is for free.

Now, one can show by an induction that the index of $\equivnew{i}$ is bounded by a $(k-i+2)$-fold exponential in $\size{\psi}\cdot\size{\tsys}$.
Using this, a second induction shows that each equivalence class of $\equivnew{i}$ is recognized by a DFA whose size is bounded by a $(k-i+3)$-fold exponential in $\size{\psi}\cdot\size{\tsys}$. 
This relies on the fact that $(2^{x})^y$ is equal to $2^{xy}$, which implies that $\autsize^\ind$ is \myquot{just} exponentially larger than $\autsize$, even though both $\ind$ and $\autsize$ are in general towers of exponentials.

Now, we can bound $\ell$, which is defined so that every~$w$ in a finite equivalence class of $\equivnew{1}$ satisfies~$\size{w} < \ell$. 
As a DFA with $n$ states recognizing a finite language can only accept words of length at most $n-1$, we can bound $\ell$ by the size of the DFA accepting the equivalence classes of $\equivnew{1}$, i.e., by a $(k+2)$-fold exponential in $\size{\psi}\cdot \size{\tsys}$.

The number of vertices of $\game(\tsys,\varphi)$ is $k \cdot \size{C} \cdot \size{\autp}$
where $C$ is the set of configurations and where $\autp$ is a deterministic parity automaton recognizing the language of words of the form~$\combine{t_0, t_1, \ldots, t_{k-1}} \in ((\pow{\ap})^k)^\omega$ such that either $t_i \notin \traces(\tsys) $ for some even $i$ or if $\combine{t_0, t_1, \ldots, t_{k-1}} \in L(\aut_\psi^\tsys)$.

The number of configurations can be bounded by 
\[\size{C} \le \size{\blocks}^{\size{D}} \le \left((\size{\pow{\ap}})^{\ell}\right)^{k^2},\]
i.e., by a $(k+3)$-fold exponential in $\size{\psi}\cdot \size{\tsys}$.
Furthermore, the size of $\autp$ can be bounded doubly-exponentially in $\size{\varphi}\cdot \size{\tsys}$, as it can be constructed as the product of a deterministic parity automaton that is equivalent to the nondeterministic Büchi automaton $\aut_\psi^\tsys$ and $\frac{k}{2}$ deterministic parity automata for the language~$(\pow{\ap})^\omega \setminus \traces(\tsys)$. 
Altogether, the number of vertices of $\game(\tsys,\varphi)$ is bounded by a $(k+3)$-fold exponential in $\size{\psi}\cdot \size{\tsys}$.

As distributed games with hierarchical information can be solved in time that is bounded by an $(n+c)$-fold exponential in the number of states (where $n$ is the number of players and where $c$ is a small constant that depends on the type of winning condition)~\cite{DBLP:conf/focs/PnueliR90}, we conclude that $\game(\tsys,\varphi)$ can be solved in time that is bounded by a 
$(\frac{k}{2}+c)$-fold exponential in the number of vertices of $\game(\tsys,\varphi)$.

Thus, the running time of our algorithm determining whether $\tsys \models \varphi$ is witnessed by computable Skolem functions is bounded by a $(k + \frac{k}{2}+c')$-fold exponential in the size of $\size{\psi}\cdot \size{\tsys}$, where $c'$ is again a small constant.

\section{Related Work}
\label{sec_relatedwork}

As explained in the introduction, computing counterexamples for debugging is the most important application of model-checking. In the framework of \ltl model-checking, a counterexample is a single trace of the system that violates the specification. 
Such a counterexample is typically obtained by running a model-checking algorithm (which are in fact based on searching for such counterexamples). 
Variations of the problem include bounded model-checking~\cite{bmc}, which searches for \myquot{short} counterexamples. 
Also, counterexample-guided abstraction refinement~\cite{cegar} and bounded synthesis~\cite{bosy} rely on counterexample computation.

Counterexamples have not only been studied in the realm of linear-time logics, but also for many other frameworks, e.g., for \forallctl~\cite{46}, for \ctl~\cite{61,77}, for probabilistic temporal logics~\cite{62a,62}, and for discrete-time Markov models~\cite{38}. 

Furthermore, counterexamples have also been studied in the realm of \hyltl model-checking.
Horak et al.~\cite{hypervisana} developed \hypervis, a webtool which provides interactive visualizations of a given model, specification, and counterexample computed by the \hyltl model-checker~\mchyper~\cite{FinkbeinerRS15}.
In complementary work, Coenen et al.~\cite{hyperexp} present a causality-based explanation method for \hyltl counterexamples, again computed by \mchyper.
However, the counterexamples computed by \mchyper are just sets of traces that violate the formula. More specifically, \mchyper only considers the outermost universal quantifiers and returns a variable assignment to those.
This is obviously complete for formulas with quantifier-prefix~$\forall^*$, i.e., without existential quantifiers, but not for more general formulas.
In fact, this approach ignores the dynamic dependencies between universally and existentially quantified variables that is captured by Skolem functions, which we analyze here.
Finally, let us also mention that counterexamples are the foundation of the bounded synthesis algorithm for $\forall^*$\hyltl specifications~\cite{hysy}. 
In this setting, it is again sufficient to only consider sets of traces and not general Skolem functions.

Finally, Beutner and Finkbeiner~\cite{planning} have presented an (incomplete, but sound) reduction from \hyltl model-checking to nondeterministic multi-agent planning so that if the resulting planning problem admits a plan, then the original model-checking problem is satisfied. 
Such plans can be understood as Skolem functions for the existentially quantified variables. 

\section{Conclusion}
\label{sec_conc}

Based on the maxim ``counterexamples/explanations are Skolem functions for the existentially quantified trace variables'', we have shown how to explain why a given transition system satisfies a given \hyltl formula or, equivalently, to provide counterexamples in case the system does not satisfy the formula.
We consider arbitrary computable Skolem functions as explanations.
However, this leads to incompleteness as not every $\tsys \models \varphi$ is witnessed by computable Skolem functions.
Nevertheless, we have shown that the existence of computable explanations is decidable, and that they are effectively computable (whenever they exist).

Recall that the runtime of our algorithm is bounded by a $(k+\frac{k}{2}+c)$-exponential while \hyltl model-checking is complete for nondeterministic space bounded by a $k$-fold exponential in $\size{\varphi}$, where $k$ is the number of alternations in $\varphi$~\cite{FinkbeinerRS15,Rabe16Diss}.
In future work we aim to determine whether computing counterexamples/explanations is inherently harder than model-checking.

After having laid the theoretical foundations, it remains to investigate under which circumstances such explanations can be useful in the verification work-flow. 
For restricted settings, this line of work has been studied as discussed in Section~\ref{sec_relatedwork}. We propose to continue this line of work for more expressive fragments of \hyltl and explanations.
However, for fragments with few quantifier alternations, the situation is clearer. 
For example, we conjecture that results on delay games with LTL winning conditions~\cite{KleinZimmermannLTL} and uniformization of LTL definable relations~\cite{FWjournal} can be adapted to show that deciding the existence of computable Skolem functions for $\forall^*\exists^*$-sentences is \threeexp-complete.

Finally, in future work we compare our work to the \hyltl model-checking algorithm for $\forall^*\exists^*$-sentences of Beutner and Finkbeiner~\cite{BF}, which relies on a delay-free game extended with prophecies that require the player in charge of the universal variables to make commitments about future moves (w.r.t.\ to membership in a finite list of $\omega$-regular properties). In particular, to the best of our knowledge, this approach has not been extended beyond the $\forall^*\exists^*$-fragment.

\subparagraph{Epilogue} After her lunch break, Tracy walks by the printer room and finds a second copy of her document in the printer tray. It turns out the print system satisfies $\phi_{\mathrm{id}}$, but the explanation shows that there is a large delay in the network.

\bibliographystyle{plain}
\bibliography{bib}

\end{document}